\spnewtheorem*{theorem*}{Theorem}{\normalshape\bfseries}{\itshape}
\newcommand{\repeattheorem}[1]{%
  \begingroup
  \renewcommand{\thetheorem}{\ref{#1}}%
  \expandafter\expandafter\expandafter\theorem
  \csname reptheorem@#1\endcsname
  \endtheorem
  \endgroup
}
\xdef\csname reptheorem@#1\endcsname{%
    \unexpanded\expandafter{\BODY}%
  }%
\unskip\label{#1}\endtheorem
\newcommand{\B}{\mathcal{B}}
\newcommand{\U}{\mathcal{U}}
\newcommand{\N}{\mathcal{N}}
\newcommand{\M}{\mathcal{M}}
\newcommand{\ptas}{\textsc{PTAS}}
\newcommand{\A}{\mathcal{A}}
\renewcommand{\S}{\mathcal{S}}
\newcommand{\F}{\mathcal{F}}
\newcommand{\C}{\mathcal{C}}
\newcommand{\tta}{\theta}
\newcommand{\OPT}{\texttt{OPT}}
\newcommand{\ceil}[1]{\lceil {#1} \rceil}
\newcommand{\floor}[1]{\lfloor {#1} \rfloor}
\DeclareMathOperator*{\argmax}{argmax}
\DeclareMathOperator*{\argmin}{argmin}
\newcommand{\s}{\mathbf{s}}
\newcommand{\spc}{\{0,1\}-speed}
\newcommand{\p}{\mathbf{p}}
\newcommand{\OPTC}{\overline{\OPT}_C}
\newcommand{\mpred}{\hat{m}}
\newcommand{\mtrue}{m_0}
\newcommand{\mupp}{m}
\newcommand{\Mmax}{\M_{\max}}
\newcommand{\Mmin}{\M_{\min}}
\newcommand{\bmin}{b_{\min}}
\newcommand{\bmax}{b_{\max}}
\newcommand{\Bmin}{B_{\min}}
\newcommand{\Bmax}{B_{\max}}
\newcommand{\hatsj}{\hat{s}_{j}}
\newcommand{\preMmax}{\M^{(i)}_{\max}}
\newcommand{\postMmax}{\M'^{(i)}_{\max}}
\newcommand{\spred}{\hat{\mathbf{s}}}
\newcommand{\ipb}{\textsc{IPR}}
\newcommand{\ipbshort}{\textsc{IPR}}
\newcommand{\onecons}{\textsc{1-consistent}}
\newcommand{\lptpar}{\textsc{LPT-Partition}}
\newcommand{\lptrb}{\textsc{LPT-Rebalance}}
\newcommand{\ssp}{SSP}
\newcommand{\epsilonp}{\epsilon'}
\newcommand{\ipr}{\textsc{IPR}}
\newcommand{\bagsipr}{\B_{\ipb}}
\newcommand{\INP}{\item[{\bf Input:}]}
\begin{document}
\title{
Scheduling with Speed Predictions
}
%
%
\author{Eric Balkanski\inst{1}
\and
Tingting Ou\inst{1}
\and
Clifford Stein\inst{1}
\and
Hao-Ting Wei\inst{1}
}

\authorrunning{E.Balkanski et al.}
%
\institute{Department of Industrial Engineering and Operations Research, Columbia University in the City of New York, New York, United States. \\
\email{\tt eb3224@columbia.edu}\\
\email{\tt to2372@columbia.edu}\\
\email{\tt cliff@ieor.columbia.edu}\\
\email{\tt hw2738@columbia.edu}}
\maketitle              

\begin{abstract}\vspace{-.2cm}
Algorithms with predictions is a recent framework that has been used to overcome pessimistic worst-case bounds in incomplete information settings. In the context of scheduling, very recent work has leveraged machine-learned predictions to design algorithms that achieve improved approximation ratios in settings where the processing times of the jobs are initially unknown. In this paper, we study the speed-robust scheduling problem where the speeds of the machines, instead of the processing times of the jobs, are unknown and augment this problem with predictions.

Our main result is an algorithm that achieves a $\min\{\eta^2(1+\alpha), (2 + 2/\alpha)\}$ approximation, for any $\alpha \in (0,1)$, where $\eta \geq 1$ is the prediction error. When the predictions are accurate, this approximation outperforms the best known approximation for speed-robust scheduling without predictions of $2-1/m$, where $m$ is the number of machines, while simultaneously maintaining a worst-case approximation of $2 + 2/\alpha$ even when the predictions are arbitrarily wrong. In addition, we obtain improved approximations for three special cases: equal job sizes, infinitesimal job sizes, and binary machine speeds. We also complement our algorithmic results with lower bounds. Finally, we empirically evaluate our algorithm against existing algorithms for speed-robust scheduling.
\keywords{scheduling \and algorithms with predictions \and approximation algorithms}
\end{abstract}
%
%
%

\vspace{-.48cm}
\section{Introduction}
\vspace{-.05cm}

In many optimization problems, the decision maker faces crucial information limitations due to the input not being completely known in advance. A natural goal in such incomplete information settings is to find solutions that have good worst-case performance over all potential input instances. However, even though worst-case analysis provides a useful measure for the robustness of an algorithm, it is also known to be a measure that often leads to needlessly pessimistic results. 

A recent, yet extensive, line of work on \emph{algorithms with predictions}  models the partial information that is often available to the decision maker and  overcomes worst-case bounds by leveraging machine-learned predictions about the inputs (see \cite{mitzenmacher2020algorithms} for a survey of the early work in this area). In this line of work,  the algorithm is given some type of prediction about the input, but the predictions are not necessarily accurate. The goal is to design  algorithms that achieve stronger bounds when the provided predictions are accurate, which are called \emph{consistency} bounds, but also maintain worst-case \emph{robustness} bounds that hold even when the predictions are inaccurate. Optimization problems that have been studied under this framework include online paging \cite{lykouris2018competitive},  scheduling \cite{KPZ18}, secretary \cite{dutting2021secretaries}, covering \cite{BMS20}, knapsack \cite{im2021online}, facility location \cite{FGGP21},  Nash social welfare \cite{banerjee2020online}, and graph \cite{azar2022online} problems. Most of the work on scheduling in this model has considered predictions about the processing times of the jobs \cite{KPZ18,MM20,LLMV20,bamas2020,im2021non}.

There is a large body of work considering uncertainty in the input to scheduling problems, including whole fields like stochastic scheduling.  Most of it studies uncertainty in the jobs. A recent line of work considers scheduling problems where 
there is uncertainty surrounding the available machines (e.g. \cite{albers2001scheduling,epstein2012universal,SZ18,EHMNSS20}). 
We emphasize the \emph{scheduling with an unknown number of parallel machines} problem, introduced by \cite{SZ18}. In this work, given a set of jobs,  there is a first {\em partitioning} stage where they must be partitioned into \emph{bags} without knowing the number of machines available and then, in a second {\em scheduling} stage, the algorithm learns the number of machines and the bags must be scheduled on the machines without being split up.

This model captures applications where partial packing decisions have to be made with only partial information about the machines. As discussed in \cite{SZ18}, such applications include MapReduce computations in shared data centers where data is partitioned into groups by a mapping function that is designed without full information about the machines that will be available in the data center, or in a warehouse where items are grouped into boxes without full information about the trucks that will be available to ship the items.  \cite{EHMNSS20} studies an extension of this model called \emph{speed-robust scheduling} where the speeds of the machines are unknown in the partitioning stage and are revealed in the scheduling stage.

In this paper, we introduce and study the problem of scheduling with machine-learned predictions about the speeds of the machines. In the two applications mentioned above, MapReduce computations and package shipping, it is natural to have some relevant historical data about the computing resources or the trucks that will be available, which can be used to obtain machine-learned predictions about these quantities. 
In the \emph{scheduling with speed predictions} problem, we are given jobs and predictions about the speeds of the $m$ machines.  In the first, \emph{partitioning stage}, jobs are partitioned into $m$ bags, using only the predictions about the speeds of the machines.  Then,  in the second, \emph{scheduling stage}, the true speeds of the machines are revealed, and the bags must be scheduled on the machines without being split up.  The goal is to use the predictions to design algorithms that achieve improved guarantees for speed-robust scheduling. The fundamental question we ask is:

\vspace{-.1cm}
\begin{center}
    \emph{Can speed predictions be used to simultaneously obtain improved guarantees for scheduling when the predictions are accurate and bounded guarantees even if the prediction errors are arbitrarily large?}
\end{center}
\vspace{-.1cm}

We focus on the classical makespan (completion time of the last completed job) minimization objective.
Two main evaluation metrics for our problem, or for any algorithms with predictions problem, are robustness and consistency. 
The consistency of an algorithm is the approximation ratio it achieves when the speed predictions are equal to the true speeds of the machines, and its robustness is its worst-case approximation ratio over all possible machine speeds, i.e., when the predictions are arbitrarily wrong. The main focus of this paper is on general job processing times and machine speeds, but we also consider multiple special cases.

Without predictions, \cite{EHMNSS20} achieves a $(2-1/m)$-approximation. Thus, if we do not trust the predictions, we can ignore them and use this algorithm to achieve a $(2-1/m)$-consistent and $(2-1/m)$-robust algorithm. On the other hand, if we fully trust the predictions, we can pretend that the predictions are correct and use a polynomial time approximation scheme (PTAS) for makespan minimization on related machines to obtain a $(1+\epsilon)$-consistent algorithm,  for any constant $\epsilon > 0$. However, as we show in Section~\ref{sec:lower-bound}, this approach would have unbounded robustness. Thus, the main challenge is to develop an algorithm that leverages predictions to improve over the best known $(2-1/m)$ approximation when the predictions are accurate, while maintaining bounded robustness guarantees even when the predictions are arbitrarily wrong.

\vspace{-.1cm}
\paragraph{Our results.} Our main result is an algorithm for minimizing makespan in the scheduling with speed predictions (SSP) model that achieves the following result, where  $\eta = \max_{i \in [m]} \frac{\max\{\hat{s}_i, s_i\}}{\min\{\hat{s}_i, s_i\}}$ is the maximum prediction error between the predicted speed $\hat{s}_i$ and the true speed $s_i$ of the $m$ machines (see Section~\ref{sec:prelim} and Appendix~\ref{sec:apperror} for addi onal discussion about the prediction error $\eta$). 

\vspace{-.1cm}
\begin{theorem*}[Theorem~\ref{thm-general} restated]
For any $\alpha \in (0,1)$, there is a deterministic $(1+\alpha)$-consistent and $(2 + 2/\alpha)$-robust algorithm for SSP with general speeds and job processing times. More generally, this algorithm achieves an approximation of $\min\{\eta^2(1+\alpha), (2 + 2/\alpha)\}$ for SSP.
\end{theorem*}
\vspace{-.1cm}


When the predictions are accurate, the $(1+\alpha)$-consistency outperforms the best-known approximation for speed-robust scheduling without predictions of $2-1/m$ \cite{EHMNSS20}, which is achieved while  maintaining a $2+ 2/\alpha$ robustness guarantee that holds  even when the predictions are arbitrarily wrong. To obtain a polynomial time algorithm,  the consistency and robustness both increase by a $1+\epsilon$ factor, for any constant $\epsilon \in (0,1)$. This $1+\epsilon$ factor is due to the PTAS algorithm for makespan minimization on related machines that we use as a subroutine.  In addition, we  obtain the following results, which are summarized in Table~\ref{tab:results}.
\vspace{-.07cm}

\begin{itemize}[leftmargin=*]
    \item For any $\alpha \in (0,1)$,  any deterministic $(1+\alpha)$-consistent algorithm has robustness  at least $1 + \frac{1-\alpha}{2\alpha} - O(\frac{1}{m})$~(\cref{LB-2-speeds}).
    If we ignore the constant factors in our result,  our algorithm matches this optimal $1/\alpha$  increase rate of the robustness.
    
    
    \item  When the job processing times are equal or infinitesimal, the best-known approximations without predictions are 1.8 and $e/(e-1)\approx 1.58$  \cite{EHMNSS20}, respectively. For these cases, our  $(1+\alpha)$-consistent algorithm achieves a robustness of $2 + 1/\alpha$~(\cref{equal-sized}) and $1 + 1/\alpha$~(\cref{Infinitesimal}), respectively.
    \item When the machine speeds are either $0$ or $1$, which corresponds to the scenario where the number of machines is unknown, the best-known approximation without predictions is $5/3$ \cite{SZ18}. We develop an algorithm that is $1$-consistent and $2$-robust (\cref{ub:1-consist}).
    We also show that, for any $\alpha \in [0,1/2)$, any deterministic $(1+\alpha)$-consistent algorithm has robustness at least $(4-2\alpha)/3$~(\cref{lb:0-1_speeds}).
     \item 
    Even when the prediction error is relatively large, our algorithm often empirically outperforms existing speed-robust algorithms that do not use predictions.  
\end{itemize}

\begin{table}[t]
\renewcommand{\arraystretch}{1.5}
\centering
\begin{tabular}{|c|c|c|c|}
\hline
Job sizes & Speeds   & Lower bound & Upper bound\\ \hline \hline
General& General   & $1 + (1-\alpha)/2\alpha- O(1/m)$ (\cref{LB-2-speeds}) 
& $2+2/\alpha$~(\cref{thm-general}) \\ \hline
Equal-size & General   & $1 + (1-\alpha)/2\alpha-O(1/m)~$(\cref{LB-2-speeds})    & $2+1/\alpha$~(\cref{equal-sized}) \\ \hline
Infinitesimal & General & $1 + (1-\alpha)^2/4\alpha-O(1/m)$ (\cref{LB-2-speeds}) 
& $1+1/\alpha$~(\cref{Infinitesimal}) \\ \hline 
General & \{0,1\} & $(4-2\alpha)/3$~(\cref{lb:0-1_speeds}) & \makecell{2  (\cref{ub:1-consist}) }\\\hline
\end{tabular}
\vspace{.25cm}
\caption{Robustness  of deterministic $(1+\alpha)$-consistent algorithms,  $\alpha \in (0, 1/2)$.}
\vspace{-.6cm}
\label{tab:results}
\end{table}

\vspace{-.3cm}

\paragraph{Technical overview.} We give an overview of the main technical ideas used to obtain our main result (\cref{thm-general}). The second stage of the SSP problem corresponds to a standard makespan minimization problem in the full information setting, so the main problem is the first stage where jobs must be partitioned into bags given predictions about the speeds of the machines. At a high level, our partitioning algorithm initially creates a partition of the jobs in bags, and a tentative assignment of the bags to machines, assuming that the predictions are the true speeds of the machines. This tentative solution is optimal if the predictions are perfect, but as we discuss in Section~\ref{sec:lower-bound}, if the predictions are wrong, its makespan may be far from optimal. To address this concern, the algorithm iteratively moves away from the initial partition in order to obtain a more robust partitioning, while also maintaining that the bags can be scheduled to give a $(1+\alpha)$-approximation of the makespan if the predictions are correct. The parameter $\alpha \in (0,1)$ is an input to the algorithm that controls the consistency-robustness trade-off, i.e., it controls how much the predictions should be trusted. We note that starting from a consistent solution and then robustifying it is a standard approach in algorithms with predictions. Our main technical contribution is in designing such a robustification algorithm for the SSP problem.


More concretely,  
let the total processing time of a bag  be the sum of the processing time of the jobs in that bag. The partitioning algorithm always maintains a tentative assignment of bags to the machines.  To robustify this assignment,   the algorithm iteratively reassigns  the bag with minimum total processing time to the machine that is assigned the bag  with maximum total processing time. If there are now $\ell$ bags assigned to this machine, we break open these $\ell$ bags, and reassign the jobs to $\ell$ new bags using the Longest Processing Time first algorithm, which will roughly  balance  the size of the $\ell$  bags assigned to this machine. Thus, at every iteration, the bags that had the maximum and minimum total processing times at the beginning of that iteration end up with approximately equal total processing times, which improves the robustness of the partition.  The algorithm terminates when the updated partition would not achieve a  $(1+\alpha)$-consistency anymore. 

The analysis of the $(2 + 2/ \alpha)$-robustness 
consists of three main lemmas.   The algorithm and analysis use a parameter $\beta$, which is the ratio of the maximum total processing time of a bag that contains at least two jobs to the minimum total processing time of a bag.  We use this particular parameter partly to handle the case of very large jobs.  Informally, both the algorithm and the adversary will need to put that one job in its own bag and on its own machine, so we can just ``ignore" such jobs.  
We first show that if we can solve the second-stage scheduling problem optimally, then the robustness achieved by any partition is at most $\max\{2, \beta\}$.  
Then, we show that at each iteration, the minimum total processing time of a bag is non-decreasing. Finally, we use this monotonicity property to show that, for the partition returned by the algorithm,  $\beta \leq 2 + 2/ \alpha$. Together with the first lemma, this implies that the algorithm achieves a $(2 + 2/ \alpha)$-robustness.  The last lemma requires a careful argument to show that, if $\beta > 2 + 2/ \alpha$, then an additional iteration of the algorithm does not break the $1+\alpha$ consistency achieved by the current partition.  To obtain a polynomial-time algorithm, we pay an extra factor of $1+\epsilon$ in the scheduling stage by using the PTAS of~\cite{HS98}.



\section{Preliminaries}
\label{sec:prelim}

We first describe the speed-robust scheduling problem, which was introduced by~\cite{EHMNSS20} and builds on the scheduling with an unknown number of machines problem from~\cite{SZ18}. There are $n$ jobs with processing times $\p = (p_1, \ldots, p_n) \geq \bm{0}$ and $m$ machines with speeds $\s = (s_1, \ldots, s_m) > \bm{0}$ such that the time needed to process job $j$ on machine $i$ is $p_j / s_i$.\footnote{The non-zero speed assumption is for ease of notation. Having a machine with speed $s_i = 0$ is equivalent to $s_i = \epsilon$ for $\epsilon$ arbitrarily small since in both cases no schedule can assign a job to  $i$ without the completion time of this job being arbitrarily large.} The  problem consists of the following two stages. In the first stage, called the partitioning stage, the speeds of the machines  are unknown and the jobs must be partitioned into $m$ (possibly empty) bags $B_1, \ldots, B_m$ such that $\cup_{i \in [m]} B_i = [n]$ (where $[n] = \{1, \ldots, n\}$) and $B_{i_1} \cap B_{i_2} = \emptyset$ for all $i_1, i_2 \in [m]$, $i_1 \neq i_2$. In the second stage, called the scheduling stage,  the speeds $\s$ are revealed to the algorithm and each bag $B_i$ created in the partitioning stage must be assigned, i.e., scheduled, on a machine without being split up.

The paper on speed-robust scheduling, \cite{EHMNSS20}, considers the classical makespan minimization objective.  Let  $\M_i$ be the collection of bags assigned to machine $i$, the goal is to minimize $\max_{i \in [m]} (\sum_{B \in \M_i} \sum_{j \in B} p_j)/ s_i$. An algorithm for speed-robust scheduling is $\beta$-robust if it achieves an approximation ratio of $\beta$ compared to the optimal schedule that knows the speeds in advance, i.e., $\max_{\p, \s} alg(\p, \s)/opt(\p, \s) \leq \beta$ where $alg(\p, \s)$ and $opt(\p, \s)$ are the makespans of the schedule returned by the algorithm (that learns $\s$ in the second stage)  and the optimal schedule (that knows $\s$ in the first stage).

We augment the speed-robust scheduling problem with predictions about the speeds of the machines and call this problem Scheduling with Speed Predictions  (\ssp). The difference between SSP and speed-robust scheduling is that, during the partitioning stage, the algorithm is now given access to, potentially incorrect, predictions $\spred = (\hat{s}_1, \ldots, \hat{s}_m) \geq 0$ about the speeds of the machines (see Appendix~\ref{sec:applearning} for additional discussion about how we learn the machine speeds and obtain $\hat{s}$). The true speeds of the machines $\s$ are revealed during the scheduling stage, as in the speed-robust scheduling problem.  We also want to minimize the   makespan.

Consistency and robustness are two standard measures in algorithms with predictions \cite{lykouris2018competitive}. An algorithm is $c$-consistent if it achieves a $c$ approximation ratio when the predictions are correct, i.e., if $\max_{\p, \s} alg(\p, \s, \s)/opt(\p, \s) \leq c$ where $alg(\p, \spred, \s)$ is the makespan of the schedule returned by the algorithm when it is given predictions $\spred$ in the first stage and speeds $\s$ in the second stage. An algorithm is $\beta$-robust if it achieves a $\beta$ approximation ratio when the predictions can be arbitrarily wrong, i.e., if $\max_{\p, \spred, \s} alg(\p, \spred, \s)/opt(\p, \s) \leq \beta$. We note that a $\beta$-robust algorithm for speed-robust scheduling  is also a  $\beta$-robust (and $\beta$-consistent) algorithm for \ssp  \ which ignores the speed predictions.

The main challenge in algorithms with predictions problems is to simultaneously achieve ``good” consistency and robustness, which requires partially trusting the predictions (for consistency), but not trusting them too much (for robustness). In particular, the goal is to obtain an algorithm that achieves a consistency that improves over the best known approximation without predictions ($2 - 1/m$ for speed-robust scheduling), ideally close to the best known approximation in the full information setting ($1+\epsilon$, for any constant $\epsilon > 0$, for makespan minimization on related machines), while also achieving bounded robustness.  

Even though consistency and robustness capture the main trade-off in SSP, we are also interested in giving approximation ratios as a function of the prediction error.  It is important, in any algorithms with predictions problem, to define the prediction error appropriately, so that it actually captures the proper notion of error in the objective.  It might seem that, for example, $L_1$ distance between the predictions and data is natural, but for many problems, including this one, such a definition would mainly give vacuous results. 
We define the prediction error $\eta \geq 1$ to be the maximum ratio\footnote{We scale $\s, \spred$ such that $\max_{i}s_i = \max_i \hat{s}_i$ before computing $\eta$, to make sure the speeds are on the same scale.} between the true speeds $\s$ and the predicted speeds $\spred$, or vice versa, i.e., $\eta(\spred, \s) = \max_{i \in [m]}\frac{\max\{\hat{s}_i, s_i\}}{\min\{\hat{s}_i, s_i\}}$. Given a bound $\eta$ on the prediction error, an algorithm achieves a $\gamma(\eta)$ approximation if  $\max_{\p, \spred, \s: \eta(\spred, \s) \leq \eta} alg(\p, \spred, \s)/opt(\p, \s) \leq \gamma(\eta)$.

 Given arbitrary bags $B_1, \ldots, B_m$, the scheduling stage corresponds to a standard makespan minimization problem in the full information setting, for which  polynomial-time approximation schemes (PTAS) are known \cite{HS98}. Thus, the main challenge is the partitioning stage. We define the  consistency and robustness of a partitioning algorithm $\A_P$ to be the consistency and robustness achieved by the two-stage algorithm that first runs $\A_P$  and then solves the scheduling stage optimally.  If we want to require that algorithms be polynomial time, we may simply run the PTAS for makespan minimization in the scheduling stage, and the bounds increase by a $(1+ \epsilon)$ factor. We will not explicitly mention this in the remainder of the paper. 

\section{Consistent Algorithms are not Robust}
\label{sec:lower-bound}

A natural first question is whether there is an algorithm with optimal consistency that also achieves a good robustness. We answer this question negatively and show that there exists an instance for which any $1$-consistent algorithm cannot be $o(n)$-robust. This impossibility result is information-theoretic and is not due to computational constraints.
The proofs in this section are deferred to Appendix~\ref{applowerbound}.

\begin{proposition}
\label{1-consist}
For any $n > m$, there is no algorithm that is $1$-consistent and $\frac{n-m+1}{\lceil n/m\rceil}$-robust, even in the case of  equal-size jobs. In particular, for $m = n / 2$, there is no algorithm that is $1$-consistent and $o(n)$-robust.
\end{proposition}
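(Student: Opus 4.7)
The plan is to exhibit a single hard instance of equal-size jobs together with a pair of speed vectors $(\spred, \s)$ for which $1$-consistency uniquely forces a bad partition. I would take all $n$ jobs to have unit processing time, predicted speeds $\spred = (n-m+1, 1, 1, \ldots, 1)$ (one fast machine of speed $n-m+1$ and $m-1$ unit-speed machines), and true speeds $\s = (1, 1, \ldots, 1)$.

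First I would verify that $opt(\p, \spred) = 1$: the total work equals the total speed (both are $n$), and the balanced schedule that puts $n-m+1$ jobs on the fast machine and one job on each slow machine achieves makespan exactly $1$. Hence any $1$-consistent algorithm must produce a partition whose bags can be scheduled on $\spred$ with makespan at most $1$.

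The main step, which I expect to be the crux of the proof, is to show that this consistency constraint pins the partition down uniquely: it must consist of one bag of size $n-m+1$ and $m-1$ singletons. Let $F$ and $G_i$ denote the number of jobs the $1$-consistent algorithm's scheduling stage places on the fast machine and on slow machine $i$, respectively. Makespan $\leq 1$ forces $F \leq n-m+1$ and each $G_i \leq 1$, while $F + \sum_i G_i = n$. Since the two families of upper bounds sum exactly to $n$, all of the inequalities must be tight: the fast machine holds $n-m+1$ jobs, and each slow machine holds a single unit job. The delicate point is to conclude that the fast machine's entire share lives in a single bag: because only $m$ bags are available and each of the $m-1$ slow machines requires at least one non-empty bag to hold its unit job, at most one non-empty bag remains for the fast machine, so that bag must contain all $n-m+1$ jobs.

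Having fixed the partition, the conclusion is immediate. Under the true speeds $\s = (1, \ldots, 1)$ the optimum is $opt(\p, \s) = \lceil n/m \rceil$, whereas any assignment of the single big bag to a unit-speed machine yields load at least $n-m+1$, so $alg(\p, \spred, \s) \geq n-m+1$. Dividing gives robustness at least $(n-m+1)/\lceil n/m\rceil$, and specializing to $m = n/2$ (so $\lceil n/m\rceil = 2$ and the ratio becomes $(n/2+1)/2$) yields the claimed $\Omega(n)$ lower bound in the second sentence of the proposition.
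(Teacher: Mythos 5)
Your proposal is correct and follows essentially the same route as the paper: the identical hard instance (unit jobs, predicted speeds $\hat{s}_1 = n-m+1$, $\hat{s}_i = 1$ otherwise, true speeds all $1$), the same key step of showing $1$-consistency forces one bag of $n-m+1$ jobs plus $m-1$ singletons, and the same final ratio $(n-m+1)/\lceil n/m\rceil$. The only (minor) difference is how the forcing step is argued — you use a load-counting/tightness argument on the schedule followed by a pigeonhole on the $m$ bags, whereas the paper does a case analysis on bags of size at least two — but both yield the same conclusion and are equally valid.
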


The bad instance used to achieve this result has $n$ unit-sized jobs with processing time $p_j = 1$ for $j \in [n]$ and $m < n$ machines where one machine is predicted to be much faster than the other machines, which are also predicted to have equal speed:  $\hat{s}_1  =  n - m +1$ and $\hat{s}_i =  1$ for $i \in \{2, \ldots, m\}$. The proof of Proposition~\ref{1-consist}  shows that a $1$-consistent algorithm must partitions the jobs into $m$ bags such that $m-1$ bags contain a single job and one bag contains the remaining $n - m + 1$ jobs. However, if the true machine speeds are $s_i = 1$ for all $i \in [m]$, then this partition achieves a poor robustness due to the large bag.

This result can be extended using a similar construction to show that there is a necessary non-trivial trade-off between consistency and robustness for the SSP problem. In particular, the robustness of any deterministic algorithm for SSP must grow inversely proportional as a function of the consistency.

\begin{theorem}
\label{LB-2-speeds}
    For any $\alpha \in (0, 1)$, if a deterministic algorithm for \ssp \ is $(1+\alpha)$-consistent,  then its robustness is at least $1 + \frac{1-\alpha}{2\alpha} - O(\frac{1}{m})$
    , even in the case where the jobs have equal processing times. In the special case where the processing times are infinitesimal, the robustness of a deterministic $(1+\alpha)$-consistent algorithm is at least $1 + \frac{(1-\alpha)^2}{4\alpha}-  O(\frac{1}{m}).$ 
\end{theorem}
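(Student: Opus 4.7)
My plan is to generalize the construction from Proposition~\ref{1-consist} by softening the ``one ultra-fast machine'' instance so that it quantifies a non-trivial consistency-robustness trade-off. For the equal-size bound, I take $n = 2(m-1)$ unit jobs with predictions $\hat{s}_1 = m-1$ and $\hat{s}_i = 1$ for $i \geq 2$, together with adversarial true speeds $s_i = 1$ for all $i$. The predicted optimum is $\OPT(\spred) = 1$, so $(1+\alpha)$-consistency forces the predicted-schedule makespan to be at most $1+\alpha$; because $\alpha < 1$ and loads on the unit-speed machines are integers, each slow machine carries load in $\{0,1\}$. Let $a$ be the number of slow machines with load $1$; then the fast machine carries $2(m-1)-a$ jobs, and the fast-load bound $2(m-1)-a \leq (1+\alpha)(m-1)$ forces $a \geq (m-1)(1-\alpha)$.

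Since only $m$ bags are available and $a$ of them are pinned to non-empty slow machines, the fast machine receives at most $m-a \leq 1+(m-1)\alpha$ non-empty bags whose sizes sum to $(m-1)(1+\alpha)$, so averaging yields some bag of size at least $(m-1)(1+\alpha)/(1+(m-1)\alpha)$. Under the adversarial speeds, $\OPT(\s) = \ceil{2(m-1)/m} = 2$ for $m \geq 3$, while the algorithm's makespan is at least this largest bag. The ratio simplifies as
\[
\frac{(m-1)(1+\alpha)}{2(1+(m-1)\alpha)} \;=\; \frac{1+\alpha}{2\alpha}\left(1-\frac{1}{1+(m-1)\alpha}\right) \;=\; 1+\frac{1-\alpha}{2\alpha}-O(1/m),
\]
giving the first bound.

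For the infinitesimal case, I replace the unit jobs with infinitesimal jobs of total mass $P$, keep $\hat{s}_i = 1$ for $i \geq 2$, but now let $\hat{s}_1 = K$ for a parameter $K$ to be optimized. With slow capacity $T_i = (1+\alpha)P/(K+m-1)$ and fast capacity $T_1 = (1+\alpha)KP/(K+m-1)$, every slow bag has mass at most $T_i$, so placing $b$ bags on the fast machine gives a slow sum of at most $(m-b)T_i$ and a fast sum of at least $P-(m-b)T_i$. Combining with the fast capacity forces $b \leq (1+\alpha(m+K))/(1+\alpha)$, and hence the largest fast bag has mass at least $T_1/b$. Against $s_i = 1$ with $\OPT(\s) = P/m$, the robustness lower bound becomes $(1+\alpha)^2 K m/[(K+m-1)(1+\alpha(m+K))]$. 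The interior maximizer is $K^{\star} = \sqrt{(m-1)(1+m\alpha)/\alpha} \approx m$, and substituting $K=m$ yields $(1+\alpha)^2/(4\alpha) - O(1/m)$, which equals $1+(1-\alpha)^2/(4\alpha) - O(1/m)$ by the identity $(1+\alpha)^2-4\alpha=(1-\alpha)^2$.

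The main obstacle will be the discrete accounting in the equal-size case: since $\lfloor 1+\alpha \rfloor = 1$ and $a$ must be an integer, the bound $a \geq (m-1)(1-\alpha)$ must be rounded up, and the resulting rounding slack has to be absorbed into the $O(1/m)$ error together with the similar slack from $\ceil{2(m-1)/m}$. The infinitesimal case is cleaner but needs the optimization over $K$ to be pushed through to extract the correct constant factor in front of $1/\alpha$.
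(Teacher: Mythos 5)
Your proposal is correct and follows essentially the same route as the paper: the same hard-instance family (one machine predicted much faster than the $m-1$ others, unit or infinitesimal jobs, adversarial all-equal true speeds), the same use of $(1+\alpha)$-consistency to force most of the load onto the predicted-fast machine in few bags, and the same averaging argument, with your parameters $\hat{s}_1=m-1$ (equal-size) and the optimized $K\approx m$ (infinitesimal) matching the paper's choice $\hat{s}_1=m$ up to lower-order terms. One small wording fix: the steps ``sizes sum to $(m-1)(1+\alpha)$'' and ``the largest fast bag has mass at least $T_1/b$'' are valid only after plugging in the extremal value of $a$ (resp.\ $b$), justified because the averaging ratio is monotone in that variable---the same substitution the paper itself performs.
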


Recall that in the setting without predictions, the best known algorithm is $(2 - 1/m)$-robust (and thus also $(2 - 1/m)$-consistent) \cite{EHMNSS20}. Since we have shown that algorithms with near-optimal consistency must have unbounded robustness, a main question is thus whether it is even possible to achieve a consistency that improves over $(2- 1/m)$ while also obtaining bounded robustness. We note that the natural idea of randomly choosing to run the $(2-1/m)$-robust algorithm or an algorithm with near-optimal consistency (with unbounded robustness), aiming to hedge between robustness and consistency, does not work since the resulting algorithm would still have unbounded robustness due to SSP being a minimization problem.


\section{The Algorithm}

In this section, we give an algorithm for scheduling with speed predictions with arbitrary-sized jobs  that achieves a $\min\{\eta^2(1+\epsilon)(1+\alpha), (1+\epsilon)(2 + 2/\alpha)\}$ approximation   for any constant $\epsilon \in (0,1)$ and any $\alpha \in (0,1)$.  


Our algorithm, called \ipb \ and formally described in \cref{alg-general}, takes as input the processing times of the jobs $\p$, the predicted speeds of the machines $\spred$, an accuracy parameter $\epsilon$, a consistency goal $1 + \alpha$, and a parameter $\rho$ that influences the ratio between the size of the smallest and largest bags.  
For general job processing times and machine speeds, we use $\rho = 4$. For some special cases in \cref{sec:specialcasesjobsizes}, we use $\rho = 2$.
\ipb \ first uses the PTAS for makespan minimization \cite{HS98} to construct a partition of the jobs into bags $B_1, \ldots, B_m$ such that scheduling the jobs in $B_i$ on machine $i$ achieves a $1+\epsilon$ approximation when the predictions are correct. In other words, it initially assumes that the predictions are correct and creates a $(1+\epsilon)$-consistent partition of the jobs into bags. In addition, it also creates a tentative assignment $\M_1 = \{B_1\}, \ldots, \M_m = \{B_m\}$ of the bags $B_1, \ldots, B_m$ on the machines.

Even though this tentative assignment achieves a good consistency, its robustness is arbitrarily poor. To improve the robustness, the main idea of our algorithm is to iteratively rebalance this partition while maintaining a bound $(1 + \epsilon)(1 + \alpha)$ on its consistency. The algorithm calls the subroutine \textsc{LPT-Rebalance} at each iteration to rebalance the bags and  modify $\M_1, \ldots, \M_m$.

We define the processing time $p(B)$ of a bag $B$ to be the sum  of the processing times of the jobs in that bag, i.e., $p(B) = \sum_{j \in B} p_j$.  The algorithm terminates either when scheduling the bags in each $\M_i$ on machine $i$ violates the desired $(1+\epsilon)(1+\alpha)$ consistency bound or when the ratio of the largest processing time of a bag containing at least two jobs to the smallest processing time of a bag is at most $\rho$. To verify the consistency bound, the algorithm compares the makespan of the new tentative assignment to the makespan $\OPTC$ of the initial assignment, assuming that the speed predictions are correct.

\begin{algorithm}[H]
    \caption{\textsc{Iterative-Partial-Rebalancing} \ (\ipbshort)}
    \label{alg-general}
\begin{algorithmic}[1]
    \INP predicted machine speeds $\hat{s}_1 \geq \cdots \geq \hat{s}_m$,  job processing times $p_1, \ldots, p_n$, consistency $1 + \alpha$, accuracy $\epsilon \in (0,1)$, maximum bag size ratio $\rho \geq 1$
    \State $\{B_1, \ldots, B_m\} \leftarrow$ a $(1+ \epsilon)$-consistent partition such that $p(B_1) \geq \cdots \geq p(B_m)$
    \State $\OPTC \leftarrow \max_{i \in [m]} p(B_i)/ \hat{s}_i$
    \State $\M_1,  \cdots, \M_{m} \leftarrow \{B_1\}, \cdots, \{B_m\}$
    \State \textbf{while}  $\max _{B \in \cup_{i } \M_i, |B| \ge 2}p(B)   > \rho  \min _{B \in \cup_{i} \M_i} p(B)$  \textbf{do} \label{li:ipr-while}
         \State \quad $\M'_1,  \ldots, \M'_{m} \leftarrow $  \lptrb ($\M_1, \ldots, \M_m$)
        \State \quad  \textbf{if} $\max_{ i \in [m]} \sum_{B \in \M'_i} p(B) / \hat{s}_i   > (1 + \alpha) \OPTC$ \textbf{then} \label{li:ipr-if}
        \State \quad \quad $\{B_1, \ldots, B_m\}  \leftarrow \cup_{i \in [m]} \M_i$
        \State \quad \quad \textbf{return}  $\{B_1, \ldots, B_m\} $
         \State \quad $\M_1,  \cdots, \M_{m} \leftarrow \M'_1,  \cdots, \M'_{m}$
        \State $\{B_1, \ldots, B_m\}  \leftarrow \cup_{i \in [m]} \M_i$
        \State\textbf{return}  $\{B_1, \ldots, B_m\} $
\end{algorithmic}
\end{algorithm}

\paragraph{The LPT-Rebalance subroutine.} This subroutine first moves the bag $B_{\min}$ with the smallest processing time to the collection of bags $\M_{\max}$ that contains the bag with the largest processing time among the bags that contain at least two jobs. Let $\ell$ be the number of bags in $\M_{\max}$, including $B_{\min}$. The subroutine then balances the processing time of the bags in $\M_{\max}$ by running the Longest Processing Time first (LPT) algorithm over all jobs in bags in $\M_{\max}$, i.e. jobs in $\cup_{B \in \M_{\max}} B$, to create $\ell$ new, balanced, bags that are placed in $\M_{\max}$. \lptrb \ finally returns the updated assignment of bags to machines $ \M_1, \ldots, \M_m$. We note that among these $m$ collections of bags, only two, $\M_{\min}$ and $\M_{\max}$, are modified. We illustrate this rebalancing with an example in Figure~\ref{fig:illustration}.

\begin{figure*}[t]
    \centering
    \includegraphics[width=12cm]{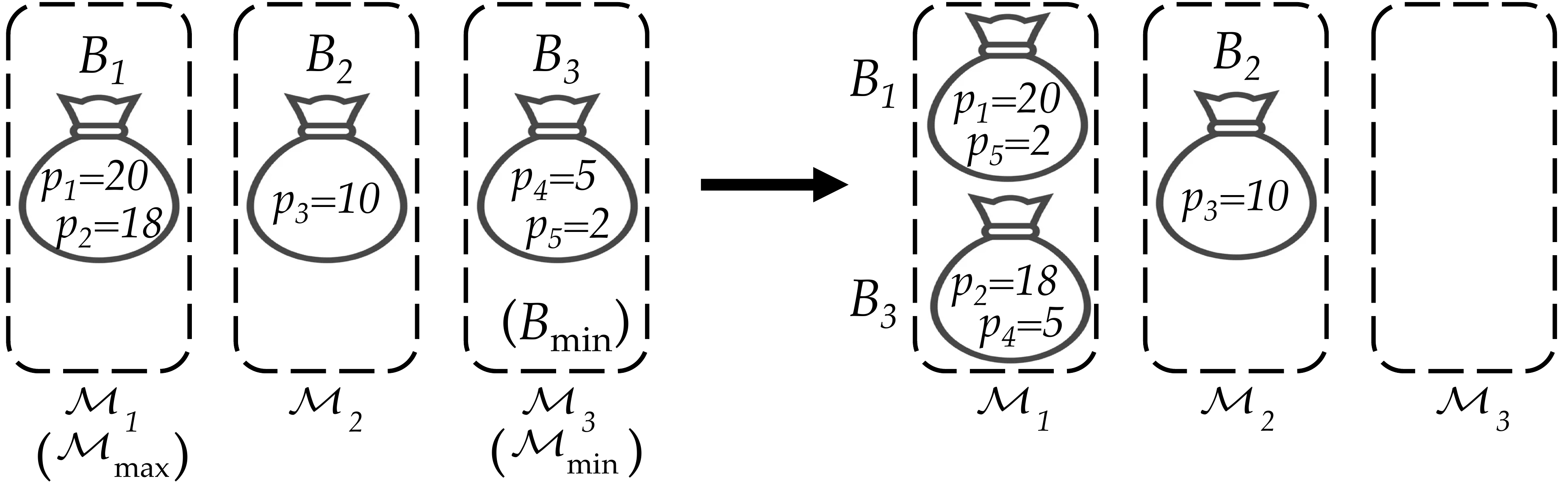}
    \caption{One iteration of the \ipb \ algorithm on an example with $m=3$ bags and $n=5$ jobs.} 
    \label{fig:illustration}
\end{figure*}

\begin{algorithm}[H]
    \caption{\lptrb}
    \label{sub-LPT}
\begin{algorithmic}[1]
    \INP assignment of bags to machines $ \M_1, \ldots, \M_m$
    \State $B_{\min} \leftarrow \argmin _{B \in \cup_i \M_i}p(B)$ 
    \State $\M_{\min} \leftarrow $ the collection of bags $\M$ such that $B_{\min} \in \M$
    \State $\M_{\max} \leftarrow \argmax_{\M_i : i \in [m]} \max_{B \in \M_i : |B| \geq 2} p (B)  $
    \State $\M_{\max} \leftarrow \M_{\max} \cup \{B_{\min}\};  \M_{\min} \leftarrow \M_{\min} \setminus \{B_{\min}\}$
    \State $J_{\max} \leftarrow \cup_{B \in \M_{\max}} B; \ell \leftarrow |\M_{\max}|$
    \State $B'_1, \ldots, B'_{\ell} \leftarrow \{\}, \ldots, \{\}$
    \State \textbf{while} $|J_{\max}| > 0$ \textbf{do}
    \State \quad $j^{\prime} \leftarrow \argmax_{j \in J_{\max}} p_j$
    \State \quad $B^{\prime} \leftarrow \argmin_{B \in \{B'_1, \ldots, B'_{\ell }\}} p(B)$
    \State \quad $B^{\prime} \leftarrow B^{\prime} \cup \{j^{\prime}\}; J_{\max} \leftarrow J_{\max} \setminus \{j^{\prime}\}$
    \State $\M_{\max} \leftarrow \{B'_1, \ldots, B'_{\ell}\}$
    \State \textbf{return} $\M_1, \ldots, \M_m$
\end{algorithmic}
\end{algorithm}

\subsection{Analysis of the algorithm}
\label{sec:alg-performance}

We first show that \ipb \ with parameter $\rho = 4$ in the general case is a $(1+\epsilon)(1+ \alpha)$-consistent and $(2+2/\alpha)$-robust partitioning algorithm (\cref{lem:consistency} and \ref{lem:robustness}). Then, we use these consistency and robustness guarantees to obtain the $\min\{\eta^2(1+\epsilon)(1+\alpha), (1+\epsilon)(2 + 2/\alpha)\}$ approximation as a function of the prediction error $\eta$ (\cref{thm-general}). Finally, we analyze the running time (\cref{runtime}). The main challenge is to analyze \ipb's robustness.

\subsubsection{Analysis of the algorithm's consistency and robustness} 

The consistency of the algorithm is almost immediate.


\begin{lemma}
\label{lem:consistency}
     For any constant $\epsilon \in (0,1)$ and any $\alpha \in (0,1)$, \ipb \ with any $\rho \geq 1$ is a $(1+\epsilon)(1+ \alpha)$-consistent partitioning algorithm.
\end{lemma}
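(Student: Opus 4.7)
The plan is to show that the tentative assignment $\M_1,\ldots,\M_m$ maintained by \ipb{} always gives a makespan against the predicted speeds that is within the target consistency, and that this is what is returned as the partition. The consistency of the partitioning algorithm then follows because in the scheduling stage the bags are scheduled optimally, which can only do at least as well as the tentative assignment.

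First, I would argue about the initial tentative assignment. The partition $\{B_1,\ldots,B_m\}$ in line~1 is obtained from the $(1+\epsilon)$-PTAS of \cite{HS98} applied to the instance with speeds $\spred$, so
\[
\OPTC = \max_{i \in [m]} p(B_i)/\hat{s}_i \;\le\; (1+\epsilon)\,opt(\p,\spred).
\]
Since $\M_i=\{B_i\}$ initially, the tentative assignment therefore has makespan (against $\spred$) equal to $\OPTC$, which is within the if-condition threshold $(1+\alpha)\OPTC$ of line~\ref{li:ipr-if}.

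Next, I would argue by induction on the while-loop iterations that the assignment $\M_1,\ldots,\M_m$ stored at the end of each iteration satisfies $\max_{i}\sum_{B\in\M_i}p(B)/\hat{s}_i \le (1+\alpha)\OPTC$. The inductive step is immediate from the structure of the loop: an updated assignment $\M'_1,\ldots,\M'_m$ is adopted on line~9 \emph{only if} the if-test on line~\ref{li:ipr-if} fails, which is exactly the statement that the new assignment still satisfies this bound; if the if-test succeeds, the algorithm returns the previous assignment, which already satisfies the bound by the inductive hypothesis. The other exit of the loop (the while-condition on line~\ref{li:ipr-while} becoming false) also returns the current $\M_1,\ldots,\M_m$, which satisfies the bound.

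Finally, I would translate this into consistency of the partitioning algorithm. Let $\{B_1,\ldots,B_m\}=\cup_i \M_i$ be the returned partition. Consider the schedule that assigns the bags according to the returned $\M_1,\ldots,\M_m$; by the invariant above its makespan under speeds $\spred$ is at most $(1+\alpha)\OPTC \le (1+\alpha)(1+\epsilon)\,opt(\p,\spred)$. Since the partitioning algorithm's consistency is measured assuming the second-stage scheduling is solved optimally, and the optimal scheduler is at least as good as this feasible assignment, the returned partition achieves makespan at most $(1+\alpha)(1+\epsilon)\,opt(\p,\spred)$ whenever $\s=\spred$. Hence \ipb{} is $(1+\epsilon)(1+\alpha)$-consistent, and the argument is independent of $\rho\ge 1$.

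There is no real obstacle here: the consistency is essentially enforced by construction through the if-test on line~\ref{li:ipr-if}. The only subtlety to double-check is that the returned partition is always one for which the invariant held, which is why I would be explicit about both possible exits from the while loop.
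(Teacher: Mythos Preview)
Your proposal is correct and follows essentially the same approach as the paper: both argue that the if-test on line~\ref{li:ipr-if} guarantees the returned tentative assignment satisfies $\max_i \sum_{B\in\M_i} p(B)/\hat s_i \le (1+\alpha)\OPTC \le (1+\alpha)(1+\epsilon)\,opt(\p,\spred)$, and then use that the optimal second-stage scheduler can only do better. Your version is slightly more explicit about the loop invariant and the two exit points, but the content is identical.
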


\begin{proof}
Consider the final tentative assignment of the bags on the machines $\M_1, \ldots, \M_m$ when \ipb \ terminates. With true speeds $\s$, the makespan of this schedule  is $\max_{ i \in [m]} \sum_{B \in \M_i} p(B) /s_i$. When the speed predictions are correct, i.e., $\s = \spred$, we have
\begin{align*}
\max_{ i \in [m]} \frac{\sum_{B \in \M_i} p(B)}{s_i}  & = \max_{ i \in [m]} \frac{\sum_{B \in \M_i} p(B)}{\hat{s}_i}  \leq (1+\alpha) \OPTC \leq (1+\alpha)(1+\epsilon) opt(\p, \s) .
\end{align*}

Line~\ref{li:ipr-if} of \ipb \ enforces the first inequality. 
For the second inequality, observe that  when $\s = \spred$, $\OPTC$ is the makespan of the initial assignment, which is a $1+\epsilon$ approximation to the optimal makespan $opt(\p, \s)$. Since there exists an assignment of the bags returned by \ipb \ that achieves a $(1+\epsilon)(1+\alpha)$ approximation when $\s = \spred$, \ipb \ is a $(1+\epsilon)(1+\alpha)$-consistent partitioning algorithm.
\end{proof}

In the remainder of this section, we analyze the robustness of the $\ipb$ algorithm. First, we show that the ratio $\beta(\B) = \frac{\max_{B \in \B, |B| \ge 2}p(B)}{  \min_{B \in \B} p(B)}$  of the maximum total processing time of a bag containing at least two jobs to the minimum total processing time of a bag  can be used to bound the robustness of any partition $\B$. 


\begin{lemma}
\label{lem:robustnessmainlemma}
    Let $\B = \{ B_1, \cdots, B_m \}$ be a partition of $n$ jobs with processing times $p_1, \ldots p_n$ into $m$ bags. Then $\B$ is a $\max\{2, \beta(\B) \}$-robust partition, where $\beta(\B) = \frac{\max_{B \in \B, |B| \ge 2}p(B)}{  \min_{B \in \B} p(B)}$.
\end{lemma}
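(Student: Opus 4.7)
The plan is to prove the claim by exhibiting, for every true speed vector $\s$, an explicit assignment of $\B$'s bags to the $m$ machines (possibly placing several bags on the same machine, as the second stage allows) whose makespan is at most $\max\{2,\beta(\B)\}\cdot T$, where $T := \OPT(\p,\s)$. Because the second-stage scheduler is allowed to act optimally, any such explicit construction suffices to bound the robustness of the partition by $\max\{2,\beta(\B)\}$.

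I would begin by deriving two size bounds that relate the partition to $T$. Writing $s_1 \geq \cdots \geq s_m$ for the sorted machine speeds, \emph{Bound (A)} is that every singleton bag $B = \{j\}$ satisfies $p(B) = p_j \leq s_1 T$, since $j$ must sit in some optimal job schedule on a machine whose load is at most $T$. \emph{Bound (B)} is the averaging inequality $p_{\min} \leq \frac{1}{m}\sum_B p(B) = \frac{1}{m}\sum_j p_j \leq \frac{T \sum_i s_i}{m} \leq s_1 T$. Combined with the definition $\beta(\B) = p_{\max}^{(\geq 2)}/p_{\min}$, these give the uniform bound $p(B) \leq \max\{1,\beta(\B)\}\cdot s_1 T$ for every bag, and $p(B) \leq \beta(\B)\, p_{\min}$ for every multi-job bag.

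Using these bounds, I would build the bag-to-machine assignment by starting from an optimal job schedule $\sigma^\ast$ of makespan $T$: for each bag $B$, route $B$ in full to the machine $i(B)$ on which $\sigma^\ast$ places the largest share (by processing time) of $B$'s jobs, breaking ties toward faster machines. The load on machine $i$ in the induced bag-schedule equals the $\sigma^\ast$-load contributed by the jobs originally placed at $i$, plus the ``pulled-in'' processing from parts of bags originally placed elsewhere but whose home index happens to be $i$.

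The main obstacle is bounding the pulled-in processing on a single machine. My plan is a charging argument: each pulled-in bag $B$ contributes extra load at most $(1 - 1/|S(B)|)\, p(B)/s_i$, where $S(B)$ is the set of machines on which $\sigma^\ast$ places jobs of $B$; by the $\arg\max$ choice of $i(B)$, at least a $1/|S(B)|$ fraction of $B$ was already at $i$. Summing these contributions and using \emph{Bound (B)} to cap multi-job bag sizes at $\beta(\B)\, p_{\min}$, I aim to show that the total pulled-in load on any machine $i$ is at most $(\max\{2,\beta(\B)\} - 1)\, T$. Added to the original $\sigma^\ast$-load of at most $T$, this yields total load at most $\max\{2,\beta(\B)\}\cdot T$, proving the lemma. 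The corner case in which $\B$ has no multi-job bag (so $\beta(\B)$ is a max over an empty set) is dispatched separately: every bag is then a singleton, and \emph{Bound (A)} combined with the total-load bound $\sum_j p_j \leq T\sum_i s_i$ lets us consolidate bags onto fast machines with makespan at most $T$, which is within the required ratio.
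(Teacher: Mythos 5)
There is a genuine gap, and it sits exactly at the step you flag as ``the main obstacle.'' Routing each bag to the machine on which an optimal job schedule $\sigma^\ast$ places its largest share does not prevent many bags from being routed to the \emph{same} machine, and no charging of the form ``each pulled-in bag costs at most $(1-1/|S(B)|)\,p(B)/s_i$'' can rescue this, because the number of bags whose argmax machine coincides is unbounded. Concretely, take $m$ machines of speed $1$ and $m$ bags, each consisting of $m$ jobs of size $1/m$; an optimal schedule places one job of every bag on every machine, so $T=1$ and $\beta(\B)=1\le 2$, yet every bag's largest share is (tied) on a single machine and your rule sends all $m$ bags there, giving makespan $m$ instead of $\max\{2,\beta(\B)\}\cdot T=2$. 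Tie-breaking does not help: perturb so that machine $1$ has speed $1+\epsilon$ and each bag has one job of size $(1+\epsilon)/m$ placed there by $\sigma^\ast$; the argmax is now strict and the same blow-up of order $m$ occurs. Your per-bag bound is true, but summing it over the up to $m-1$ bags pulled into one machine yields $\Theta(m)\,T$, not $(\max\{2,\beta(\B)\}-1)\,T$, so the claimed aggregate bound is false for this assignment rule. Your Bounds (A) and (B) and the all-singleton corner case are fine, but they do not repair the core assignment step.

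For comparison, the paper's proof avoids tying the bag placement to where $\sigma^\ast$ puts the bulk of each bag. It mimics $\sigma^\ast$ only for the oversized singleton bags (those larger than every multi-job bag), then gives each machine $i$ an explicit load capacity $\max\{2,\beta(\B)\}\,s_i$ (augmented by the oversized-singleton load already on $i$), and assigns the remaining bags greedily in decreasing order of processing time to any machine with enough residual capacity. Respecting the capacities immediately gives the $\max\{2,\beta(\B)\}$ ratio per machine, and feasibility is proved by a global volume argument: if some bag of size $b$ could not be placed, the total residual capacity would still exceed $m\,b$ (after normalizing $\sum_j p_j=\sum_i s_i$), so some machine could accept it --- a contradiction. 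If you want to salvage your outline, you would need to replace the argmax routing by some capacity- or load-aware assignment of this kind; the purely local charging argument cannot control the congestion on a single machine.
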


 This lemma generalizes Theorem~3.3 in \cite{EHMNSS20} which shows a robustness bound of 2 when $\beta (\B) \leq 2$ and its proof is deferred to Appendix~\ref{sec:appanalysis}. 
 We let $\bagsipr^{(i)}$ denote the collection of all bags $B$ at iteration $i$ of the \ipb~algorithm (\cref{alg-general}) and define $b^{(i)}_{\min} = \min_{B \in \bagsipr^{(i)}} p(B)$ to be the minimum processing time of a bag at each iteration $i$. 
 In the next lemma, we show that $b^{(i)}_{\min}$ is non-decreasing in $i$.

\begin{lemma}
\label{lem:non-decre}
    At each iteration $i$ of \ipb \ with $\rho = 4$, $b_{\min}^{(i+1)} \geq b_{\min}^{(i)}$. 
\end{lemma}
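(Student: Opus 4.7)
The plan is to reduce the lemma to a statement about the new bags produced by \textsc{LPT-Rebalance}. Within a single iteration of \ipb, the only bags that change are those within $\M_{\min}$ (which loses $B_{\min}$) and $\M_{\max}$ (which gains $B_{\min}$ and is then repartitioned by LPT into $\ell := |\M_{\max}|$ new bags). Every bag in a collection other than $\M_{\min}$ and $\M_{\max}$ is unchanged and has load at least $b_{\min}^{(i)}$ by the definition of $b_{\min}^{(i)}$; the bags left in $\M_{\min}$ after $B_{\min}$ is removed also have loads at least $b_{\min}^{(i)}$. Hence it suffices to show that each of the $\ell$ new bags produced by LPT inside $\M_{\max}$ has load at least $b_{\min}^{(i)}$.

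Next I would establish a structural lower bound on the total load of $\M_{\max}$. The while-loop guard of \ipb~together with $\rho = 4$ implies the existence of a bag $B^* \in \M_{\max}$ with $|B^*| \geq 2$ and $p(B^*) > 4\, b_{\min}^{(i)}$; the remaining $\ell - 2$ original bags of $\M_{\max}$ each have load at least $b_{\min}^{(i)}$, and $B_{\min}$ contributes exactly $b_{\min}^{(i)}$. Summing, the total load $P := \sum_{B \in \M_{\max}} p(B)$ in $\M_{\max}$ satisfies $P > (\ell + 3)\, b_{\min}^{(i)}$; equivalently, the average load per new bag exceeds $(1 + 3/\ell)\, b_{\min}^{(i)}$.

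The heart of the proof is a careful LPT analysis. Sort the jobs of $J_{\max} = \bigcup_{B \in \M_{\max}} B$ in non-increasing order as $q_1 \geq q_2 \geq \cdots$, let $B'_{\min}$ denote a minimum-load bag in the LPT output, and let $j^*$ be the last job placed into $B'_{\min}$. The classical LPT invariant says that at the moment $j^*$ was placed, $B'_{\min}$ had minimum load among all new bags, so every other bag had load at least $p(B'_{\min}) - p_{j^*}$ at that moment and, since loads never decrease, also at the end, yielding $P \geq \ell\cdot p(B'_{\min}) - (\ell-1)\, p_{j^*}$. I would then split into two cases. If $B'_{\min}$ contains at least two jobs, $j^*$ is placed after the first $\ell$ steps and so $p_{j^*} \leq q_{\ell+1}$; combining with the structural bound on $P$ would give $p(B'_{\min}) \geq b_{\min}^{(i)}$. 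If $B'_{\min}$ is a singleton, its lone job is one of $q_1, \ldots, q_\ell$, and I would rule out the possibility that this job is smaller than $b_{\min}^{(i)}$ by showing that such a configuration forces the remaining bags' loads to stay below $b_{\min}^{(i)} + q_{\ell+1}$ throughout, contradicting $P > (\ell + 3)\, b_{\min}^{(i)}$ for $\rho = 4$.

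The main obstacle is tightness: the excess $3\, b_{\min}^{(i)}/\ell$ of the average load over $b_{\min}^{(i)}$ is small, so the LPT balancing slack must be controlled carefully using the specific fact that the oversized bag $B^*$ satisfies $p(B^*) > 4\, b_{\min}^{(i)}$, and the constant $\rho = 4$ is calibrated exactly for this. The singleton subcase is particularly delicate, because the LPT greedy rule interacts with ties in a way that permits a bag to remain a singleton throughout the process whenever enough other bags share its load, and closing out this subcase requires carefully tracking the intermediate load profile rather than just the final one.
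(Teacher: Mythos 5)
Your reduction (only the bags of the rebalanced $\Mmax$ need to be checked) and your load bound $P > (\ell+3)\,\bmin^{(i)}$ are both correct, but the LPT step at the heart of your argument does not go through, for two reasons. First, the invariant you invoke points the wrong way: from ``every other bag has final load at least $p(B'_{\min}) - p_{j^*}$'' you get $P \geq \ell\, p(B'_{\min}) - (\ell-1)p_{j^*}$, which is a \emph{lower} bound on $P$ in terms of $p(B'_{\min})$, i.e.\ an \emph{upper} bound on $p(B'_{\min})$; it cannot be combined with another lower bound on $P$ to conclude $p(B'_{\min}) \geq \bmin^{(i)}$. The usable direction is the reverse one (each other bag's final load is at most the minimum bag's load plus that bag's own last job), and after singletons are set aside it yields only a guarantee of the form $p(B'_{\min}) \gtrsim P/(2\ell-1)$ --- this is exactly what \cref{lem-LPT} together with \cref{lem-simple} give. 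Second, and more fundamentally, your bound on $P$ is too weak to exploit even the corrected inequality: an average load of $(1+3/\ell)\bmin^{(i)}$ only forces a minimum of order $\tfrac{\ell+3}{2\ell-1}\bmin^{(i)}$, which falls below $\bmin^{(i)}$ once $\ell \geq 5$; the LPT minimum really can be close to half the average (e.g.\ $2\ell-1$ unit jobs into $\ell$ bags), so no sharpening of the balancing analysis alone can close this gap. Your singleton subcase suffers from the same quantitative shortfall, since the contradiction you aim for is again only against $P > (\ell+3)\bmin^{(i)}$.

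The missing ingredient --- and the crux of the paper's proof --- is that $\Mmax$ is not an arbitrary collection: it is either a single bag or the output of an earlier \lptrb\ call, hence by \cref{lem-LPT} its minimum bag load $b^-$ satisfies $2b^- \geq \max_{B \in \Mmax, |B|\ge 2} p(B) > 4\,\bmin^{(i)}$, i.e.\ \emph{every} old bag of $\Mmax$, not just the oversized one, has load greater than $2\bmin^{(i)}$. This is precisely where $\rho = 4$ is used, and it roughly doubles the lower bound on the total load to about $(2\ell-1)\bmin^{(i)}$, which is exactly enough to beat the $1/(2\ell-1)$ loss in the LPT min-load guarantee of \cref{lem-simple}. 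The paper must also treat singletons with care: it first shows that any singleton of the rebalanced $\Mmax$ whose load exceeds $b^-$ must already have been a singleton before rebalancing, then excludes these (and the intermediate singletons) before applying the two lemmas, and only then runs the arithmetic. To repair your proposal you would need to import the $b^- > 2\bmin^{(i)}$ bound and some version of this singleton bookkeeping, at which point it essentially becomes the paper's argument.
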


Before proving Lemma~\ref{lem:non-decre}, we first state two lemmas that are needed for the proof. The first is a useful property of the LPT algorithm that was shown in \cite{EHMNSS20}. 

\begin{lemma}\cite{EHMNSS20}
\label{lem-LPT}
For any job processing times $p_1, \ldots, p_n$ and number of machines $m$, the partition $\B_{\textsc{LPT}} = \{B_1, \cdots, B_m\}$ returned by the LPT algorithm on these jobs  satisfies $\beta(\B_{\textsc{LPT}} ) \leq 2$. 
\end{lemma}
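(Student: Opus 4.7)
The plan is to prove the bound via a standard LPT potential argument applied to the two extremal bags $B_{\max}$ and $B_{\min}$ that define $\beta(\B_{\textsc{LPT}})$. Assume without loss of generality that every $p_j > 0$ (jobs with $p_j = 0$ can be placed last into any bag without affecting $\beta$) and that $n \geq m$ so that no bag is empty; otherwise the set $\{B \in \B : |B|\geq 2\}$ in the numerator of $\beta$ is empty and the statement holds trivially.

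The first step is to identify the relevant ``last job.'' Let $B_{\max}$ be the bag achieving the maximum in the numerator, so $|B_{\max}| \geq 2$, and let $j^{\star}$ be the job most recently assigned to $B_{\max}$ by LPT. Since LPT always places the next job into a bag of currently minimum total processing time, at the moment $j^{\star}$ is inserted into $B_{\max}$ we have $p(B_{\max}) - p_{j^{\star}} \leq L$ for every bag $B$ whose load at that moment is $L$. Because loads only grow over the remainder of the algorithm, this gives
\[
    p(B_{\max}) - p_{j^{\star}} \;\leq\; p(B_{\min}).
\]

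The second step is to bound $p_{j^{\star}}$ by $p(B_{\min})$. Here I use that $|B_{\max}|\geq 2$: before $j^{\star}$ was inserted, $B_{\max}$ already contained at least one job $j'$, and by the LPT ordering $p_{j'}\geq p_{j^{\star}}$. In particular, at the moment just before $j^{\star}$ was assigned, $B_{\max}$ had strictly positive load $\geq p_{j^{\star}}$. Because LPT chose $B_{\max}$ over $B_{\min}$ at that time, $B_{\min}$ must also have had load $\geq p_{j^{\star}} > 0$, meaning some job $j''$ had already been placed in $B_{\min}$ earlier than $j^{\star}$; again by the LPT ordering, $p_{j''}\geq p_{j^{\star}}$. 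Therefore $p(B_{\min}) \geq p_{j''} \geq p_{j^{\star}}$. Combining with the inequality from the previous step yields
\[
    p(B_{\max}) \;\leq\; p(B_{\min}) + p_{j^{\star}} \;\leq\; 2\,p(B_{\min}),
\]
which is exactly $\beta(\B_{\textsc{LPT}})\leq 2$.

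The main obstacle I expect is purely in step two: one has to verify that $B_{\min}$ was \emph{not empty} at the moment $j^{\star}$ was processed, since otherwise the LPT rule would have preferred $B_{\min}$ over $B_{\max}$, contradicting the choice of $j^{\star}$. The constraint $|B_{\max}|\geq 2$ is precisely what makes this work; without it one could have $B_{\max}$ receiving a single very large first job while $B_{\min}$ is still empty, and the claim would fail. All other pieces (monotonicity of loads under LPT and the fact that jobs are processed in nonincreasing order) are standard and require no real work.
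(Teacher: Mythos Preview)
Your proof is correct and follows essentially the same approach as the paper: both arguments look at the last (equivalently smallest) job $j^{\star}$ placed in the maximizing bag, use $|B_{\max}|\ge 2$ to ensure $B_{\max}$ already carried load $\ge p_{j^{\star}}$ at that moment, and then invoke the LPT minimum-load rule to conclude $p(B_{\min})\ge p(B_{\max})/2$. The paper packages this slightly more tersely by observing directly that $p_{j^{\star}}\le p(B_{\max})/2$ and deriving the contradiction in one step, but the underlying mechanism is identical to your two-step decomposition.
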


The second, whose proof is also deferred to Appendix~\ref{sec:appanalysis}, bounds the minimum total processing time of a bag if the maximum total processing time of a non-singleton bag is at most twice as large as the minimum total processing time of a bag in a partition. 

\begin{lemma}
\label{lem-simple}
For any job processing times $p_1, \ldots, p_n$ and partition $\B = \{B_1, \cdots, B_m\}$ of the jobs, if  $\beta(\B)  \leq 2$, then $\min_{B \in \B} p(B) \geq \frac{\sum_{j=1}^n p_j}{2m-1}.$
\end{lemma}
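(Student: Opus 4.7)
The plan is to reduce the lemma to the equivalent inequality
$$ \sum_{j=1}^{n} p_j \;\leq\; (2m-1)\, p_{\min}, \qquad \text{where } p_{\min} := \min_{B \in \B} p(B), $$
and then to establish this by bounding each bag's total processing time individually. Because the $B_i$ partition the jobs without overlap, the left-hand side rewrites as $\sum_{B \in \B} p(B)$, so the task reduces to an upper bound on the total bag processing time.

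First I would fix a bag $B^{\ast} \in \B$ attaining $p(B^{\ast}) = p_{\min}$ and split the sum as $\sum_{B \in \B} p(B) = p_{\min} + \sum_{B \neq B^{\ast}} p(B)$. The key step is then to show that each of the $m-1$ remaining bags satisfies $p(B) \leq 2 p_{\min}$; once this is done, summing and combining with the singleton term for $B^\ast$ yields $\sum_B p(B) \leq p_{\min} + (m-1)\cdot 2 p_{\min} = (2m-1)\, p_{\min}$, which is exactly the desired bound. For any bag $B \neq B^\ast$ with $|B| \geq 2$, this bound is immediate from the hypothesis, since the very definition $\beta(\B) = \max_{B' \in \B, |B'| \geq 2} p(B')/p_{\min} \leq 2$ forces $p(B) \leq 2 p_{\min}$.

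The main obstacle is handling singleton bags $B = \{j\}$, which the definition of $\beta(\B)$ excludes and which are therefore not directly bounded by the hypothesis $\beta(\B) \leq 2$. I expect to address this by exploiting the structural context in which \cref{lem-simple} is ultimately invoked: in the proof of \cref{lem:non-decre}, the partitions passed to the lemma are the outputs of \textsc{LPT-Rebalance}, whose LPT structure (\cref{lem-LPT}) controls the ratio between bags uniformly. Concretely, I would argue that any singleton $\{j\}$ in such a partition must already satisfy $p_j \leq 2 p_{\min}$, because otherwise LPT would have placed additional jobs into that bag rather than closing it off as a singleton, or else $\{j\}$ would itself be the minimum bag and contribute only the $p_{\min}$ term. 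With the uniform bound $p(B) \leq 2 p_{\min}$ in place for every $B \neq B^\ast$, the short counting argument above closes the proof.
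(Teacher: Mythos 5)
Your core counting argument is exactly the paper's: bound every bag other than a minimum one by $2p_{\min}$ and sum to get $p_{\min} + (m-1)\cdot 2p_{\min} = (2m-1)p_{\min}$ (the paper phrases it as a contradiction, but the arithmetic is identical). You are also right that the definition $\beta(\B) = \max_{B \in \B, |B| \ge 2} p(B) / \min_{B \in \B} p(B)$ leaves singleton bags uncontrolled, a subtlety the paper's own proof glosses over by simply asserting $p(B_i) \le 2b'$ for all bags.

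The problem is your proposed fix for the singletons. The claim that a singleton $\{j\}$ produced by LPT must satisfy $p_j \le 2p_{\min}$ ``because otherwise LPT would have placed additional jobs into that bag'' is false: LPT assigns each job to the currently least loaded bag, so a bag holding one very large job is never the least loaded again and remains a large singleton (e.g., jobs $100,1,1,1,1$ on $3$ bags yield $\{100\},\{1,1\},\{1,1\}$, with $p_{\min}=2$). Such a singleton is also not the minimum bag, so your fallback clause does not apply either. Moreover, arguing from ``the structural context in which the lemma is invoked'' proves at best a statement about specific partitions, not the lemma as stated for an arbitrary partition with $\beta(\B)\le 2$; and even in that context the relevant guarantee is not a property of raw LPT output. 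What actually happens in the paper is that, in the proof of \cref{lem:non-decre}, the oversized singleton bags (the sets $\C^+$ and $\C^-$) are explicitly removed \emph{before} \cref{lem-simple} is applied, and \cref{lem-LPT} is used to argue that every pair of remaining bags, singletons included, is within a factor $2$ of each other; under that stronger pairwise condition the bound $p(B)\le 2p_{\min}$ holds for every bag and your (and the paper's) counting step goes through. As written, your argument for the singleton case does not hold, so the proof has a genuine gap at precisely the point you identified as the main obstacle.
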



We are now ready to prove Lemma~\ref{lem:non-decre}.

\begin{proof}[Proof of Lemma~\ref{lem:non-decre}]
Let $\preMmax$ and $\postMmax$ denote $\M_{\max}$ in the $i^{th}$ iteration of the \ipb \ Algorithm (Line~\ref{li:ipr-while}) before and after we add the bag $\Bmin^{(i)}$ to it  and balance it. Let $\ell_i$ be the number of bags in $\postMmax$, which means that $\preMmax$ has $(\ell_i-1)$ bags before receiving $\Bmin^{(i)}$.  Let $b^- = \min_{B \in \M_{\max}^{(i)}} p(B)$ be the minimum processing time of a bag in $\preMmax$. Let $\C^+ = \{ B \in \postMmax : |B| = 1, \ p(B) > b^-\}$ and $\C^- = \{ B \in \postMmax: |B| = 1, \  \max _{B \in \postMmax, |B| \ge 2}p(B) < p(B) \leq b^- \}$.

Our main goal is to prove that $\min_{B\in \postMmax} p(B) \leq \bmin^{(i)}$ which implies Lemma~\ref{lem:non-decre}. We first argue that for any singleton bag $B \in \C^+$, we also have that $B \in \preMmax$. If $\preMmax$ contains only one bag, then the statement is trivially true because $b^-$ is the total processing of the only bag in $\preMmax$, and we add a bag of total processing time $\bmin^{(i)} \leq b^-$ to $\preMmax$, so in $\postMmax$ there does not exist a job of processing time larger than $b^-$.
If $\preMmax$ has at least two bags, then $\preMmax$ has been balanced in some previous iteration.
Assume for the sake of contradiction that before we run the $\lptrb$ subroutine, the job $j$ is not in a singleton bag and the bag that contains $j$ also contains another job $k$. 
We consider the LPT process that produces the bags in $\preMmax$. If $p_j > p_k$, then we assign job $j$ prior to job $k$. When we assign job $k$, the bag with $p_j$ inside has processing time more than the bag of processing time $b^-$, so we would not place the job $k$ into the same bag as job $j$, a contradiction. If $p_k \geq p_j$, then again when we assign job $j$, the bag that contains job $k$ has processing time at least $p_j > b^-$ so we would not place job $j$ into the same bag as job $k$, contradiction. 

Since we have that $B \in \preMmax$ and 
$p(B) > b^- > \bmin^{(i)}$ for all $B \in \C^+$, we focus on the bags  $\preMmax \setminus \C^+$. 
Consider the total processing time of the bags in $\preMmax \setminus \C^+$ before $\lptrb$ is executed in the $i^{th}$ iteration. These jobs in $\preMmax \setminus \C^+$ are contained in $(\ell_i - 1 - |\C^+|)$ bags. Recall that $b^-$ is the minimum processing time of a bag in $\Mmax^{(i)}$, so we have
\begin{align}
    \sum_{B \in \preMmax \setminus \C^+}p(B) \geq (\ell_i -1 -|\C^+|)b^-. \label{ineq-initial-gpsize-bound}
\end{align}

After we add $B^{(i)}_{\min}$ to $\preMmax$, the total processing time of this collection of bags excluding the jobs in $\C^+$ is $\sum_{B \in \postMmax \setminus \C^+} p(B) = \sum_{B \in \preMmax \setminus \C^+}p(B) +  \bmin^{(i)} $.  Since the singleton bags $\C^-$ have total processing time at most $|\C^-|b^-$, we then have that 
$\sum_{B \in \postMmax \setminus \{\C^+ \cup \C^-\}} p(B) \geq \sum_{B \in \preMmax \setminus \C^+}p(B) +  \bmin^{(i)} - |\C^-|b^-$. By Lemma~\ref{lem-LPT}, for  bags $B, B' \in \postMmax \setminus \{\C^+ \cup \C^-\}$, we have $p(B) \leq 2p(B')$. Therefore, by Lemma~\ref{lem-simple}, we have that 
\begin{align}
    \min_{B\in \postMmax \setminus \{\C^+ \cup \C^-\}} p(B)  
    & \geq \frac{\sum_{B \in \postMmax \setminus \{\C^+ \cup \C^-\}} p(B)}{2|\postMmax \setminus \{\C^+ \cup \C^-\}| - 1} \\
    & \geq \frac{\sum_{B \in \preMmax \setminus \C^+}p(B) - |\C^-|b^- +  \bmin^{(i)} }{2 (\ell_i - |\C^+| - |\C^-| ) - 1}. \label{ineq-minbags-bound}
\end{align}

Next, note that  $\max _{B \in \cup_{i} \M_i, |B| \ge 2}p(B) > 4\bmin^{(i)}$ by the algorithm with $\rho = 4$. We also have that $\max _{B \in \cup_{i} \M_i, |B| \ge 2}p(B) \leq 2b^-$ by Lemma~\ref{lem-LPT}, which implies that  $b^- > 2 \bmin^{(i)}$.  Combining the inequalities (\ref{ineq-initial-gpsize-bound}) and (\ref{ineq-minbags-bound}), we obtain
    \begin{align*}
        \min_{B\in \postMmax} p(B) &  = \min_{B\in \postMmax \setminus \{\C^+ \cup \C^-\}} p(B)   \\
        & \geq \frac{\sum_{B \in \preMmax \setminus \C^+} p(B) - |\C^-| b^- + \bmin^{(i)}}{2 (\ell_i - |\C^+| - |\C^-|) - 1 }\\
        &\geq \frac{(\ell_i - 1 - |\C^+|)b^{-} - |\C^-| b^{-} + \bmin^{(i)}}{2 (\ell_i -|\C^+| - |\C^-|) - 1}\\
        & > \frac{(\ell_i - |\C^+|- |\C^-|)2\bmin^{(i)} - \bmin^{(i)} }{2(\ell_i -|\C^+|- |\C^-|) - 1}\\
        & = \bmin^{(i)}.
    \end{align*}

We conclude that 
\begin{align*}
b^{(i+1)}_{\min} \geq \min\left\{\min_{B \in \M^{'(i)}_{\max}}p(B), b^{(i)}_{\min}\right\} &= b^{(i)}_{\min}. 
\end{align*}
   
\end{proof}

In the remainder of the section, for ease of notation, we let $\beta = \beta(\bagsipr)$, where $\bagsipr$ is the partition returned by \ipb \, and bound the value of $\beta$. Similarly, let $b_{\min} = \min_{B \in \bagsipr} p(B)$ be the minimum processing time of a bag returned by \ipb. 
Additionally, we let all quantities such as $\M_i$ and $B_i$ refer to the quantities $\M_i$ and $B_i$ when the algorithm terminates, unless noted otherwise.  We let $\bmax = \max _{B \in \bagsipr, |B| \ge 2}p(B)$, $\Bmax$ be the bag with processing time $\bmax$, $W = \sum_{B\in \Mmax}{p(B)}$ be the total processing time of $\Mmax$ and $\ell = |\Mmax|$. 

We note that $\bmin > 0$ when the algorithm terminates, because all the empty bags created in the initial partition will be moved to some collection of bags and eliminated by the \lptrb \ subroutine in the first  iterations. If the algorithm has $\bmin^{(i)} = 0$ in some iteration $i$, then it will keep running because $\bmax^{(i)}/\bmin^{(i)} = \infty > \rho$ and that the consistency bound won't be broken as we are adding an empty bag to $\Mmax$.
Therefore, the algorithm would never end up with $\bmin = 0$.

In the next lemma, given an upper bound of $b_{\max}$, we bound the ratio $\beta$ assuming that the minimum total processing of a bag in each iteration is nondecreasing. Note that $\beta$ is also equal to $\bmax/\bmin$ using the newly introduced notations. 

\begin{lemma} \label{sizeratio}
    Let $\bagsipr = \{B_1, \ldots, B_m\}$ be the partition of the $n$ jobs returned by \ipb \ with $\rho \geq 1$. Assume that at each iteration $i$ of \ipb, $b_{\min}^{(i+1)} \geq b_{\min}^{(i)}$. Let $\M_j$ be the collection of bags such that $\M_j = \Mmax$. 
    If $\bmax \leq c_1 W + c_2$ for $c_1,c_2 \geq 0$, then
    $\beta \leq \max\{\rho, c_1\left(\ell-1+\frac{\ell}{\alpha}\right) + \frac{c_2{\ell}}{\alpha\hat{s}_{j}\OPTC}\}.$ 
\end{lemma}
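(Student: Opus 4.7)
}

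The proof splits on the reason \ipb\ terminates. If the algorithm exits because the while-loop condition on line~\ref{li:ipr-while} fails, then $\max_{B,|B|\geq 2}p(B)\leq \rho\min_{B}p(B)$, so $\beta = b_{\max}/b_{\min}\leq \rho$ and the bound holds trivially. The remainder of the argument treats the case where the algorithm exits because the if-check on line~\ref{li:ipr-if} triggers, so the would-be next partition $\M'$ produced by \lptrb\ violates $(1+\alpha)\OPTC$.

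The first step is to pin down \emph{which} machine causes the violation. Between the current $\M$ and the would-be $\M'$, exactly two collections change: the machine holding $\M_{\min}$ strictly loses load (it sheds $B_{\min}$), and machine $j$ with $\M_j = \M_{\max}$ gains the jobs of $B_{\min}$, so its new load is $(W + b_{\min})/\hat{s}_j$; all other machines are unaffected. Since $\M$ itself satisfies consistency ($W/\hat{s}_j \leq (1+\alpha)\OPTC$), the violation must occur on machine $j$:
\[
\frac{W + b_{\min}}{\hat{s}_j} > (1+\alpha)\OPTC.
\]
Writing $T := \hat{s}_j\OPTC$, this rearranges to the key inequality $b_{\min} > (1+\alpha)T - W$.

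Next I combine this with the hypothesis $b_{\max} \leq c_1 W + c_2$ and with the structural fact that $\M_j$ contains $B_{\max}$ together with $\ell - 1$ other bags each of size at least $b_{\min}$, giving $W \geq b_{\max} + (\ell - 1)b_{\min}$. Substituting the structural inequality into the violation inequality yields $\ell \cdot b_{\min} > (1+\alpha)T - b_{\max}$. Using $b_{\max} \leq c_1W + c_2 \leq c_1(b_{\max} + (\ell - 1)b_{\min}) + c_2$ (by the hypothesis and the structural bound) and solving the resulting two-variable linear system for $b_{\min}$ from below and $b_{\max}$ from above, I expect the ratio $b_{\max}/b_{\min}$ to collapse, after cancellations between the $(1-c_1)$ factors, to the claimed
\[
\beta \leq c_1\!\left(\ell - 1 + \frac{\ell}{\alpha}\right) + \frac{c_2\ell}{\alpha\,\hat{s}_j\OPTC}.
\]

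The main obstacle is executing this algebraic collapse cleanly so that the bound matches the stated clean expression for all admissible $c_1,c_2\geq 0$, including the boundary case $c_1 = 1$ where the intermediate division by $1-c_1$ is degenerate and must be handled directly from $b_{\max} \leq W - (\ell-1)b_{\min}$. A secondary subtlety is that the monotonicity hypothesis $b_{\min}^{(i+1)} \geq b_{\min}^{(i)}$ is what keeps $b_{\min}$ from being pushed to arbitrarily small values during the hypothetical additional iteration, which is what makes the structural lower bound $W \geq \ell\, b_{\min}$ binding enough to give a finite $\beta$-bound rather than letting $W\to (1+\alpha)T$ with $b_{\min}\to 0$ as a degenerate limit.
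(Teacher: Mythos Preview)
Your termination case split is right, and pinning the consistency violation to machine $j$ (so that $W + b_{\min} > (1+\alpha)\hat s_j\OPTC$) is exactly the right starting point --- the paper establishes the same inequality, though more circuitously via a case split on $\omega$.

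The gap is in your ``structural fact.'' The inequality $W \geq b_{\max} + (\ell-1)b_{\min}$ is a \emph{lower} bound on $W$; it cannot be substituted into $b_{\min} > (1+\alpha)T - W$ to obtain $\ell\,b_{\min} > (1+\alpha)T - b_{\max}$, because replacing $W$ by something smaller makes the right-hand side larger, not smaller. The chain $b_{\max} + (\ell-1)b_{\min} \leq W$ and $W + b_{\min} > (1+\alpha)T$ gives only $W + b_{\min} > (1+\alpha)T$, not $b_{\max} + \ell b_{\min} > (1+\alpha)T$. Your later step $c_1 W + c_2 \leq c_1(b_{\max} + (\ell-1)b_{\min}) + c_2$ reverses the direction of the same inequality, so the two uses are mutually inconsistent.

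What the argument actually needs is an \emph{upper} bound on $W$, and this is precisely where the monotonicity hypothesis enters --- it is the main ingredient, not a secondary subtlety. The collection $\M_j = \M_{\max}$ began (at initialization) with a single bag of load at most $T := \hat s_j\OPTC$, and over the course of the algorithm it received $\ell - 1$ bags, each of which was the global $B_{\min}$ at the moment it was moved. By the monotonicity of $b_{\min}^{(i)}$, each such bag had processing time at most the final $b_{\min}$, so
\[
W \;\leq\; T + (\ell-1)\,b_{\min}.
\]
Combining this with your violation inequality $W + b_{\min} > (1+\alpha)T$ gives $T + \ell\,b_{\min} > (1+\alpha)T$, i.e.\ $b_{\min} > \alpha T/\ell$. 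Then
\[
\beta \;=\; \frac{b_{\max}}{b_{\min}} \;\leq\; \frac{c_1 W + c_2}{b_{\min}} \;\leq\; c_1(\ell-1) + \frac{c_1 T}{b_{\min}} + \frac{c_2}{b_{\min}} \;<\; c_1\!\left(\ell-1 + \frac{\ell}{\alpha}\right) + \frac{c_2\ell}{\alpha T},
\]
with no division by $1-c_1$ and no boundary case to handle. The paper reaches the same conclusion but splits into two cases on the size of $\omega = W - p(B_j^{\text{init}})$; your violation-on-machine-$j$ observation together with the correct upper bound on $W$ gives a cleaner route once the inequality direction is fixed.
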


\begin{proof}
 If $\bmax / \bmin < \rho$ then the algorithm terminates and satisfies the condition $\beta = \bmax / \bmin \le \max\{\rho, c_1\left((\ell-1)+\frac{\ell}{\alpha}\right) + \frac{c_2{\ell}}{\alpha\hat{s}_{j}\OPTC}\}$. Therefore, we can assume that $\bmax / \bmin > \rho$ and the algorithm terminates due to the violation of the consistency bound.

Since $\bmax \leq c_1 W + c_2$, we have the following bound on $\beta$:
\begin{equation}\label{ineq-beta-bound-1}
    \beta  = \frac{\bmax}{\bmin} \leq \frac{c_1 W + c_2}{\bmin}.
\end{equation}

Note that $\M_j$ is the collection of bags such that $\M_j = \Mmax$. 
The initial processing time of $\Mmax$ when it only contained a bag $B_j$ from the $(1+\epsilon)$-consistent partition, is at most $\hatsj \OPTC$. We use $\omega$ to denote the total processing time that is added to $\Mmax$ through the execution of \ipb, so we have $W \leq \hatsj \OPTC + \omega$. Then from inequality~(\ref{ineq-beta-bound-1}), we have the following bound on $\beta$:
\begin{equation}\label{ineq-general-2}
   \beta \leq \frac{c_1( {\hat{s}}_{j}\OPTC + \omega)+c_2}{\bmin }.
\end{equation}

By the assumption that  at each iteration $i$ of \ipb, $b_{\min}^{(i+1)} \geq b_{\min}^{(i)}$, any bag previously added to $\Mmax$ must have processing time at most $\bmin$. Thus, 
\begin{align}\label{ineq-omega-ub}
 \omega \leq ({\ell}-1) \bmin.
\end{align}

To finish the last part of the analysis, we split into two cases: ($a$) $\omega \geq \frac{(\ell-1)\alpha}{\ell}\hat{s}_{j}\OPTC$, and ($b$) $\omega < \frac{(\ell-1)\alpha}{\ell}\hat{s}_{j}\OPTC$. For case ($a$), combining inequalities (\ref{ineq-general-2}) and (\ref{ineq-omega-ub}), we have:
\begin{align*}
    \beta & \leq \frac{c_1({\hat{s}}_{j}\OPTC + ({\ell} - 1) \bmin )+c_2}{\bmin}  =  \frac{c_1{\hat{s}}_{j} \OPTC }{\bmin } + c_1(\ell - 1) + \frac{c_2}{\bmin}.
\end{align*}   

From the condition of this case, we have $\bmin \geq \frac{\omega}{\ell-1} \geq \frac{\frac{(\ell-1)\alpha}{\ell}\hat{s}_{j}\OPTC}{\ell-1} = \frac{\alpha\hat{s}_{j}\OPTC}{\ell}$. Thus, $\frac{c_1{\hat{s}}_{j} \OPTC }{\bmin} \leq \frac{c_1{\ell}}{\alpha}$ and $\frac{c_2}{\bmin} \leq \frac{c_2{\ell}}{\alpha\hat{s}_{j}\OPTC}$. Then, we have:
\begin{align*}
    \beta & \leq \frac{c_1{\hat{s}}_{j} \OPTC }{\bmin } + c_1(\ell - 1) + \frac{c_2}{\bmin}  \\
    & \leq \frac{c_1{\ell}}{\alpha} + c_1(\ell - 1) + \frac{c_2{\ell}}{\alpha\hat{s}_{j}\OPTC} \\
    & = c_1\left(\ell-1+ \frac{\ell}{\alpha}\right) + \frac{c_2{\ell}}{\alpha\hat{s}_{j}\OPTC}.
\end{align*}

For case ($b$), since  adding a bag with processing time $\bmin$ to $\Mmax$ caused the algorithm to terminate, we will show that $\bmin \geq (\alpha/{\ell}){\hat{s}}_{j}\OPTC$. For the sake of contradiction, we assume the otherwise, i.e. $\bmin < (\alpha/{\ell}){\hat{s}}_{j}\OPTC$. 
Then we consider the total processing time of $\Mmax$ when the algorithm terminates. Initially the total processing time of $\Mmax$ is the total processing time of $B_j$, which is at most $\hat{s}_{j}\OPTC$. 
From the assumption of case ($b$), the total processing time of $\Mmax$ when the algorithm terminates is at most ${\hat{s}}_{j}\OPTC + \omega \leq (1 + \frac{({\ell}-1)\alpha}{{\ell}}) {\hat{s}}_{j}\OPTC$. 
Thus if $\bmin < (\alpha/{\ell}){\hat{s}}_{j}\OPTC$, after we add $\Bmin$ to $\Mmax$, the updated collection of bags $\M'_{\max} = \Mmax \cup \Bmin$ has a total processing time at most $(1 + \frac{({\ell}-1)\alpha}{{\ell}}) {\hat{s}}_{j}\OPTC + (\alpha/{\ell}){\hat{s}}_{j}\OPTC \leq (1+\alpha) {\hat{s}}_{j}\OPTC$. 
Thus, we have $\sum_{B \in \M'_{\max}}p(B) / \hat{s}_j \leq (1+\alpha) \OPTC$. For any other collection of bags, the total processing time of the jobs in such collection of bags cannot increase compared to the previous iteration. Thus, if we add $\Bmin$ to $\Mmax$, run the $\lptrb$ subroutine and let the new assignments be $\M'_1, \ldots, \M'_m$, we have $\max_{ i \in [m]} \sum_{B \in \M'_i} p(B) / \hat{s}_i \leq (1 + \alpha) \OPTC$ and the algorithm would have run another iteration. Contradiction with the fact that we stopped the algorithm. Then, we have:
\begin{align*}
    \beta &\leq \frac{c_1({\hat{s}}_{j}\OPTC + \frac{({\ell}-1)\alpha}{{\ell}}{\hat{s}}_{j} \OPTC )+c_2}{\bmin}\\ 
    &= \left(1+\frac{(\ell-1)\alpha}{\ell} \right) \frac{c_1{\hat{s}}_{j} \OPTC }{ \bmin} + \frac{c_2}{\bmin} \\
    &\leq \left(1+\frac{(\ell-1)\alpha}{\ell} \right) \frac{c_1{\hat{s}}_{j}\OPTC}{ (\alpha/{\ell}) {\hat{s}}_{j} \OPTC} + \frac{c_2}{(\alpha/{\ell}) {\hat{s}}_{j} \OPTC} \\ 
    &= \left(1+\frac{(\ell-1)\alpha}{\ell} \right) \frac{c_1 \ell}{\alpha} + \frac{c_2\ell}{\alpha{\hat{s}}_{j} \OPTC} \\
    &= c_1\left(\ell-1+ \frac{\ell}{\alpha}\right) + \frac{c_2{\ell}}{\alpha\hat{s}_{j}\OPTC}.
\end{align*}

The first inequality uses inequality~(\ref{ineq-general-2}) and the condition of this case.
The second inequality holds because we have proved that $\bmin \geq \frac{\alpha }{{\ell}}{\hat{s}}_{j}\OPTC$. 
\end{proof}

The previous lemma shows that if we can upper bound $b_{\max}$ then we can upper bound the ratio $\beta$.
In the next lemma, we prove an upper bound of $b_{\max}$.
\begin{lemma}\label{bmax}
Let $\bagsipr = \{B_1, \ldots, B_m\}$ be the partition of the $n$ jobs returned by \ipb \ with $ \rho = 4 $. We have $\bmax \leq \frac{2W}{{\ell}+1}$.
\end{lemma}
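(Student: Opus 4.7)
The plan is to argue that the bags currently in $\Mmax$ inherit the balance guarantee of \cref{lem-LPT} from the last \lptrb\ call that modified them. The key observation is that $\Mmax$ gains bags only when it is chosen as the rebalancing target of some iteration, in which case a fresh LPT is immediately applied; consequently, between the last such LPT call and termination, $\Mmax$ can only \emph{lose} bags, namely those that happen to be the global minimum bag $\Bmin$ in a later iteration. Thus the current bags of $\Mmax$ form a subset of the bags produced by the LPT output at some iteration $t^*$.

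For $\ell = 1$ the claim is immediate, since $W = \bmax$ and $2W/(\ell+1) = \bmax$. For $\ell \ge 2$, let $B^*_1,\dots,B^*_{\ell'}$ denote the bags produced when \lptrb\ last ran on this collection; the $\ell$ surviving bags of $\Mmax$ form a subset of this LPT output. \cref{lem-LPT} gives $\max_{B^*_i\,:\,|B^*_i|\ge 2} p(B^*_i) \le 2\min_j p(B^*_j)$. Since $\Bmax$ lies in $\Mmax$ and contains at least two jobs, $\bmax \le 2\min_j p(B^*_j)$. Any bag removed from $\Mmax$ after iteration $t^*$ was the global minimum bag at the time of its removal, hence no larger than any surviving bag in $\Mmax$, so $\min_j p(B^*_j) \le \min_{B \in \Mmax} p(B)$; combining, $\bmax \le 2\min_{B \in \Mmax} p(B)$. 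Summing over the $\ell$ bags of $\Mmax$ then yields
\begin{equation*}
W \;=\; \sum_{B \in \Mmax} p(B) \;\ge\; \bmax \,+\, (\ell-1)\cdot \tfrac{\bmax}{2} \;=\; \tfrac{\bmax(\ell+1)}{2},
\end{equation*}
which rearranges to the desired bound $\bmax \le \tfrac{2W}{\ell+1}$.

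The step needing the most care is the bookkeeping described in the first paragraph: one must verify that $\Mmax$ never acquires a bag between the LPT call at iteration $t^*$ and termination, and that removals can only drop bags whose processing time is at most that of every surviving bag in $\Mmax$. Once this invariant is pinned down, the application of \cref{lem-LPT} and the summation above are routine, and the parameter $\rho=4$ plays no role in this particular lemma (it is needed only for the earlier \cref{lem:non-decre} that guarantees the LPT property transfers cleanly under removals).
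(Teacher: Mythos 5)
Your proof is correct and takes essentially the same route as the paper's: invoke Lemma~\ref{lem-LPT} to conclude that every bag in $\Mmax$ has processing time at least $\bmax/2$, then sum over the $\ell$ bags to obtain $W \geq (\ell+1)\bmax/2$ (the paper packages this as a one-line contradiction). Your extra bookkeeping — that the final $\Mmax$ is a subset of the bags produced by its last \lptrb\ call, so the LPT balance bound survives later removals, and that $\rho=4$ is not actually used — makes explicit what the paper leaves implicit, but it is the same argument.
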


\begin{proof}
The proof is by contradiction. 
If $\bmax > \frac{2{W}}{{\ell}+1}$ then by Lemma~\ref{lem-LPT}, $\forall B\in \Mmax, p(B) \leq \bmax$ we have $p(B) >  \frac{W}{{\ell}+1}$. 
Consider the total processing time of all bags in $\Mmax$ if $\bmax > \frac{2{W}}{{\ell}+1}$. We have $\sum_{B\in \Mmax}{p(B)} >  \frac{2{W}}{{\ell}+1} + (\ell - 1) \frac{W}{{\ell}+1} > W$, which leads to a contradiction. 
\end{proof}

We are now ready to show the algorithm's robustness.


\begin{lemma}\label{lem:robustness}
For any constant $ \epsilon \in (0,1)$ and any $\alpha \in (0,1)$, \ipb \ with $\rho = 4$ is a $(2 + 2/\alpha)$-robust partitioning algorithm.
\end{lemma}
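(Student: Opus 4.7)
The plan is to combine the four preceding lemmas in a straightforward chain. First, I would invoke Lemma~\ref{lem:non-decre} to establish that, at every iteration $i$ of \ipb \ run with $\rho=4$, the sequence $b_{\min}^{(i)}$ is non-decreasing. This is precisely the hypothesis required to apply Lemma~\ref{sizeratio} to the partition $\bagsipr$ returned by the algorithm.

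Next, I would plug in the concrete upper bound on $\bmax$ supplied by Lemma~\ref{bmax}: since $\bmax \leq \frac{2W}{\ell+1}$, we may take $c_1 = \frac{2}{\ell+1}$ and $c_2 = 0$ in the statement of Lemma~\ref{sizeratio}. This gives
\begin{align*}
\beta \;\leq\; \max\!\left\{\rho,\ \frac{2}{\ell+1}\!\left(\ell - 1 + \frac{\ell}{\alpha}\right)\right\}
\;=\; \max\!\left\{4,\ \frac{2(\ell-1)}{\ell+1} + \frac{2\ell}{\alpha(\ell+1)}\right\}.
\end{align*}

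The only real computation is then to bound the second term by $2 + 2/\alpha$. Observing that $\frac{\ell-1}{\ell+1} < 1$ and $\frac{\ell}{\ell+1} < 1$ for every integer $\ell \geq 1$, the bracketed expression is strictly less than $2 + 2/\alpha$. Moreover, for $\alpha \in (0,1)$ we have $2/\alpha > 2$, and hence $2 + 2/\alpha > 4$, so the maximum is achieved by the second term and we conclude $\beta \leq 2 + 2/\alpha$.

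Finally, I would feed this bound into Lemma~\ref{lem:robustnessmainlemma}, which gives that the partition $\bagsipr$ is $\max\{2,\beta(\bagsipr)\}$-robust. Since $2 + 2/\alpha > 2$ for every $\alpha \in (0,1)$, this yields the desired $(2+2/\alpha)$-robustness of \ipb \ with $\rho=4$. The proof requires no new ideas beyond stitching the lemmas together; the only point to be careful about is verifying the two numerical inequalities (that $4 \leq 2+2/\alpha$ on $(0,1)$ and that the $\ell$-dependent expression is uniformly bounded by $2+2/\alpha$), which is where the choice $\rho = 4$ becomes essential --- a smaller $\rho$ would not be dominated by $2+2/\alpha$, and a larger one would inflate the bound from Lemma~\ref{sizeratio}.
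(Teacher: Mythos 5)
Your proof is correct and follows essentially the same chain as the paper's: Lemma~\ref{lem:non-decre} supplies the monotonicity hypothesis, Lemma~\ref{bmax} gives $c_1 = \frac{2}{\ell+1}$, $c_2=0$ in Lemma~\ref{sizeratio}, and Lemma~\ref{lem:robustnessmainlemma} converts the bound $\beta \le 2+2/\alpha$ into robustness. One small aside: your closing remark misstates why $\rho=4$ is essential --- a smaller $\rho$ would still be dominated by $2+2/\alpha$ in the maximum; the real reason for $\rho=4$ is that the proofs of Lemma~\ref{lem:non-decre} (which needs $b^- > 2\bmin^{(i)}$) and Lemma~\ref{bmax} rely on it.
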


\begin{proof}
For the robustness, we first note that if $\rho = 4$, by Lemma~\ref{lem:non-decre} we have that at each iteration $i$ of \ipb, $b_{\min}^{(i+1)} \geq b_{\min}^{(i)}$. Also, by Lemma~\ref{bmax} we know that $b_{\max} \geq \frac{2}{l+1}W$ when $\rho = 4$. Therefore, we can apply Lemma~\ref{sizeratio} with $\rho = 4$, $c_1 = \frac{2}{\ell+1}$ and $c_2 = 0$, and we have that 
$\beta \leq \max\{4, \frac{2(\ell-1)}{\ell+1} + \frac{2\ell  }{\alpha(\ell+1)}\} \le 2+\frac{2}{\alpha}$. Thus, by Lemma~\ref{lem:robustnessmainlemma}, the robustness of the returned partition by \ipb \ with $\rho =4$ is $(2+2/\alpha)$.
\end{proof}
The consistency-robustness trade-off is shown in \cref{fig:tradeoff}.
We note that \cref{lem:robustness}, together with the lower bound (\cref{LB-2-speeds}), implies that,  ignoring constant factors, \ipb \ achieves the optimal $1/\alpha$ rate of increase of the robustness.

\begin{figure}[t]
 	\centering
 	\subfigure{\includegraphics[width=0.49\textwidth]{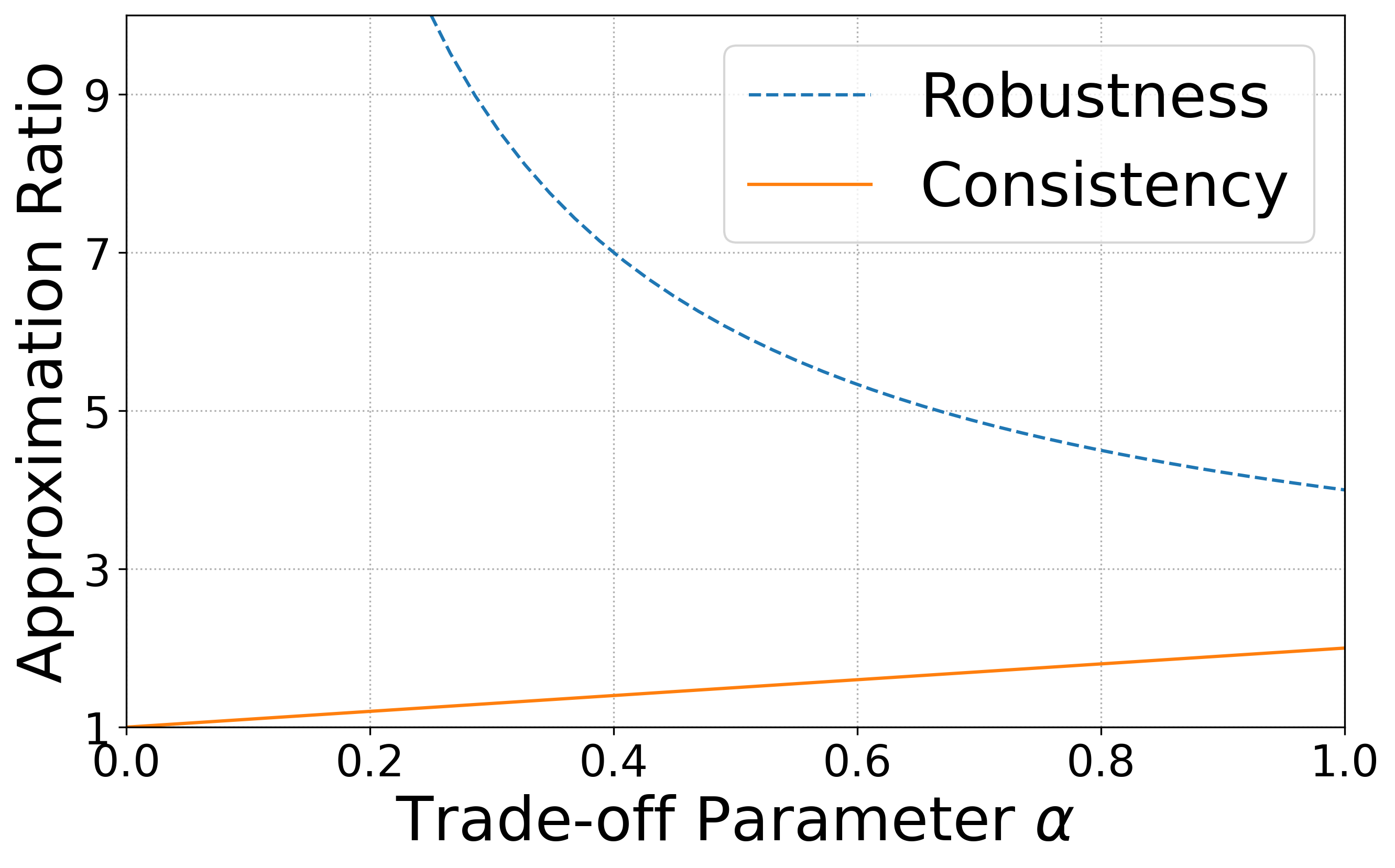}}
    \subfigure{\includegraphics[width=0.49\textwidth]{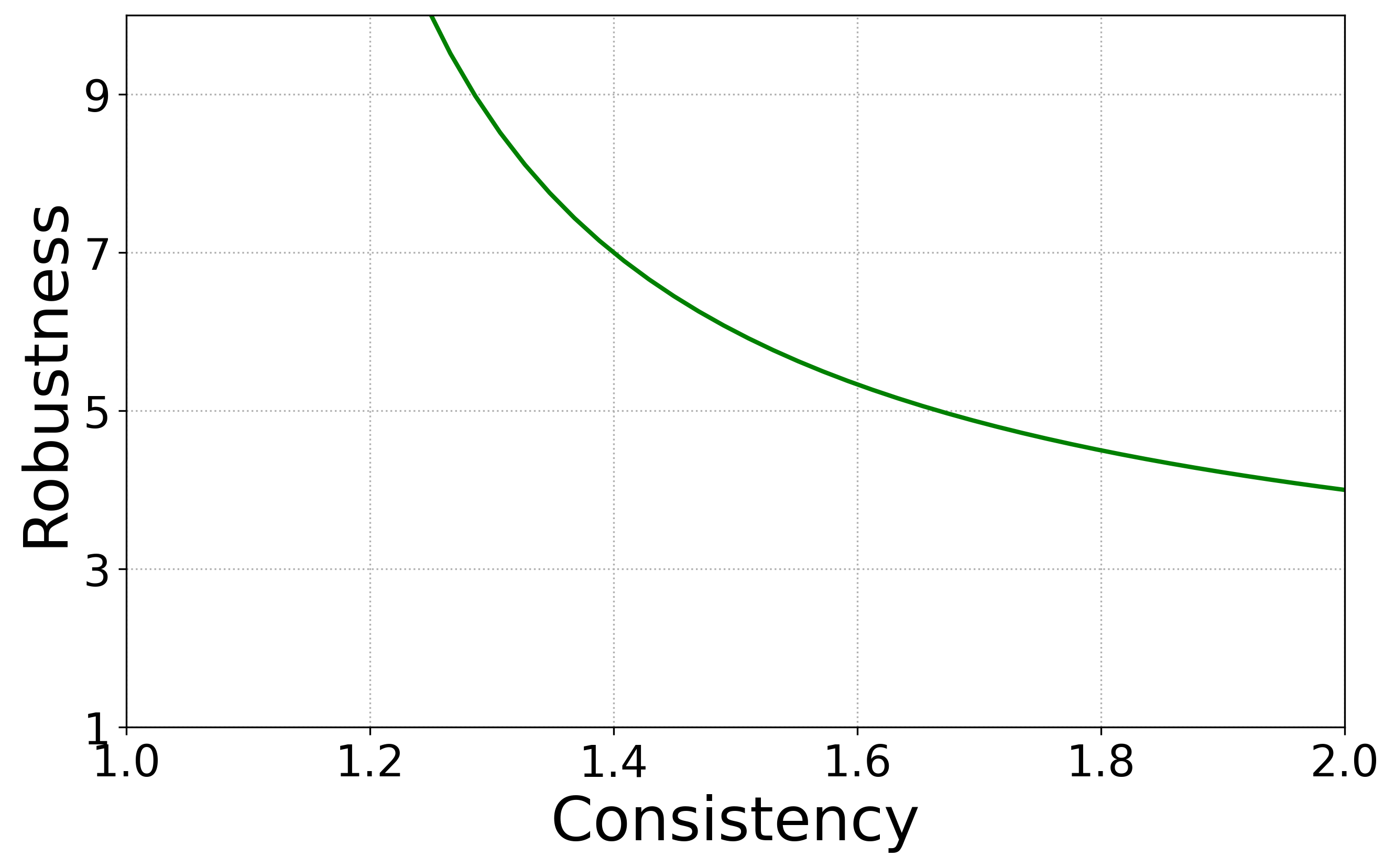}} 
    \caption{The consistency and robustness guarantee of the $\ipb$ algorithm under different parameter $\alpha$ (left) and the trade-off between consistency and robustness of the $\ipb$ algorithm (right).}
    \label{fig:tradeoff}
 \end{figure}

\subsubsection{Analysis of the algorithm's approximation as a function of the prediction error}

We extend the  consistency and robustness results for \ipb \ to obtain our main result. We show that for the SSP problem, the algorithm that runs \ipb \ in the partitioning stage and then a PTAS in the scheduling stage achieves an approximation ratio that gracefully degrades as a function of the prediction error $\eta$ from $(1+\epsilon)(1+\alpha)$ to $(1+\epsilon)(2 + 2/\alpha)$. 
The proof follows from the results for consistency (\cref{lem:consistency}) and robustness (\cref{lem:robustness}) to obtain the $\eta^2(1+\epsilon)(1+\alpha)$ and  $(1+\epsilon)(2 + 2/\alpha)$ bounds on the approximation. We give its proof in Appendix~\ref{sec:appanalysis}.


\begin{theorem}
\label{thm-general}
Consider the algorithm that runs \ipb \ with $\rho = 4$ in the partitioning stage and a PTAS for makespan minimization in the scheduling stage.  For any constant $\epsilon \in (0,1)$ and any $\alpha \in (0,1)$, this algorithm achieves a 
$\min\{\eta^2(1+\epsilon)(1+\alpha), (1+\epsilon)(2 + 2/\alpha)\}$ approximation for SSP where $\eta = \max_{i \in [m]} \frac{\max\{\hat{s}_i, s_i\}}{\min\{\hat{s}_i, s_i\}}$ is the prediction error.
\end{theorem}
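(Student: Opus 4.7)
The plan is to derive the two approximation factors separately and then take their minimum. Both arguments start from the observation that the PTAS used in the scheduling stage produces, for any fixed partition $\B$, a schedule whose makespan is within a factor $(1+\epsilon)$ of the optimal assignment of $\B$ to machines with speeds $\s$. Thus it suffices to exhibit, for each bound, an assignment of the partition returned by \ipb \ whose makespan is within the desired factor of $opt(\p,\s)$.

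For the $(1+\epsilon)(2+2/\alpha)$ factor, I would simply combine Lemma~\ref{lem:robustness} with the PTAS guarantee: \ipb \ with $\rho = 4$ returns a partition admitting some assignment to $\s$ with makespan at most $(2+2/\alpha)\,opt(\p,\s)$, so the PTAS outputs a schedule of makespan at most $(1+\epsilon)(2+2/\alpha)\,opt(\p,\s)$. For the $\eta^2(1+\epsilon)(1+\alpha)$ factor, I would chain four inequalities via a standard scaling argument applied to the tentative assignment $\M_1, \ldots, \M_m$ that \ipb \ maintains at termination. Line~\ref{li:ipr-if} of the algorithm guarantees that this assignment has makespan at most $(1+\alpha)\OPTC$ under $\spred$, and by construction of the initial $(1+\epsilon)$-consistent partition, $\OPTC \leq (1+\epsilon)\,opt(\p,\spred)$. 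Since $s_i \geq \hat{s}_i/\eta$ for every $i$, the same assignment has makespan under $\s$ at most $\eta$ times its makespan under $\spred$. A symmetric scaling gives $opt(\p,\spred) \leq \eta \cdot opt(\p,\s)$, obtained by evaluating the optimum schedule for $(\p,\s)$ under predicted speeds and noting that each per-machine load grows by at most a factor $\eta$. Chaining, the tentative assignment has true-speed makespan at most $\eta^2(1+\alpha)(1+\epsilon)\,opt(\p,\s)$, and the PTAS then produces a schedule within another factor $(1+\epsilon)$ of this quantity.

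Taking the minimum of the two bounds yields the theorem. The only subtlety is that the consistency chain formally produces a $(1+\epsilon)^2$ rather than a $(1+\epsilon)$ factor; this is absorbed into a single $(1+\epsilon)$ by the standard reparametrization of running both the initial consistent partitioning and the scheduling-stage PTAS with accuracy parameter $\epsilon'$ chosen so that $(1+\epsilon')^2 \leq 1+\epsilon$. I do not anticipate a serious obstacle here: all of the heavy combinatorial work already lives in Lemmas~\ref{lem:consistency} and~\ref{lem:robustness}, and what remains is an accounting exercise that translates makespans across the two speed vectors through the multiplicative ratio $\eta$ in both directions.
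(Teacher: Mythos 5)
Your proposal is correct and follows essentially the same route as the paper's proof: the robustness bound is Lemma~\ref{lem:robustness} plus the PTAS factor, and the consistency-with-error bound chains the $(1+\alpha)(1+\epsilon)$ guarantee under $\spred$ with one factor of $\eta$ to move the tentative assignment's makespan from $\spred$ to $\s$ and a second factor of $\eta$ to relate $opt(\p,\spred)$ to $opt(\p,\s)$, exactly as in Appendix~\ref{sec:appanalysis}. The $(1+\epsilon')^2 = 1+\epsilon$ reparametrization you mention is also precisely how the paper absorbs the two PTAS losses into a single $(1+\epsilon)$.
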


If we do not care about the computation runtime; that is, we can solve each scheduling problem optimally including the initial step of \ipb \ in the partition stage and the scheduling stage, then our result improves to a $\min\{\eta^2(1+\alpha), (2 + 2/\alpha)\}$ approximation.


\subsubsection{Analysis of the algorithm's running time} 

In Appendix~\ref{sec:appanalysis}, we show that the main algorithm performs $O(m^2)$ iterations, which implies that its running time is polynomial in $n$ and $m$.

\begin{lemma}
\label{runtime}
    At most $O(m^2)$ iterations are needed for \ipb \ with $\rho = 4$ to terminate.
\end{lemma}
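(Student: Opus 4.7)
The plan is to argue by an amortized analysis that leverages the monotonicity of $b_{\min}$ from \cref{lem:non-decre}. The first step I would take is to extract a slightly stronger conclusion from that lemma's proof: its final chain of displayed inequalities closes with the \emph{strict} bound $\min_{B \in \postMmax} p(B) > b_{\min}^{(i)}$, not merely the weak form used to state \cref{lem:non-decre}. Consequently, at every iteration $i$, none of the $\ell$ new bags produced by \lptrb \ inside $\M_{\max}$ has processing time $\leq b_{\min}^{(i)}$, whereas the bag $B_{\min}^{(i)}$ itself (of size exactly $b_{\min}^{(i)}$) is destroyed, along with every other bag of the old $\M_{\max}^{(i)}$ that gets replaced by the LPT output.

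Next, I would group iterations into \emph{epochs}, declaring that a new epoch begins precisely when $b_{\min}$ strictly increases. Fix an epoch in which $b_{\min}$ has common value $X$, and consider the count $N_X(t) = |\{B \in \bigcup_i \M_i(t) : p(B) \leq X\}|$. The sharpening of \cref{lem:non-decre} above yields $N_X(t+1) \leq N_X(t) - 1$ throughout the epoch: the bag $B_{\min}$ of size $X$ is destroyed at each iteration, any additional bags of size $\leq X$ that happened to sit in the old $\M_{\max}^{(t)}$ are also destroyed, and no bag of size $\leq X$ is newly created. Since $N_X \leq m$ always, every epoch contains at most $m$ iterations before its supply of size-$\leq X$ bags is exhausted and $b_{\min}$ must jump upward.

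The remaining task, and the main obstacle, is to bound the number of epochs by $O(m)$. The plan is to identify, for each epoch, a distinct structural witness drawn from an $O(m)$-size set. At the moment $b_{\min}$ jumps to a new value $v$, there must be a bag of size exactly $v$ in the current $m$-bag partition; this witness bag was already present at the preceding epoch's end and was not destroyed during it. One would then charge each epoch-start to a distinct such surviving bag, using the fact that at any single moment there are only $m$ bags to serve as witnesses, together with the observation that witnesses of distinct epochs have distinct processing times (hence cannot coincide within a single snapshot). Completing this charging rigorously is the delicate step, since across iterations new bags can enter the partition. If that route proves unwieldy, an alternative is to bypass epochs and build a single monovariant $\Psi$ (for instance, a lexicographic rank of the sorted bag-size multiset, viewed as a sequence of length $m$ over a carefully chosen finite ordered alphabet) that strictly progresses each iteration and whose total range is $O(m^2)$. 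Either route yields the claimed bound $T = O(m^2)$, which together with the polynomial cost of each iteration (one LPT call on at most $n$ jobs) gives a polynomial total running time for \ipb.
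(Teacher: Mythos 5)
Your first two steps are sound: the strict inequality $\min_{B\in \postMmax}p(B) > \bmin^{(i)}$ is indeed extractable from the proof of \cref{lem:non-decre} (all bags of $\C^+\cup\C^-$ have processing time exceeding $b^- > 2\bmin^{(i)}$ or exceeding the largest non-singleton bag, and the displayed chain handles the rest), and it does give that within a maximal stretch of iterations with $\bmin$ constant equal to $X$, the number of bags of processing time exactly $X$ strictly decreases, so such a stretch has at most $m$ iterations. The gap is the step you yourself flag: you never bound the number of epochs, and the sketched charging argument does not work. When $\bmin$ jumps to a new value $v$, the bag realizing $v$ may be one of the bags \emph{freshly created} by the \lptrb\ call that closed the previous epoch (its size is only guaranteed to exceed the old $\bmin$), so it need not be a bag that ``was already present at the preceding epoch's end and was not destroyed.'' Moreover, witnesses of different epochs live in different snapshots of the partition; since bags are destroyed and recreated, the fact that each snapshot contains only $m$ bags gives no bound on the number of epochs. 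A priori the number of distinct values taken by $\bmin$ is bounded only by the number of distinct bag sizes, which can be exponential in $n$, and nothing in your argument rules out, say, $\Theta(m^2)$ epochs of length one --- in which case your per-epoch bound yields nothing better than the trivial count. The fallback ``monovariant $\Psi$ of range $O(m^2)$'' is asserted, not constructed, so the proof is incomplete at its central counting step.

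For contrast, the paper counts over machine-indexed collections rather than over values of $\bmin$: it shows that once a collection $\M_j$ has played the role of $\Mmin$ (donated its smallest bag), it can never later become $\Mmax$ without the algorithm terminating --- its largest non-singleton bag is at most twice its old minimum (Lemma~\ref{lem-LPT}), hence at most $2\bmin^{(i_2)}\le \rho\,\bmin^{(i_2)}$ by the monotonicity of \cref{lem:non-decre}, so the while-condition fails. Consequently such a collection only loses bags from then on, so each of the $m$ collections can serve as $\Mmin$ at most $m$ times, and since every iteration designates exactly one $\Mmin$, there are at most $m^2$ iterations. If you want to salvage your route, you would need an analogous structural fact that caps the number of $\bmin$-increases by $O(m)$, which is precisely what is missing.
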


\section{Special Cases}
\label{sec:special-cases}

In this section, we consider two different special cases where all job processing times are either equal or infinitesimal (\cref{sec:specialcasesjobsizes}), or all machines have speeds either 0 or 1 (\cref{sec:specialcase0-1speeds}).  These special cases were considered without predictions in \cite{EHMNSS20} and \cite{SZ18}.

In the first special case where all job processing times are either equal or infinitesimal, we prove that the $\ipb$ algorithm with $\rho = 2$ achieves a better robustness.
In the second special case where the machine speeds are in $\{ 0,1 \}$, we no longer use the $\ipb$ algorithm. Instead, we propose a new partitioning algorithm that is $(1+\epsilon)$-consistent and $2(1+\epsilon)$-robust in this special case. We also prove that for any $\alpha \in [0,1/2)$, any deterministic $(1+\alpha)$-consistent algorithm has robustness at least $(4-2\alpha)/3$.

\subsection{Improved trade-off by IPR for special job processing times}
\label{sec:specialcasesjobsizes}

In this section, we show improved bounds on the robustness of \ipb \ for the special cases where the jobs either have equal processing time or are infinitesimally small. To obtain these improved bounds, we use \ipb \ with parameter $\rho = 2$ instead of $\rho = 4$.  

\subsubsection{Equal-size jobs}

In the case where jobs have equal processing times, i.e. $p_j = 1$ for all $j\in [n]$, we obtain an improved robustness of $(2+ 1/\alpha)$ for the returned partition by \ipb \ with $\rho = 2$.

\begin{theorem}
\label{equal-sized} If $p_j = 1$ for all $j \in [n]$, then, for any constant $\epsilon \in (0,1)$ and any $\alpha \in (0,1)$, \ipb \ with $\rho = 2$ is a $(1+\epsilon)(1+\alpha)$-consistent and $(2+1/\alpha)$-robust partitioning algorithm.
\end{theorem}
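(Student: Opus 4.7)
The plan is to mirror the three-step general analysis---monotonicity of $\bmin$, an upper bound on $\bmax$, and a resulting bound on $\beta$---but to exploit integrality of bag totals (each $p_j = 1$) to push through the same program with $\rho = 2$ instead of $\rho = 4$. Consistency for any $\rho \geq 1$ is immediate from \cref{lem:consistency}, so the task reduces to showing $\beta(\bagsipr) \leq 2 + 1/\alpha$ and applying \cref{lem:robustnessmainlemma}.

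The first and most delicate step is to re-establish the monotonicity $b_{\min}^{(i+1)} \geq b_{\min}^{(i)}$ under $\rho = 2$. The argument in \cref{lem:non-decre} used the while-loop condition combined with \cref{lem-LPT} to obtain $b^- > 2 b_{\min}^{(i)}$; with $\rho = 2$ the same chain only yields $b^- > b_{\min}^{(i)}$. Integrality closes the gap: since bag totals are positive integers in the equal-size case, $b^- > b_{\min}^{(i)}$ upgrades to $b^- \geq b_{\min}^{(i)} + 1$, which is exactly strong enough to carry the rest of \cref{lem:non-decre}'s computation through and conclude that the minimum bag in the rebalanced $\M_{\max}$ is at least $b_{\min}^{(i)}$.

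Next, I would tighten \cref{bmax}: LPT on unit jobs produces $\ell$ bags whose sizes lie in $\{\lfloor W/\ell \rfloor, \lceil W/\ell \rceil\}$, so $\bmax \leq \lceil W/\ell \rceil \leq W/\ell + 1 - 1/\ell$, where $W$ and $\ell$ refer to the returned $\M_{\max}$. Then I would case on termination. If the while loop exits normally, $\beta \leq 2$. Otherwise the algorithm terminates at some iteration $T$ by a consistency violation $(W + \bmin)/\hat{s}_j > (1+\alpha)\OPTC$; combining this with the accumulation bound $W \leq \hat{s}_j \OPTC + (\ell - 1) \bmin$ (by monotonicity each of the $\ell - 1$ bags added to $\M_j$ has size at most $\bmin$, plus $p(B_j) \leq \hat{s}_j \OPTC$ initially) yields $\bmin > \alpha \hat{s}_j \OPTC / \ell$, the analog of case $(b)$ in \cref{sizeratio}. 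Plugging the $\bmax$ and $\bmin$ bounds into $\beta = \bmax/\bmin$ and using $\bmin \geq 1$ (positive integer) to absorb the $(1 - 1/\ell)/\bmin$ term yields $\beta < 1/\alpha + (\ell-1)/\ell + (1 - 1/\ell) < 2 + 1/\alpha$, so \cref{lem:robustnessmainlemma} gives robustness $\max\{2, \beta\} \leq 2 + 1/\alpha$.

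The main obstacle is the monotonicity step: without integrality, $b^- > \bmin$ does not upgrade to $b^- \geq \bmin + 1$, so the \cref{lem:non-decre}-style argument genuinely breaks at $\rho = 2$ in the general case---this is exactly why the general algorithm uses $\rho = 4$. A secondary point is the stray additive ``$1$'' in $\bmax \leq W/\ell + 1 - 1/\ell$, which would produce an unbounded $\ell/(\alpha \hat{s}_j \OPTC)$ term if fed naively into the general \cref{sizeratio}; integrality rescues us again via $\bmin \geq 1$, which is specific to unit jobs.
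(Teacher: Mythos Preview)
Your overall plan is right, and both the $\bmax \leq \lceil W/\ell\rceil$ bound and the direct derivation of $\bmin > \alpha\hat{s}_j\OPTC/\ell$ from the termination inequality $(W+\bmin)/\hat{s}_j > (1+\alpha)\OPTC$ together with $W \leq \hat{s}_j\OPTC + (\ell-1)\bmin$ are correct; the latter even collapses the two cases of \cref{sizeratio} into one line. The gap is in the monotonicity step.

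You propose to upgrade $b^- > \bmin^{(i)}$ to $b^- \geq \bmin^{(i)}+1$ by integrality and then ``carry the rest of \cref{lem:non-decre}'s computation through.'' But the rest of that computation lower-bounds the minimum rebalanced bag via \cref{lem-simple}, i.e., by $(\text{total})/(2k-1)$, where $k$ is the number of bags in $\postMmax$ (the singleton classes used in that proof are empty for unit jobs, so $k$ is simply $|\postMmax|$). With total $\geq (k-1)b^- + \bmin^{(i)} \geq (k-1)(\bmin^{(i)}+1) + \bmin^{(i)} = k\bmin^{(i)} + (k-1)$, the resulting bound
\[
\frac{k\bmin^{(i)} + (k-1)}{2k-1} \;\geq\; \bmin^{(i)}
\]
is equivalent to $\bmin^{(i)} \leq 1$, so the argument fails as soon as $\bmin^{(i)} \geq 2$. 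Integrality of $b^-$ alone cannot compensate for the $2k-1$ denominator coming from \cref{lem-simple}.

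What does work---and what the paper's \cref{lem:nondecre-eq} does---is to replace the \cref{lem-simple} step by the exact LPT-on-unit-jobs property you already invoke for $\bmax$: after rebalancing, the \emph{minimum} bag is $\lfloor W/k\rfloor$. With $W \geq k\bmin^{(i)} + (k-1) \geq k\bmin^{(i)}$ this gives $\lfloor W/k\rfloor \geq \bmin^{(i)}$ immediately. (The paper lower-bounds $W$ slightly differently, using that all bags in $\preMmax$ lie within~$1$ of $\bmax^{(i)} > 2\bmin^{(i)}$, but either estimate works once the $\lfloor W/k\rfloor$ bound replaces \cref{lem-simple}.) So the missing ingredient is not a sharper bound on $b^-$ but a sharper post-rebalance bound specific to unit jobs---and without it your accumulation bound $W \leq \hat{s}_j\OPTC + (\ell-1)\bmin$, which relies on monotonicity, is unjustified as well.
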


The proof of \cref{equal-sized} relies on the following two lemmas. The first lemma, Lemma~\ref{lem:nondecre-eq}, shows the monotonicity property for $\bmin^{(i)}$ when $\rho = 2$ in the equal-size jobs case.

\begin{lemma}
\label{lem:nondecre-eq}
    Assume all jobs have the same total processing time, i.e. $p_j = 1$ for all $j \in [n]$. At each iteration $i$ of \ipb \ with $\rho = 2$, $\bmin^{(i+1)} \geq \bmin^{(i)}$.
\end{lemma}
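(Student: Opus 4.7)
My plan is to mirror the proof of \cref{lem:non-decre}, leveraging two simplifications specific to the equal-size regime. First, because every bag's processing time equals its cardinality, all bag sizes are nonnegative integers. Second, LPT applied to $N$ unit-size jobs distributed into $\ell$ bags produces bags of cardinality $\lfloor N/\ell \rfloor$ or $\lceil N/\ell \rceil$; this strengthens \cref{lem-LPT} in the sense that the outputs of a single LPT call satisfy $\bmax - b^- \leq 1$, rather than only $\bmax / b^- \leq 2$.

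I fix an iteration $i$ and reuse the notation $\preMmax$, $\postMmax$, $\ell = |\postMmax|$, and $b^- = \min_{B \in \preMmax} p(B)$ from the proof of \cref{lem:non-decre}. The goal is to show that $\min_{B \in \postMmax} p(B) \geq \bmin^{(i)}$, which suffices because the only other bags that change between iterations $i$ and $i+1$ lie in $\Mmin$, and removing $\Bmin^{(i)}$ leaves only bags of processing time at least $\bmin^{(i)}$ there. The key step is to prove $b^- \geq 2\bmin^{(i)}$. When $|\preMmax| = 1$, its single bag equals $\Bmax$, and the while-loop condition with $\rho = 2$ gives $b^- = \bmax > 2\bmin^{(i)}$ directly. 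When $|\preMmax| \geq 2$, I argue inductively that the bags in $\preMmax$ were produced by an earlier LPT rebalancing, possibly with a few bags later removed in intermediate iterations as $\Bmin$. Since $\Bmin$ is always the globally smallest bag, each such removal extracts a smallest bag of $\preMmax$, so the remaining bags continue to differ in size by at most one. Combining $\bmax - b^- \leq 1$ with $\bmax > 2\bmin^{(i)}$ and integrality yields $b^- \geq 2\bmin^{(i)}$. The total processing time of $\postMmax$ is then at least $(\ell - 1) b^- + \bmin^{(i)} \geq (2\ell - 1)\bmin^{(i)}$, and the tighter equal-size LPT bound gives $\min_{B \in \postMmax} p(B) \geq \lfloor (2\ell - 1)\bmin^{(i)} / \ell \rfloor \geq \bmin^{(i)}$, completing the argument.

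The main obstacle I anticipate is formalizing the inductive invariant: across iterations, the bags of any multi-bag collection $\M_j$ always differ in size by at most one. This invariant must survive the only two operations that can modify $\M_j$, namely being selected as $\Mmax$ (which rebuilds $\M_j$ from scratch by an LPT call) and being selected as $\Mmin$ (which removes a globally smallest bag from $\M_j$); both preserve the ``consecutive-integer-sizes'' structure. I would also need to revisit the singleton-bag bookkeeping via the sets $\C^+$ and $\C^-$ from the proof of \cref{lem:non-decre}. Under equal-size jobs a singleton bag always has processing time exactly $1$, so these sets are empty whenever $\bmin^{(i)} \geq 1$; the remaining degenerate case $\bmin^{(i)} = 0$ corresponds to empty bags in the initial partition and, as already noted in the discussion preceding \cref{sizeratio}, cannot occur once the algorithm has eliminated the empty initial bags via its first few \lptrb{} calls.
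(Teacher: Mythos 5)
Your proposal is correct and follows essentially the same route as the paper's proof: use the unit-job structure to argue that all bags in $\preMmax$ are integer-sized and within $1$ of each other (hence each of size roughly $2\bmin^{(i)}$ by the $\rho=2$ stopping condition), lower-bound the total processing time of $\postMmax$ by about $(2\ell-1)\bmin^{(i)}$, and invoke the fact that LPT splits unit jobs into bags of sizes $\lfloor N/\ell\rfloor$ or $\lceil N/\ell\rceil$ to conclude that the new minimum bag is at least $\bmin^{(i)}$. Your explicit maintenance of the ``sizes differ by at most one'' invariant under bag removals and your use of integrality where the paper instead invokes $\bmin^{(i)}\geq 1$ are only minor bookkeeping variations, not a different argument.
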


\begin{proof}
In this proof, we extensively use the notations introduced in the proof of \cref{lem:non-decre}. Consider the $i^{th}$ iteration of the \ipb~algorithm  (Line~\ref{li:ipr-while}). In this special case where all jobs have total processing time 1, the \lptrb \ subroutine will partition all of the jobs into bags such that the difference of the total processing time between any two bags is at most 1, i.e. $\forall B, B' \in \preMmax$ we have $p(B) \leq  p(B') + 1$. Note that we have $\bmax^{(i)} > 2 \bmin^{(i)}$, otherwise the algorithm will not execute the $(i+1)^{th}$ iteration. Let $\ell_i$ be the number of bags in $\preMmax$ before we add $\Bmin^{(i)}$. 

The sum of the total processing time of bags in $\preMmax$ after we add $\Bmin^{(i)}$ is:
\begin{align*}
    \sum_{B\in \preMmax} p(B) + \bmin^{(i)} & >  (\ell_i - 1) (2\bmin^{(i)}-1) + \bmax^{(i)} + \bmin^{(i)} > (2\ell_i + 1)\bmin^{(i)} - (\ell_i - 1).
\end{align*}

Next, we argue that the minimum processing time of a bag in $\Mmax^{(i)}$ after balancing is at least $\bmin^{(i)}$, i.e. $\delta \geq \bmin^{(i)}$. If all jobs have the same total processing time, $\lptrb$ will simply assign the $j^{th}$ job to the ($j$ mod $(\ell_i + 1)$)$^{th}$ bag. If we need all the $(\ell_i + 1)$ LPT-rebalanced bags to have total processing time at least $\bmin^{(i)}$, we only need to make sure there are at least $\bmin^{(i)}(\ell_i + 1)$ jobs as input. Since $\bmin^{(i)} \geq 1$ in this special case, we have:
\begin{align*}
    \bmin^{(i)}(\ell_i + 1) & = (2\ell_i + 1)\bmin^{(i)} - \ell_i \bmin^{(i)}\\
    & \leq (2\ell_i + 1)\bmin^{(i)} - \ell_i \\
    & \leq (2\ell_i + 1)\bmin^{(i)} - (\ell_i - 1) \\
    & < \sum_{B\in \preMmax}p(B) + \bmin^{(i)}
\end{align*}
which shows that $\delta \geq \bmin^{(i)}$. Thus, we have $\bmin^{(i+1)} \geq \min \{ \bmin^{(i)}, \delta \} = \bmin^{(i)}$. 
\end{proof}

The second lemma, Lemma~\ref{bmax-equalsized}, gives an upper bound of $b_{\max}$. 

\begin{lemma}\label{bmax-equalsized}
Let $\bagsipr = \{B_1, \ldots, B_m\}$ be the partition of the $n$ jobs returned by \ipb \ with $ \rho = 2 $. Then $\bmax \leq \frac{W}{\ell}+1$ in the case of equal-size jobs.
\end{lemma}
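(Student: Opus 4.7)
The plan is to mirror the contradiction-based argument behind \cref{bmax} while exploiting the stronger balancing guarantee that LPT enjoys on unit-size jobs: when every $p_j = 1$, LPT reduces to round-robin assignment, so any \lptrb \ call produces a collection whose bag processing times pairwise differ by at most $1$.

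The first step is to establish that this ``within $1$'' invariant still holds for $\Mmax$ at the moment \ipb \ terminates. If $\Mmax$ was never chosen as the recipient collection during the run, then it is still the singleton $\{B_j\}$ placed there by the initial partition; in that case $\ell = 1$, $\bmax = W$, and the bound $\bmax \leq W/\ell + 1$ is immediate. Otherwise, let $t$ be the last iteration in which $\Mmax$ was selected as the recipient collection. Right after iteration $t$, \lptrb \ freshly repartitioned all of $\Mmax$'s jobs, so the invariant holds at that moment. In every later iteration, $\Mmax$ can only lose bags, namely when the global smallest bag happens to lie in $\Mmax$ and $\Mmax$ plays the role of $\M_{\min}$; it can never gain a bag, since by the pseudocode only the collection chosen as $\M_{\max}$ in the current iteration receives bags, and by definition of $t$ that is no longer this one. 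Because the ``within $1$'' property is preserved under taking subsets of bags, it survives to termination.

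With the invariant in hand, the remainder is a short counting step. Since $\Bmax \in \Mmax$ and every other bag $B \in \Mmax$ satisfies $p(B) \geq \bmax - 1$, summing over the $\ell$ bags of $\Mmax$ gives
\[
W \;\geq\; \bmax + (\ell - 1)(\bmax - 1) \;=\; \ell\, \bmax - (\ell - 1),
\]
which rearranges to $\bmax \leq W/\ell + 1 - 1/\ell \leq W/\ell + 1$. The main obstacle is the structural step: one must verify that the $B_{\min}$ removals after iteration $t$ really do leave $\Mmax$ only shrinking, and never rearranging or merging its bags, so that the fresh LPT structure is preserved until termination. Once this monotonic-shrinking observation is in place, the counting inequality is routine.
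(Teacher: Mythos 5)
Your proposal is correct and follows essentially the same route as the paper: the paper's one-line proof likewise observes that on unit-size jobs the rebalancing that produced $\Mmax$ distributes its $W$ jobs evenly into $\ell$ bags, giving $\bmax \leq \lceil W/\ell \rceil \leq W/\ell + 1$. Your write-up merely makes explicit the bookkeeping the paper leaves implicit (the never-rebalanced singleton case and the fact that later iterations can only remove bags from $\Mmax$, which preserves the within-$1$ property), so it is a valid, slightly more careful rendering of the same argument.
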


\begin{proof}
Since all of the jobs have processing time 1, the balancing process that generated $\Mmax$ simply distributed $W$ unit-length jobs into $\ell$ bags evenly, so $\bmax \leq \ceil{\frac{{W}}{{\ell}}} \leq W/\ell + 1$. 
\end{proof}

Combining these two lemmas, we can apply Lemma~\ref{sizeratio} with $\rho = 2$ to prove \cref{equal-sized}.

\begin{proof}[Proof of \cref{equal-sized}]
\ipr \ is $(1+\epsilon)(1+\alpha)$-consistent by Lemma~\ref{lem:consistency}. 
To prove the robustness, by Lemma~\ref{lem:nondecre-eq} and \ref{bmax-equalsized}, we can apply Lemma~\ref{sizeratio} with $\rho = 2$, $c_1 = \frac{1}{\ell}$ and $c_2 = 1$. 
Then, $\beta \leq \max\{2, \left(\frac{(\ell-1)}{\ell}+\frac{1}{\alpha}\right) + \frac{{\ell}}{\alpha\hat{s}_{j}\OPTC}\} $. Note that in the proof of Lemma~\ref{sizeratio} we have shown that $\bmin \geq (\alpha/{\ell}){\hat{s}}_{j}\OPTC$, so $\frac{{\ell}}{\alpha\hat{s}_{j}\OPTC} \leq \frac{1}{b_{\min}} \leq 1$. 
Therefore, the robustness bound
$\beta \leq \max\{2, \left(\frac{(\ell-1)}{\ell}+\frac{1}{\alpha}\right) + \frac{{\ell}}{\alpha\hat{s}_{j}\OPTC}\} \le 2 + \frac{1}{\alpha}$.
\end{proof}

\subsubsection{Infinitesimal jobs}

We also consider the “continuous” case with infinitesimal jobs, i.e,  there are infinitely many jobs with processing time $p_j = p$ for some extremely small $p > 0$. In this setting, in the partitioning stage, it's possible to divide the load of all jobs into $m$ bags, each of which can have an arbitrary total processing time. Again, to apply Lemma~\ref{sizeratio}, we need the monotonicity property of $b_{\min}^{(i)}$ (implied by Lemma~\ref{lem:nondecre-eq}) and an upper bound of $b_{\max}$ (proved in Lemma~\ref{bmax-infi}). 

\begin{lemma}\label{bmax-infi}
Let $\bagsipr = \{B_1, \ldots, B_m\}$ be the partition of the $n$ jobs returned by \ipb \ with $ \rho = 2 $. Then $\bmax \leq \frac{W}{\ell}$ in the case of infinitesimal jobs. 
\end{lemma}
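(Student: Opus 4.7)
The plan is to mirror the proof of \cref{bmax-equalsized} but exploit the fact that with infinitesimal job processing times the LPT-Rebalance subroutine produces a partition of $J_{\max}$ into $\ell$ bags whose processing times are arbitrarily close to one another. Specifically, in the equal-size case the only reason we picked up the additive ``$+1$'' term was the granularity of unit-sized jobs forcing $\bmax \le \lceil W/\ell \rceil$; when $p_j = p$ for all $j$ with $p \to 0$, the same argument yields $\bmax \le W/\ell + p$, and passing to the limit gives $\bmax \le W/\ell$.

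First I would identify $\Mmax$ and observe that at the moment \ipb \ terminates, the bags currently sitting inside $\Mmax$ are precisely the $\ell$ bags produced by the most recent application of \lptrb \ to that collection (any later iteration would either have modified $\Mmax$ again, or been reverted because the consistency test failed, in which case the bags in $\Mmax$ are still exactly those produced by the previous \lptrb \ call to it). So it suffices to analyze the output of one invocation of \lptrb \ on a job set of total processing time $W$ with infinitesimal jobs.

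Next I would argue that with infinitesimal jobs, the greedy LPT rule ``assign the next job to the currently least-loaded bag'' drives all $\ell$ bag loads to be equal up to a single infinitesimal job. Indeed, whenever the current maximum-load bag exceeds the current minimum-load bag by more than $p$, the next job of size $p$ is added to the minimum-load bag; since $p$ is arbitrarily small, at termination every bag in $\Mmax$ has load at most $W/\ell + p$. Taking $p \to 0$ (which is the formal meaning of the ``continuous/infinitesimal'' model used here and in \cite{EHMNSS20}) gives $\bmax \le W/\ell$.

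I do not expect a real obstacle: the claim is essentially the continuous-limit version of \cref{bmax-equalsized}, with the rounding slack $+1$ disappearing because the job granularity does. The only mildly delicate point is bookkeeping about \emph{which} state of $\Mmax$ is in force at termination (the pre- or post-balance one), which is handled by the observation in the first paragraph that $\Mmax$ is always in ``just-balanced'' form when the algorithm returns.
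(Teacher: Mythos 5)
Your proposal is correct and follows essentially the same route as the paper, whose proof simply observes that with infinitesimal jobs the rebalancing is perfect, so every bag in $\Mmax$ has total processing time exactly $W/\ell$; your $p\to 0$ limiting argument and the bookkeeping about $\Mmax$ being in just-balanced form are only a more explicit rendering of that same observation.
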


\begin{proof}
Since the jobs are infinitesimal, the re-balancing process can be done perfectly, and within each group, all bags have the same total processing time. Therefore, each bag in $\Mmax$ has the same total processing time $W/\ell$.
\end{proof}

Using \cref{sizeratio}, we obtain an improved robustness of $(1+ 1/\alpha)$ for the returned partition by \ipb \ with $\rho = 2$ when all the jobs are infinitesimal.

\begin{theorem}
\label{Infinitesimal} If all jobs are infinitesimal, then, for any constant $\epsilon \in (0,1)$ and any $\alpha \in (0,1)$,  \ipb \ with $\rho = 2$  is a $(1+\epsilon)(1+\alpha)$-consistent and $(1+1/\alpha)$-robust partitioning algorithm.
\end{theorem}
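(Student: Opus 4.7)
The plan is to mirror the structure used for the equal-size case (Theorem~\ref{equal-sized}) and reduce everything to an application of Lemma~\ref{sizeratio} combined with Lemma~\ref{lem:robustnessmainlemma}. The $(1+\epsilon)(1+\alpha)$-consistency part is immediate: Lemma~\ref{lem:consistency} already gives this bound for \ipb \ with any $\rho \geq 1$, and in particular for $\rho = 2$, so no additional work is required on that front.

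For the robustness bound, I would apply Lemma~\ref{sizeratio} with parameters $\rho = 2$, $c_1 = 1/\ell$, and $c_2 = 0$. To do so, I need to verify the two hypotheses of Lemma~\ref{sizeratio}: (i) the monotonicity $b_{\min}^{(i+1)} \geq b_{\min}^{(i)}$ at every iteration, and (ii) an upper bound of the form $b_{\max} \leq c_1 W + c_2$. Hypothesis (ii) is exactly Lemma~\ref{bmax-infi}, which gives $b_{\max} \leq W/\ell$ because perfect re-balancing of infinitesimal jobs equalizes all $\ell$ bags in $\M_{\max}$. Hypothesis (i) follows from the same argument used in Lemma~\ref{lem:nondecre-eq} for the equal-size case; indeed, with infinitesimal jobs the \lptrb \ subroutine distributes the total load in $\M_{\max}$ exactly evenly among the $\ell$ new bags, so the newly created bags all have equal processing time $(\sum_{B \in \preMmax} p(B) + b_{\min}^{(i)})/\ell$, and one checks directly that this quantity is at least $b_{\min}^{(i)}$ whenever the while loop condition $b_{\max}^{(i)} > 2\,b_{\min}^{(i)}$ is satisfied at iteration $i$.

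Plugging $\rho = 2$, $c_1 = 1/\ell$, $c_2 = 0$ into Lemma~\ref{sizeratio} yields
\begin{equation*}
\beta \;\leq\; \max\Bigl\{2,\; \tfrac{1}{\ell}\bigl(\ell - 1 + \tfrac{\ell}{\alpha}\bigr)\Bigr\} \;=\; \max\Bigl\{2,\; \tfrac{\ell-1}{\ell} + \tfrac{1}{\alpha}\Bigr\} \;\leq\; \max\Bigl\{2,\; 1 + \tfrac{1}{\alpha}\Bigr\}.
\end{equation*}
Since $\alpha \in (0,1)$ implies $1/\alpha > 1$, we have $1 + 1/\alpha > 2$, so $\beta \leq 1 + 1/\alpha$. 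Applying Lemma~\ref{lem:robustnessmainlemma}, the partition returned by \ipb \ with $\rho = 2$ is $\max\{2,\beta\}$-robust, and hence $(1 + 1/\alpha)$-robust, completing the proof.

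The only step that requires genuine care is verifying monotonicity (i) in the infinitesimal regime, because the proof of Lemma~\ref{lem:nondecre-eq} exploits the fact that in the equal-size case $b_{\min}^{(i)} \geq 1$ to absorb a $-(\ell-1)$ additive term. In the infinitesimal case there is no such lower bound on $b_{\min}^{(i)}$, but this is more than compensated by the fact that re-balancing is \emph{perfectly} even rather than off by one, so the additive slack disappears entirely. I expect this to be the main technical point to write out cleanly, though it is strictly easier than the equal-size case once the perfect-division property is used.
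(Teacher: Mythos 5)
Your proof is correct and follows essentially the same route as the paper: consistency from Lemma~\ref{lem:consistency}, and robustness by verifying the monotonicity of $b_{\min}^{(i)}$ and the bound of Lemma~\ref{bmax-infi} so as to apply Lemma~\ref{sizeratio} with $\rho=2$, $c_1=1/\ell$, $c_2=0$, and then concluding via Lemma~\ref{lem:robustnessmainlemma}. The only difference is that you re-derive monotonicity directly from the perfectly even re-balancing of infinitesimal jobs, whereas the paper invokes Lemma~\ref{lem:nondecre-eq}; your explicit check is sound and, if anything, makes that step more transparent.
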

\begin{proof}
\ipr \ is $(1+\epsilon)(1+\alpha)$-consistent by Lemma~\ref{lem:consistency}. 
To prove the robustness, by Lemma~\ref{lem:nondecre-eq} and \ref{bmax-infi}, we can apply Lemma~\ref{sizeratio} with $\rho = 2$, $c_1 = \frac{1}{\ell}$ and $c_2 = 0$. Then we derive that the robustness bound $\beta\leq \max\{2, \left(\frac{(\ell-1)}{\ell}+\frac{1}{\alpha}\right) \} \le 1 + \frac{1}{\alpha}.$
\end{proof}

\subsection{New algorithm and lower bounds for the \spc \ case}
\label{sec:specialcase0-1speeds}

In this section, we consider a special case where each machine has either speed 0 (unavailable) or speed 1 (available).  Recall that $\mupp$ denotes the total number of machines, the speeds of which are either 0 or 1. In this case, the predicted speeds vector $\hat{\s}$ is uniquely determined by the number of ones in the vector, i.e. the number of available machines. Therefore, instead of a predicted speeds vector, we let the prediction be a scalar $\mpred$ which denotes the predicted number of available machines. In addition, we let $\mtrue$ denote the actual number of available machines. An algorithm in the \spc \ special case is $c$-consistent if $\max_{\p, \mtrue} alg(\p, \mtrue, \mtrue)/opt(\p, \mtrue) \leq c$ where $alg(\p, \mpred, \mtrue)$ is the makespan of the schedule returned by the algorithm when it is given prediction $\mpred$ in the first stage and actual number of machines available $\mtrue$ in the second stage. An algorithm is $\beta$-robust if it achieves a $\beta$ approximation ratio when the predictions can be arbitrarily wrong, i.e., if $\max_{\p, \mpred, \mtrue} alg(\p, \mpred, \mtrue)/opt(\p, \mtrue) \leq \beta$.

\subsubsection{A new algorithm}

In this section, we first show that there is a $(1+\epsilon)$-consistent and $2(1+\epsilon)$-robust partitioning algorithm in the binary speeds case. Then, we can pay an extra $(1+\epsilon)$ factor for the approximation ratio by using PTAS on the scheduling stage.

We now briefly introduce the idea of the $(1+\epsilon)$-consistent and $2(1+\epsilon)$-robust partitioning algorithm. First, we run PTAS with accuracy parameter $\epsilon$, assuming that $\mpred$ machines are available, and create $\mpred$ subsets of jobs 
$\F = \{F_1, \ldots, F_{\mpred}\}$. Next, we partition each subset into $\floor{\frac{m}{\mpred}}$ or $\ceil{\frac{m}{\mpred}}$ bags by the LPT algorithm, resulting in a total of $m$ bags. The details are presented in Algorithm~\ref{speed_0-1:general}. 
\begin{algorithm}[H]
\begin{algorithmic}[1]
\caption{The $\{ 0,1 \}$-speed SSP algorithm}\label{speed_0-1:general}

\INP the predicted number of available machines $\mpred$, job processing times $p_1, \ldots, p_n $, accuracy $\epsilon \in (0,1)$

\State Partition jobs with PTAS$(\epsilon)$ into $\mpred$ subsets $\{F_1, \ldots, F_{\mpred}\}$ such that $p(F_1) \geq \cdots \geq p(F_{\mpred})$\label{li:alg3-initial} 
\State \textbf{for} $i = 1$ to $\mpred$ \textbf{do}
\State \quad Partition $F_i$ with LPT into $\floor{\frac{m}{\mpred}}$ or $\ceil{\frac{m}{\mpred}}$ bags such that there are a total of $m$ bags
\State Return $\{B_1, \ldots, B_m\}$
\end{algorithmic}
\end{algorithm}

Recall that $opt(\p, \mtrue)$ and $alg(\p, \mpred, \mtrue)$ are the optimal makespan and the makespan that is obtained by \cref{speed_0-1:general} with job processing times $\p$, $\mpred$ available machines in predictions and $\mtrue$ machines in reality. Note that $opt(\p, \mtrue)$ is just the optimal makespan of the makespan minimization problem on $\mtrue$ identical parallel machines with job processing times $\p$. 
We let $\B^{\ast} = \{B_1^{\ast},B_2^{\ast},\ldots, B_{\mupp}^{\ast}\}$ be the optimal partition of the jobs when all $\mupp$ machines are available in reality, the optimal makespan of which is exactly $opt(\p, m)$.

We first list a few important properties about the optimal makespan in the makespan minimization problems for identical machines. These properties (Lemma~\ref{opt1}, \ref{opt2}  and \ref{machine-impact}) will be used later. 

\begin{lemma}\label{opt1}\cite{SZ18}
    $opt(\p, x) \ge opt(\p, y)$, $\forall x \le y$.
\end{lemma}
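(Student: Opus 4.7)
The plan is to prove the monotonicity of $opt(\p, \cdot)$ by a direct schedule-embedding argument. Since the problem at hand is makespan minimization on identical parallel machines (each available machine has speed 1, and a job $j$ takes time $p_j$ on any such machine), the set of feasible schedules on $x$ machines can be viewed as a subset of the set of feasible schedules on $y$ machines whenever $x \leq y$; optimizing over a larger set can only lower the minimum.

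More concretely, I would fix $x \leq y$ and let $\sigma$ be any optimal schedule achieving makespan $opt(\p, x)$ on $x$ machines, so $\sigma$ partitions $[n]$ into $x$ subsets $B_1, \ldots, B_x$ with $\max_{i \in [x]} \sum_{j \in B_i} p_j = opt(\p, x)$. Define a schedule $\sigma'$ on $y$ machines by keeping $B_1, \ldots, B_x$ on the first $x$ machines and assigning empty sets $B_{x+1} = \cdots = B_y = \emptyset$ to the remaining $y - x$ machines. Then $\sigma'$ is a valid partition of $[n]$ into $y$ bags, and its makespan is
\[
\max_{i \in [y]} \sum_{j \in B_i} p_j = \max_{i \in [x]} \sum_{j \in B_i} p_j = opt(\p, x),
\]
where the first equality uses that the new bags are empty and contribute zero load. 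Hence $opt(\p, y) \leq opt(\p, x)$, which is the desired inequality.

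There is essentially no hard step here; the only thing to be careful about is making sure the embedding is legitimate, which relies on the fact that the identical-machines makespan problem allows machines to be idle (i.e., there is no constraint forcing every machine to receive at least one job). As long as that is implicit in the definition of $opt(\p, \cdot)$ used throughout the paper, the argument above is complete.
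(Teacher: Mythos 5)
Your argument is correct: extending an optimal $x$-machine schedule to $y$ machines by leaving the extra $y-x$ machines empty immediately gives $opt(\p, y) \le opt(\p, x)$, and your caveat about idle machines is satisfied here since the paper explicitly allows empty bags/machines. The paper itself states this lemma as a citation to \cite{SZ18} without reproving it, so there is no in-paper proof to compare against; your embedding argument is the standard one and complete.
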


\begin{lemma}\label{opt2}\cite{pinedo2012scheduling}
    $opt(\p, x) \ge \max\{\sum_j{p_j} / x, \max_j p_j\} $.
\end{lemma}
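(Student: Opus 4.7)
The plan is to establish the two lower bounds on $opt(\p,x)$ separately, since the statement is the maximum of two quantities and each bound follows from a standard averaging/pigeonhole argument on identical parallel machines. No machinery from earlier in the paper beyond the problem definition is needed, so I will just argue directly from what it means to schedule $n$ jobs with processing times $\p$ on $x$ identical machines.

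First, I would handle the lower bound $opt(\p,x) \geq \sum_j p_j / x$. Fix any feasible schedule and let $L_i$ denote the total load assigned to machine $i$, for $i \in [x]$. Since every job is assigned to exactly one machine, $\sum_{i=1}^{x} L_i = \sum_{j} p_j$, so by averaging there must exist some machine $i^\star$ with $L_{i^\star} \geq \sum_j p_j / x$. Because the makespan of the schedule equals $\max_i L_i \geq L_{i^\star}$, every feasible schedule has makespan at least $\sum_j p_j / x$, and taking the optimum over all schedules preserves the inequality.

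Second, I would prove the lower bound $opt(\p,x) \geq \max_j p_j$. Let $j^\star \in \arg\max_j p_j$. In any feasible schedule, job $j^\star$ is assigned to some machine, and the load on that machine is therefore at least $p_{j^\star} = \max_j p_j$. The makespan is at least this load, so again the bound holds for every schedule and hence for the optimum. Combining the two bounds via the maximum yields the claimed inequality $opt(\p,x) \geq \max\{\sum_j p_j / x,\ \max_j p_j\}$. The ``hard part'' here is essentially nothing: both bounds are textbook facts (matching the attribution to \cite{pinedo2012scheduling}), so the only thing to get right is a clean one-line justification of each bound so that the proof is transparent rather than obscured by notation.
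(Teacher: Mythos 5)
Your proof is correct: the averaging argument over machine loads gives the $\sum_j p_j / x$ bound, and the observation that the longest job must sit on some machine gives the $\max_j p_j$ bound. The paper itself offers no proof of this lemma --- it simply cites \cite{pinedo2012scheduling} --- and your argument is exactly the standard textbook one that the citation points to, so there is nothing to add or correct.
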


\begin{lemma}\cite{rustogi2013parallel}\label{machine-impact}
Consider the makespan minimization problem on identical parallel machines with job processing times $\p$. Given integers $y \geq x$, let $y = ux+v$, where $u,v$ are integers such that $u \geq 1, 1 \leq v \leq x$. Then,
$$\frac{opt(\p, x)}{opt(\p, y)} \leq \ceil{\frac{y}{x}} = u+1.$$
\end{lemma}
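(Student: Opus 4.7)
The plan is a straightforward machine-merging argument: starting from an optimal schedule on $y$ machines achieving makespan $opt(\p,y)$, I would reassign its jobs onto $x$ machines by grouping the $y$ machines into $x$ bundles and collapsing each bundle onto a single one of the $x$ machines.

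First, I would partition the $y$ machines into $x$ groups $G_1,\ldots,G_x$ of cardinality at most $\ceil{y/x}=u+1$ each. Since $y=ux+v$ with $1\le v\le x$, a concrete choice is to put exactly $u+1$ machines in $v$ of the groups and exactly $u$ machines in the remaining $x-v$ groups; when $v=x$ this just means every group has $u+1$ machines. In either case, no group contains more than $u+1$ machines.

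Second, I would construct a schedule on $x$ machines by assigning to machine $i\in[x]$ the union of the job sets that were placed on the machines in $G_i$ by the optimal $y$-machine schedule, processed sequentially. Since each original machine has load at most $opt(\p,y)$ and each group contains at most $u+1$ machines, the load on machine $i$ in the new schedule is at most $(u+1)\cdot opt(\p,y)$. This yields a feasible $x$-machine schedule, so $opt(\p,x)\le(u+1)\cdot opt(\p,y)=\ceil{y/x}\cdot opt(\p,y)$, which is the claimed inequality.

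There is really no difficult step: the argument is purely accounting. The only point worth verifying is the identity $\ceil{y/x}=u+1$ under the parameterization $y=ux+v$, $1\le v\le x$, which is immediate by cases on whether $v<x$ or $v=x$. Since the lemma is cited from prior work \cite{rustogi2013parallel}, this sketch is meant to reproduce the standard proof rather than sharpen it, and the bound is essentially tight as witnessed by instances where all jobs are identical and fully load every one of the $y$ machines.
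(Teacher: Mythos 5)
Your merging argument is correct and complete: grouping the $y$ machines of an optimal $y$-machine schedule into $x$ bundles of at most $\ceil{y/x}=u+1$ machines each and collapsing each bundle onto one machine gives a feasible $x$-machine schedule with makespan at most $(u+1)\cdot opt(\p,y)$, and your case check that $\ceil{y/x}=u+1$ under $y=ux+v$, $1\le v\le x$ is right. The paper gives no proof of this lemma (it is imported from the cited reference), and your argument is the standard one, so there is nothing in the paper's treatment that diverges from yours.
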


In the remainder of this section, we analyze the robustness of \cref{speed_0-1:general} as the consistency is $(1+\epsilon)$ by construction. 
First, we show that the ratio $\frac{\max_{B \in \B}p(B)}{\max_{B'\in B^{\ast}}p(B')} $ of the maximum total processing time of a bag in $\B$ to the maximum total processing time of a bag in $\B^*$ is an upper bound for the robustness of any partition $\B$. 

\begin{lemma}
\label{0-1:robustness}
    Let $\B = \{B_1,B_2,\ldots, B_{\mupp}\}$ be a partition of $n$ jobs with processing times $\p$ into $m$ bags. If $\frac{\max_{B \in \B}p(B)}{\max_{B'\in B^{\ast}}p(B')} \le \tta $ where $\B^*$ is the optimal partition of the jobs when all $m$ machines are available, then $\B$ is a $\max\{\tta,2\}$-robust partition.
\end{lemma}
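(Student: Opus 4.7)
The plan is to explicitly construct a schedule of the bags in $\B$ onto the $m_0$ machines available in the second stage and argue its makespan is at most $\max\{\tta, 2\} \cdot opt(\p, m_0)$; since robustness is defined with respect to an optimal second-stage schedule, exhibiting any schedule meeting this bound suffices. The natural choice is to apply the Longest Processing Time first (LPT) rule, treating each bag as an atomic job on $m_0$ identical machines, so that standard list-scheduling arguments become available.

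Let $i^*$ be a machine realizing the makespan under this schedule and let $B^*$ be the last bag assigned to $i^*$. I would split the analysis into two cases based on how many bags $i^*$ receives. In the first case, $B^*$ is the only bag on $i^*$, so the makespan equals $p(B^*) \le \max_{B \in \B}p(B)$. By the hypothesis of the lemma this is at most $\tta \cdot \max_{B' \in \B^*}p(B') = \tta \cdot opt(\p, m)$, and Lemma~\ref{opt1} gives $opt(\p, m) \le opt(\p, m_0)$, so the makespan is at most $\tta \cdot opt(\p, m_0)$. In the second case, $i^*$ holds at least two bags. Since the first $m_0$ bags in LPT order are placed on $m_0$ distinct machines, $B^*$ cannot be among them, which forces $p(B^*) \le W/m_0$ where $W = \sum_{B \in \B}p(B)$. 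The standard list-scheduling argument bounds the load of $i^*$ just before $B^*$ is placed by $(W - p(B^*))/m_0$, so the makespan is at most $p(B^*) + (W - p(B^*))/m_0 \le 2W/m_0$, which by Lemma~\ref{opt2} is at most $2 \cdot opt(\p, m_0)$.

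Combining the two cases yields the desired $\max\{\tta, 2\} \cdot opt(\p, m_0)$ bound on the makespan, establishing the robustness of $\B$. The only non-routine step is the observation in the second case that LPT forces $p(B^*) \le W/m_0$ whenever $i^*$ receives more than one bag, since the first $m_0$ bags in LPT order each have size at least $p(B^*)$ and together sum to at most $W$; everything else follows by assembling the hypothesis with Lemmas~\ref{opt1} and \ref{opt2}.
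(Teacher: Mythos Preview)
Your proof is correct and somewhat cleaner than the paper's. The paper does not use LPT in the second stage; instead it repeatedly merges the two smallest non-empty bags until exactly $m_0$ bags remain, and then schedules one bag per machine. It then splits on whether the maximum bag size ever changed during the merging: if not, the $\tta$-bound applies directly via Lemma~\ref{opt1}; if it did, the paper argues that every surviving bag has size at least half of the new maximum, which gives makespan at most $\frac{2}{m_0+1}\sum_j p_j$, and then closes with Lemma~\ref{opt2}. Your approach replaces this merging invariant with the standard list-scheduling inequality together with the observation that in Case~2 the last-placed bag $B^*$ has rank greater than $m_0$ in the LPT order and hence size at most $W/m_0$. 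Both arguments land at a factor below~$2$ in the multi-bag case (the paper gets $\frac{2m_0}{m_0+1}$, yours gives $2-\frac{1}{m_0}$), but neither exploits the sharper constant, so the difference is cosmetic. Your route has the advantage of reusing a textbook bound rather than an ad hoc merging argument.
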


\begin{proof}
    If the number of nonempty bags in $\B$ is smaller than or equal to $\mtrue$, then by \cref{opt1}, we have $alg(\p, \mpred,\mtrue) = \max_{B\in \B}{p(B)} \le  \tta \max_{B'\in B^{\ast}}p(B') = \tta \cdot opt(\p, m) \leq \tta \cdot opt(\p, \mtrue)$. 
    
    Otherwise, some bags need to be scheduled on the same machine in order to fit $\mtrue$ machines. If we can ``merge" some bags and end up with $\mtrue$ bags after the merging process, then we can simply follow a one-on-one schedule. 
    To achieve this, after learning the number of machines available $\mtrue$, we merge the smallest two non-empty bags until there are $\mtrue$ non-empty bags. Through this process, if the total processing time of the largest bag does not change, then by Lemma~\ref{opt1}, we have $alg(\p, \mpred,\mtrue) = \max_{B\in \B}{p(B)} \le \tta \cdot \max_{B'\in B^{\ast}}{p(B')} = \tta \cdot opt(\p, m) \le \tta \cdot opt(\p, \mtrue)$. 
    Otherwise, if the total processing times of the largest bag increases through merging, then we claim that after the merging process, the total processing time of each bag is at least $b_{1}/2$ where $b_{1}$ is the total processing time of the largest bag. 
    We prove the above claim by contradiction. Assume there exists a bag with total processing time less than $b_{1}/2$ after the merging process. Since we know $b_{1}$ must be the sum of the total processing time of two smallest bags, every other bag must have a load larger than or equal to $b_1/2$. Therefore, we have a contradiction. With the above claim, we have $alg(\p, \mpred,\mtrue) \le \frac{2}{\mtrue+1}\sum_{B\in \B}p(B)$.
    Then by Lemma~\ref{opt2}, if $\sum _j p_j > \mtrue \max_j p_j$, we have $alg(\p, \mpred,\mtrue) \le \frac{2}{\mtrue+1}\sum_{B\in \B}p(B)  \le \frac{2\mtrue}{\mtrue+1} opt(\p, \mtrue) \leq 2 \cdot opt(\p, \mtrue)$, where the second inequality holds because $opt(\p, \mtrue) \geq \sum_j p_j / \mtrue$. 
    Otherwise, if $\sum _j p_j \leq \mtrue \max_j p_j$, then $alg(\p, \mpred,\mtrue) \le \frac{2}{\mtrue+1} \sum_{B\in \B}p(B) \leq \frac{2\mtrue}{\mtrue+1} \max _j p_j \leq 2 \max_j p_j \leq 2 \cdot opt(\p, \mtrue)$, where the second inequality holds because $opt(\p, \mtrue) \geq \max_j p_j$. 
\end{proof}

We now prove a very important property of the LPT algorithm, which we will use along with \cref{0-1:robustness} to prove the robustness of \cref{speed_0-1:general}.

\begin{lemma}
\label{lem-lptratio}
For any job processing times $\p$ and the total number of machines $m$, let $\B_{\textsc{LPT}} = \{B_1, \cdots, B_m\}$ be the partition of $n$ jobs into $m$ bags by the LPT algorithm with $p(B_1) \ge \ldots \ge p(B_m)$. If $|B_1| > 1$, then $p(B_1) \le 2\sum_{B\in \B}p(B) / (m+1) $.  
\end{lemma}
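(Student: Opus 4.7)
The plan is to identify the last job added to $B_1$ by LPT and exploit two facts that the LPT rule guarantees at the moment this job is assigned. Let $j$ denote the last job placed into $B_1$. Because $|B_1|\ge 2$, at least one other job was already in $B_1$ at that moment, so the state of $B_1$ just before $j$ is assigned has load $L:=p(B_1)-p_j>0$.

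First I would use the greedy selection rule of LPT: when job $j$ was placed into $B_1$, bag $B_1$ was the least-loaded bag. Therefore every other bag $B_k$ ($k\ne 1$) already had load at least $L=p(B_1)-p_j$ at that moment, and since loads only grow over the course of LPT, the final loads also satisfy $p(B_k)\ge p(B_1)-p_j$. Summing this over all $m$ bags yields
\begin{equation*}
\sum_{B\in\B_{\textsc{LPT}}} p(B) \;\ge\; p(B_1) + (m-1)\bigl(p(B_1)-p_j\bigr) \;=\; m\cdot p(B_1) - (m-1)p_j.
\end{equation*}

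Next I would bound $p_j$ from above by $p(B_1)/2$. Since LPT processes jobs in decreasing order of processing time, every job already in $B_1$ when $j$ arrives has processing time at least $p_j$. Thus the load $L$ of $B_1$ prior to receiving $j$ is at least $p_j$, giving $p(B_1)=L+p_j\ge 2p_j$, i.e.\ $p_j\le p(B_1)/2$. Plugging this into the inequality above gives
\begin{equation*}
\sum_{B\in\B_{\textsc{LPT}}} p(B) \;\ge\; m\cdot p(B_1) - (m-1)\cdot\tfrac{p(B_1)}{2} \;=\; \tfrac{m+1}{2}\,p(B_1),
\end{equation*}
which rearranges to the claimed bound $p(B_1)\le 2\sum_{B\in\B_{\textsc{LPT}}} p(B)/(m+1)$.

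There is no real obstacle here; the only thing to be careful about is that both structural facts about LPT are invoked at the correct moment in time (when $j$ is placed), and that the monotonicity of bag loads over time lets those ``snapshot'' inequalities be promoted to inequalities about the final partition. The condition $|B_1|\ge 2$ is used precisely to guarantee that there is a well-defined ``last job'' $j$ following at least one earlier job in $B_1$, so that the bound $p_j\le p(B_1)/2$ is valid.
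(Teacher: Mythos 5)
Your proof is correct and rests on the same two facts that the paper's proof uses: when the last (hence smallest, so of size at most $p(B_1)/2$) job of $B_1$ was placed, $B_1$ was the least-loaded bag, so every other bag's final load is at least $p(B_1)-p_j$. The paper packages this as a contradiction argument comparing the partial load of $B_1$ against $p(B_m)$, whereas you sum the resulting lower bounds over all $m$ bags directly; this is a presentational difference rather than a genuinely different route.
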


\begin{proof}
Assume, for the sake of contradiction, that $p(B_1) > 2\sum_{B\in \B}p(B) / (m+1)$. We must have $p(B_m) < \sum_{B\in \B}p(B) / (m+1)$; otherwise, $\sum_{B\in \B}p(B) \ge p(B_1) + (m-1) p(B_m) > \sum_{B\in \B}p(B)$ which leads to a contradiction. Let $|B_1| = k > 1$. 
If we sort the jobs in $B_1$ in a non-decreasing order, then the largest $k-1$ jobs in $B_1$ have total processing time at least $\frac{k-1}{k}p(B_1) > (\frac{k-1}{k})2\sum_{B\in \B}p(B) / (m+1) > \sum_{B\in \B}p(B)/(m+1) > p(B_m)$. Since LPT in each iteration always adds a job to the bag with the smallest total processing time, there's a contradiction.
\end{proof}

Using the above lemmas, we can prove that \cref{speed_0-1:general} is a $(1+\epsilon)$-consistent and $2(1+\epsilon)$-robust partitioning algorithm. 

\begin{theorem}\label{ub:1-consist}
For any constant $\epsilon > 0$,
Algorithm~\ref{speed_0-1:general} is a $(1+\epsilon)$-consistent and $2(1+\epsilon)$-robust partitioning algorithm. 
\end{theorem}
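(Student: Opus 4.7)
The plan is to prove consistency and robustness separately, relying on Lemma~\ref{0-1:robustness} together with the LPT fact in Lemma~\ref{lem-lptratio} and the machine-reduction estimate in Lemma~\ref{machine-impact}.

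For consistency, I would observe that when $\mpred = \mtrue$, the scheduling stage can simply place every bag originating from $F_i$ onto the $i$-th available machine. The load on machine $i$ is then exactly $p(F_i)$, and since $\{F_1,\ldots,F_{\mpred}\}$ is produced by a PTAS with accuracy $\epsilon$ on $\mpred$ identical machines, $\max_i p(F_i) \leq (1+\epsilon)\, opt(\p,\mpred) = (1+\epsilon)\, opt(\p,\mtrue)$. This yields $(1+\epsilon)$-consistency.

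For robustness, by Lemma~\ref{0-1:robustness} (applied with $\theta = 2(1+\epsilon)$, for which $\max\{\theta,2\} = 2(1+\epsilon)$), it suffices to show that the largest bag $B^\star$ in the returned partition $\B$ satisfies $p(B^\star) \leq 2(1+\epsilon)\cdot \max_{B'\in \B^\ast} p(B')$, where the right-hand side equals $opt(\p,m)$. I would split into two cases. If $|B^\star| = 1$, then $p(B^\star)$ equals some single job size $p_j$, which is trivially at most $opt(\p,m)$ by Lemma~\ref{opt2}. If $|B^\star| \geq 2$, then $B^\star$ was produced by LPT applied to some $F_k$ that was split into $k_i \in \{\lfloor m/\mpred \rfloor, \lceil m/\mpred \rceil\}$ bags. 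Lemma~\ref{lem-lptratio} applied inside $F_k$ gives $p(B^\star) \leq 2\, p(F_k)/(k_i+1)$, and a short case check on $k_i$ shows that $k_i+1 \geq \lceil m/\mpred \rceil$. Combining the PTAS bound $p(F_k) \leq (1+\epsilon)\, opt(\p,\mpred)$ with Lemma~\ref{machine-impact} applied to $x=\mpred,\ y=m$, namely $opt(\p,\mpred) \leq \lceil m/\mpred\rceil \cdot opt(\p,m)$, the two $\lceil m/\mpred\rceil$ factors cancel and produce $p(B^\star) \leq 2(1+\epsilon)\, opt(\p,m)$, as required.

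The main obstacle, and the reason the algorithm is designed the way it is, is making the $\lceil m/\mpred\rceil$ factor from the machine-reduction lemma cancel cleanly against the denominator coming out of Lemma~\ref{lem-lptratio}. This is why $F_i$ must be split into $\lfloor m/\mpred\rfloor$ or $\lceil m/\mpred\rceil$ (and not an arbitrary number of) bags, and why one must use the sharper LPT bound of Lemma~\ref{lem-lptratio} (factor $2/(k_i+1)$) rather than the weaker $\beta(\cdot) \leq 2$ guarantee. Once this cancellation is set up, both the $|B^\star|=1$ case and the $|B^\star|\geq 2$ case feed directly into Lemma~\ref{0-1:robustness} to give the $2(1+\epsilon)$-robustness bound.
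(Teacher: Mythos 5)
Your proposal is correct and follows essentially the same route as the paper: consistency via the PTAS partition placed one subset per machine, and robustness via Lemma~\ref{0-1:robustness} combined with the LPT bound of Lemma~\ref{lem-lptratio}, the PTAS guarantee on $p(F_k)$, and Lemma~\ref{machine-impact}, with the $\lceil m/\mpred\rceil$ factor cancelling against $\lfloor m/\mpred\rfloor+1$. The only cosmetic difference is that in the singleton case you bound $p(B^\star)\le \max_j p_j \le opt(\p,m)$ via Lemma~\ref{opt2}, whereas the paper relates it directly to the largest bag of $\B^\ast$; both give the same conclusion.
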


\begin{proof}
If $\mtrue = \mpred$, since Algorithm~\ref{speed_0-1:general} would partition all of the jobs into $\mpred$ subsets according to the $\ptas$, it is $(1+\epsilon)$-consistent.
If $\mtrue \neq \mpred$, we compare the returned partition $\B = \{B_1, \ldots, B_{m} \}$ by \cref{speed_0-1:general} with the optimal partition $\B^{\ast} = \{B_1^{\ast}, \ldots, B_m^{\ast}\}$ when all $m$ machines are available. 
We claim that $\max_{B\in \B} p(B) \le 2(1+\epsilon) \cdot \max _{B'\in \B^{\ast}} p(B')$. If this claim is satisfied then by \cref{0-1:robustness}, the robustness is at most $2(1+\epsilon)$.
Let $B_1 = \argmax_{B \in \B} p(B)$, and $F_1$ be the subset that contains $B_1$.
If $|B_1| = 1$, then the bag only has a large job and we have $p(B_1) = \max _{B'\in \B^{\ast}} p(B')$. By \cref{0-1:robustness} the robustness is at most 2. 
Otherwise, if $|B_1| > 1$, since for each subset we partition it into at least $\floor{m/\mpred}$ bags, by Lemma~\ref{lem-lptratio} we have $p(B_1) \le 2 \sum_{B\in F_1}p(B) / (\floor{m/\mpred}+1)$.
By Lemma~\ref{machine-impact}, we have $opt(\p, \mpred) \le \ceil{m/\mpred} opt(\p, m)$.
Then we have 
\begin{align*}
    p(B_1) & \le 2 \sum_{B\in F_1}p(B) / (\floor{m/\mpred}+1) \\
    & \le 2(1+\epsilon)opt(\p, \mpred)/ (\floor{m/\mpred}+1) \\
    &  \le 2(1+\epsilon)\ceil{m/\mpred} opt(\p, m)/(\floor{m/\mpred}+1) \\
    & \le 2(1+\epsilon)opt(\p, m).
\end{align*}

Finally, by \cref{0-1:robustness}, we conclude that Algorithm~\ref{speed_0-1:general} is a $2(1+\epsilon)$-robust partitioning algorithm.
\end{proof}

\subsubsection{Lower bounds}
In this section, we first show that there exists an instance on which any $1$-consistent algorithm in the $\{0,1\}$-speed case has a robustness that is at least $\frac{2(m-1)}{m}$. We then show that there is a consistency-robustness trade-off for any deterministic algorithm in the $\{0,1\}$-speed case.

\begin{lemma}
Any 1-consistent algorithm for the $\{0,1\}$-speed SSP problem has a robustness at least $\frac{2(m-1)}{m}$, where $m$ is the total number of machines.
\end{lemma}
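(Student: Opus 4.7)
My plan is to exhibit a single hard instance on which any $1$-consistent algorithm is forced into an essentially unique partition, and then pick the worst $\mtrue$ against that partition. Specifically, I would take $n = m(m-1)$ unit-size jobs and set the prediction to $\mpred = m$ (i.e., predict that all machines are available).

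The first step is to pin down the partition. When $\mtrue = \mpred = m$, the optimal makespan is $m-1$ since the $m(m-1)$ unit jobs split evenly, $m-1$ per machine. For the algorithm to be $1$-consistent, the scheduling stage must realize makespan $m-1$ on this instance, so in particular every bag must have total processing time at most $m-1$. But the $m$ bags in the partition hold all $m(m-1)$ unit jobs, so the average bag size is exactly $m-1$; together with the per-bag upper bound of $m-1$, this forces every bag to contain exactly $m-1$ jobs.

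Next I would let the adversary set $\mtrue = m-1$. The partition consists of $m$ bags of size $m-1$, and by pigeonhole at least one of the $m-1$ available machines must receive $\geq 2$ bags. Since bags cannot be split, the algorithm's makespan is at least $2(m-1)$. On the other hand, $m(m-1)$ unit jobs on $m-1$ identical machines admit a perfectly balanced schedule with makespan $m$, so $opt(\p,m-1) = m$. Taking the ratio gives robustness at least $\frac{2(m-1)}{m}$, as claimed.

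The only nontrivial step is the uniqueness argument for the partition: I need to rule out not just ``large'' bags but also bags strictly smaller than $m-1$, which I handle via the averaging/pigeonhole observation above (any bag of size $\leq m-2$ would force some other bag to have size $\geq m$, violating the $1$-consistency constraint on max bag size). Everything else is counting. Note that the construction uses equal-size jobs, so in fact the same argument gives the bound even in the special case of equal-size jobs.
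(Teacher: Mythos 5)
Your proposal is correct and uses essentially the same construction and argument as the paper: $m(m-1)$ unit jobs with $\mpred = m$, forcing every bag to hold exactly $m-1$ jobs via $1$-consistency, then setting $\mtrue = m-1$ so that two bags collide on one machine, giving $\frac{2(m-1)}{m}$. The averaging step you add to rule out undersized bags is a slightly more explicit version of the paper's observation that any deviation creates a bag with more than $m-1$ jobs.
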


\begin{proof}
Consider an instance with a total of $m$ machines, $\mpred = m$ and $n = m(m-1)$ unit-sized jobs, i.e. $p_j = 1$ for all $j \in \{1, \ldots, n\}$.
    
We first argue that a $1$-consistent algorithm must partition the jobs into $m$ bags with each bag having exactly $m-1$ job.
Assume the predictions are correct, i.e., $\mtrue = m$.  By Lemma~\ref{opt2}, the optimal makespan is at least $m-1$. Therefore, such a partition can obtain the optimal makespan $m-1$. Now, consider a partition $B_1, \ldots, B_m$ of the jobs such that some bag does not contain exactly $m-1$ jobs. In this case, there exists at least one bag that contains more than $m-1$ jobs. Since the optimal makespan is $m-1$, an algorithm that makes such a partition is not $1$-consistent. 

Next, consider the case where the predictions are incorrect and the actual number of available machines is $\mtrue = m-1$. Then the 1-consistent algorithm has to schedule two bags, each of which contains $m-1$ jobs, on the same machine. Therefore, the makespan of the 1-consistent algorithm is $2(m-1)$ if $\mtrue = m-1$. However, the optimal schedule if $\mtrue = m-1$ is to schedule $m$ jobs on each machine. Therefore the robustness is $2(m-1) / m$.    
\end{proof}

\begin{theorem}\label{lb:0-1_speeds}
For any $\alpha \in [0,1/2)$, if a deterministic algorithm for the $\{0,1\}$-speed SSP problem is $(1+\alpha)$-consistent, then its robustness is at least $(4-2\alpha)/{3}$.
\end{theorem}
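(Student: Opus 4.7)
The plan is to construct an adversarial instance in the spirit of Proposition~\ref{1-consist}, but tuned so that the consistency requirement $(1+\alpha)$ leaves just enough slack for the algorithm to match the predicted balance, while simultaneously forcing a brittle assignment if the prediction turns out to be wrong in a specific way. Concretely, I will consider an instance with total number of machines $m=3$, predicted number of available machines $\mpred=3$, and $n=3k$ unit-sized jobs for a large integer $k$. Since $m=3$, the partitioning stage must produce exactly three bags $B_1, B_2, B_3$ with $|B_1|+|B_2|+|B_3|=3k$.

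Next I would translate the consistency hypothesis into a structural constraint on this partition. If $\mtrue=\mpred=3$, the optimal schedule assigns one bag per machine and has makespan $\max_i |B_i|$, and the offline optimum is $opt(\p,3)=k$ by Lemma~\ref{opt2}. Being $(1+\alpha)$-consistent therefore forces $\max_i |B_i|\le (1+\alpha)k$, so denoting $b_1=\max_i |B_i|$ we have $b_1 \le (1+\alpha)k < \tfrac{3k}{2}$ because $\alpha<\tfrac12$.

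Now I let the adversary reveal $\mtrue=2$. The algorithm must schedule three bags onto two machines without splitting. Among the three ways to 2-partition $\{B_1,B_2,B_3\}$, placing $B_1$ alone and the remaining two together is optimal: it gives makespan $\max(b_1, 3k-b_1)=3k-b_1$, since $b_1<3k/2$; any partition that puts two bags on the machine containing $B_1$ gives load at least $b_1+b_3\ge 3k-b_1$. Hence
\begin{align*}
 alg(\p,\mpred,\mtrue) \;\ge\; 3k-b_1 \;\ge\; 3k-(1+\alpha)k \;=\; (2-\alpha)k,
\end{align*}
while the offline optimum on two identical machines is $opt(\p,2)=\lceil 3k/2\rceil$.

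Dividing yields $alg/opt \ge (2-\alpha)k / \lceil 3k/2\rceil$, which tends to $2(2-\alpha)/3 = (4-2\alpha)/3$ as $k\to\infty$. Taking the supremum over instances in the definition of robustness (or choosing $k$ large enough that the additive $+1$ in the ceiling is negligible), we conclude that the robustness is at least $(4-2\alpha)/3$, as required. The only subtle step is the optimality of the "$B_1$ alone" split, which is where the hypothesis $\alpha<\tfrac12$ is essential; I expect this is the main point to argue carefully, but it reduces to the elementary inequality $b_1 < 3k-b_1$ that follows directly from $b_1\le (1+\alpha)k$.
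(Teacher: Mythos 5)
Your proposal is correct and follows essentially the same construction as the paper's proof: $m=\mpred=3$ with unit jobs, using consistency to cap the largest bag at $(1+\alpha)\cdot opt(\p,3)$ and then revealing $\mtrue=2$ so the two smaller bags must share a machine (the paper uses $n=6k$ jobs so that $opt(\p,2)=3k$ exactly, avoiding your ceiling/limit step, which you could also avoid by taking $k$ even). One small remark: the hypothesis $\alpha<1/2$ is not really needed for the ``$B_1$ alone'' split analysis, since every schedule of three bags on two machines has makespan at least $b_2+b_3=3k-b_1$ unconditionally; $\alpha<1/2$ only matters for the bound $(4-2\alpha)/3$ to exceed $1$.
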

\begin{proof}
Consider an instance with $\mpred = m = 3$ and $n = 6k$ unit-sized jobs, i.e. $p_j = 1$ for all $j \in \{1, \ldots, n\}$ where $k$ is a large positive integer. 
Let $\B = \{B_1,B_2,B_3\}$ be the partition generated by an arbitrary $(1+\alpha)$-consistent algorithm with $p(B_1) \ge p(B_2) \ge p(B_3)$. We let $alg(\p, 3, \mtrue)$ denote the makespan achived by the $(1+\alpha)$-consistent algorithm when there are $\mtrue$ machines available in reality, with prediction being $\mpred = 3$.
We first prove by contradiction that we must have $alg(\p, 3, 3) = p(B_1)$ when $\mtrue = 3$. Assume that $alg(\p, 3, 3) = p(B_i) + p(B_j)$ where $i\neq j, i,j \in \{1,2,3\}$. Since the algorithm of our interest is $(1+\alpha)$-consistent, we have $alg(\p, 3, 3) \le (1+\alpha)opt(\p, 3) = (1+\alpha)2k < 3k$. Then for the other bag $B_t$, the total processing time of it is $p(B_t) > n - (p(B_i) + p(B_j)) = 3k$, which contradicts that $alg(\p, 3, 3) = p(B_i) + p(B_j) < 3k$. 

Now we consider the approximation ratio when $\mtrue = 2$. In this case, the algorithm would schedule $B_2$ and $B_3$ on the same machine. Then the makespan of the algorithm is $alg(\p, 3, 2) = p(B_2) + p(B_3) \ge 6k - (1+\alpha)2k = 4k-2\alpha k$ with $opt(\p, 2) = 3k$. Thus, the approximation ratio is at least $\frac{alg(\p, 3, 2)}{opt(\p, 2)} \geq \frac{4-2\alpha}{3}$. 
\end{proof}

\section{Experiments}
\label{sec:experiments}

We empirically evaluate the performance of \ipb \ on synthetic data against benchmarks that achieve either the best-known consistency or the best-known robustness for SSP.

\subsection{Experiment settings}

\paragraph{Benchmarks.} We compare three  algorithms.  
\begin{itemize}
\item \ipb \ is Algorithm~\ref{alg-general} with $\rho = 4$ and  $\alpha = 0.5$.
\item The Largest Processing Time first partitioning algorithm, which we call \textsc{LPT-Partition}, creates $m$ bags by adding each job, in decreasing order of their processing time, to the bag with minimum total processing time. \textsc{LPT-Partition} is $2$-robust (and $2$-consistent since it ignores the predictions) \cite{EHMNSS20}.
\item The \textsc{1-consistent} algorithm  completely trusts the prediction and generates a partition that is $1$-consistent (but has arbitrarily poor robustness due to our lower bound in Proposition~\ref{1-consist}). 
\end{itemize}

In practice, PTAS algorithms for scheduling are extremely slow. Instead of using a PTAS for the scheduling stage, we give an advantage to the two benchmarks  by solving their scheduling stage  via integer programming (IP). However, since we want to ensure that our algorithm has a polynomial running time, we use the LPT algorithm to compute a schedule during both  the partitioning and scheduling stage of \ipb, instead of a PTAS or an IP. To compute the approximation ratio  achieved by the different algorithms, we compute the optimal solution using an IP.

\paragraph{Data sets.} In the first set of experiments, we generate synthetic datasets with $n = 50$ jobs and $m = 10$ machines and evaluate the performance of the different algorithms as a function of the standard deviation of the  the prediction error distribution. The job processing times $p_j$ are generated i.i.d. either from $\U(0, 100)$, the uniform distribution in the interval $(0, 100)$, or $\N(50, 5)$, the normal distribution  with mean $\mu_p = 50$ and standard deviation $\sigma_p = 5$.  The machine speeds $s_i$ are also generated i.i.d., either from $\U(0, 40)$ or $\N(20, 4)$. We evaluate the performance of the algorithms over each of the $4$ possible combinations of job processing time and machine speed distributions. The prediction error  $err(i) = \hat{s}_i - {s}_i$ of each machine is sampled i.i.d. from $\N(0, x)$ and we vary $x$ from $x = 0$ to $x = \mu_s$ (the mean of machine speeds).

In the second set of experiments, we fix the distributions of the processing times, machine speeds, and prediction errors to be $\N(50, \sigma_p)$, $\N(20, \sigma_s),$ and $\N(0, 4)$ respectively, with default values of $\sigma_p = 5$ and $\sigma_s = 4$. We evaluate the algorithms' performance as a function of (1) the number $n$ of jobs, (2) the number $m$ of machines, (3) $\sigma_p$, and (4) $\sigma_s$. For each figure, the approximation ratio achieved by the different algorithms are averaged over $100$ instances generated i.i.d. as described above. Additional details of the experiment setup are provided in Appendix~\ref{sec:appexperiments}.

 \begin{figure*}[t!]
 	\centering
    \subfigure{\includegraphics[width=0.24\textwidth]{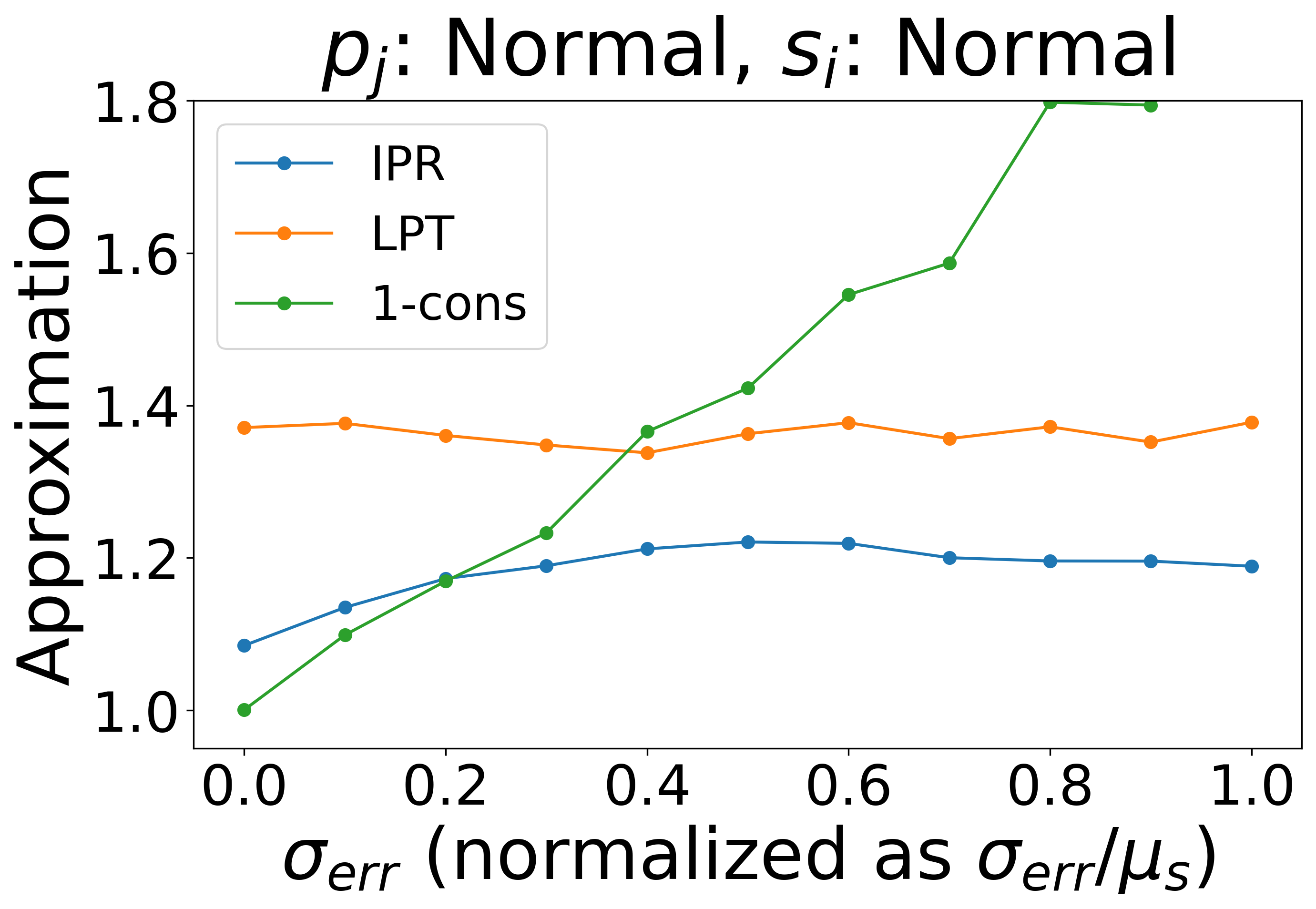}} 
    \subfigure{\includegraphics[width=0.24\textwidth]{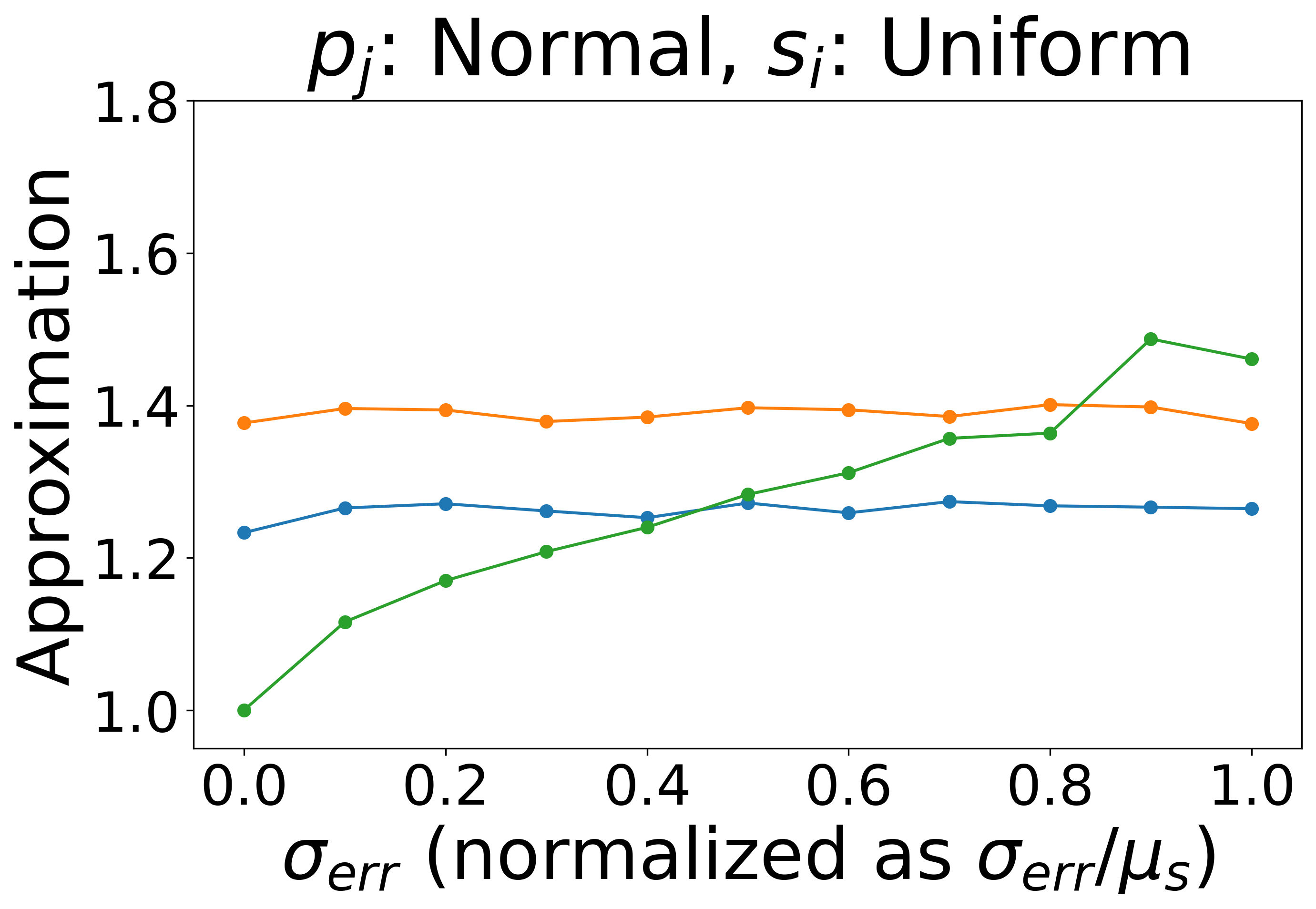}} 
    \subfigure{\includegraphics[width=0.24\textwidth]{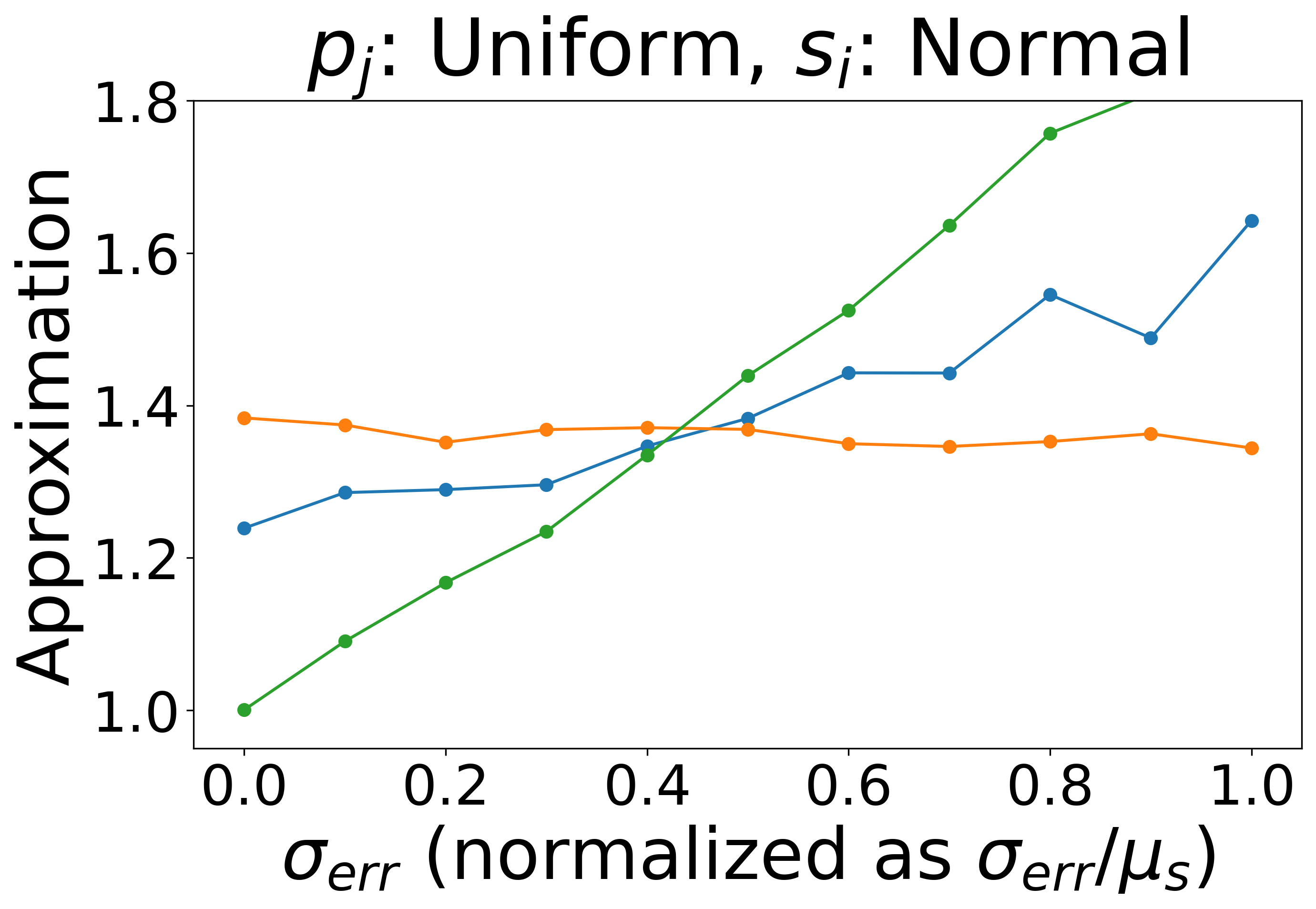}}
    \subfigure{\includegraphics[width=0.24\textwidth]{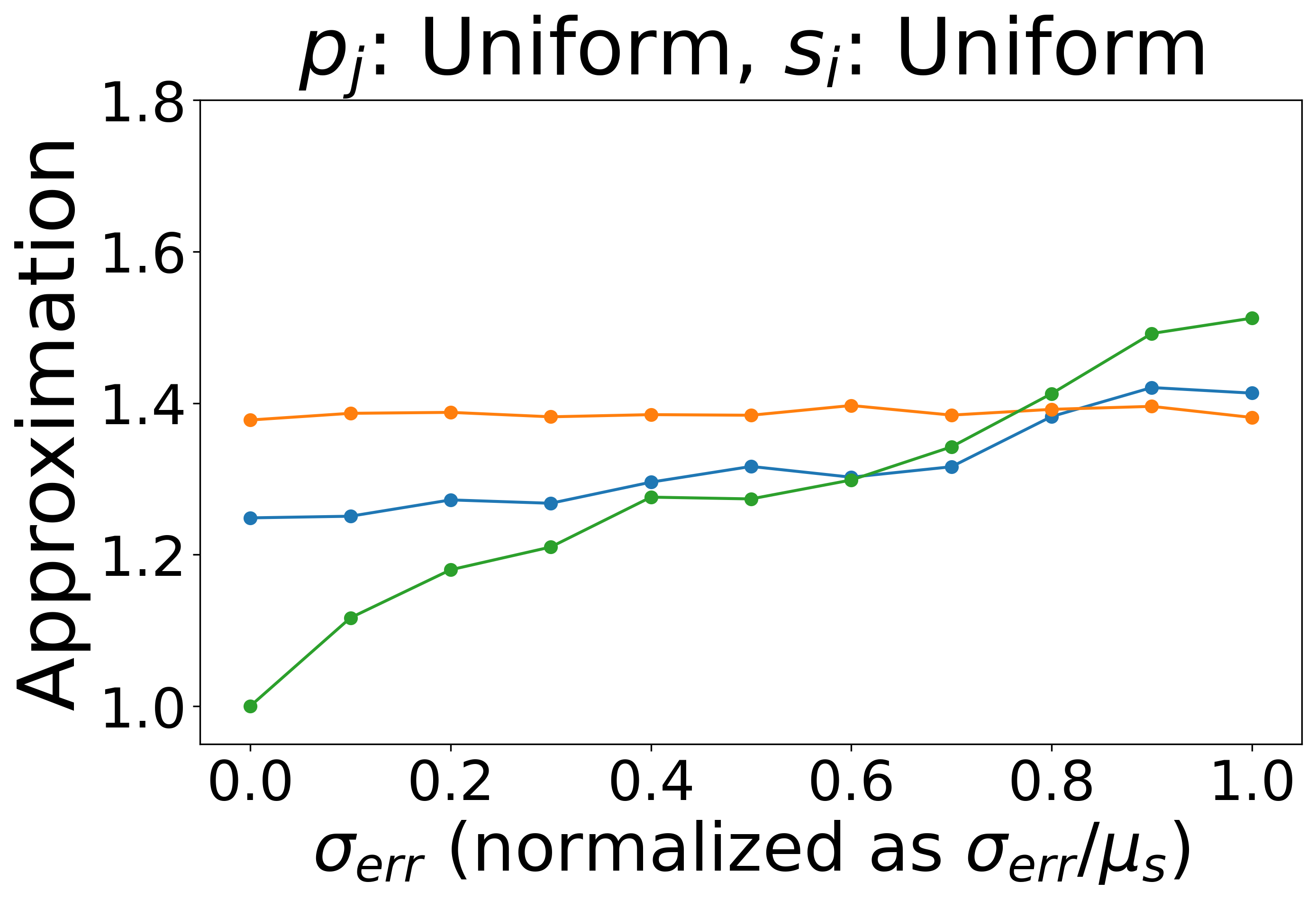}}
    \subfigure{\includegraphics[width=0.24\textwidth]{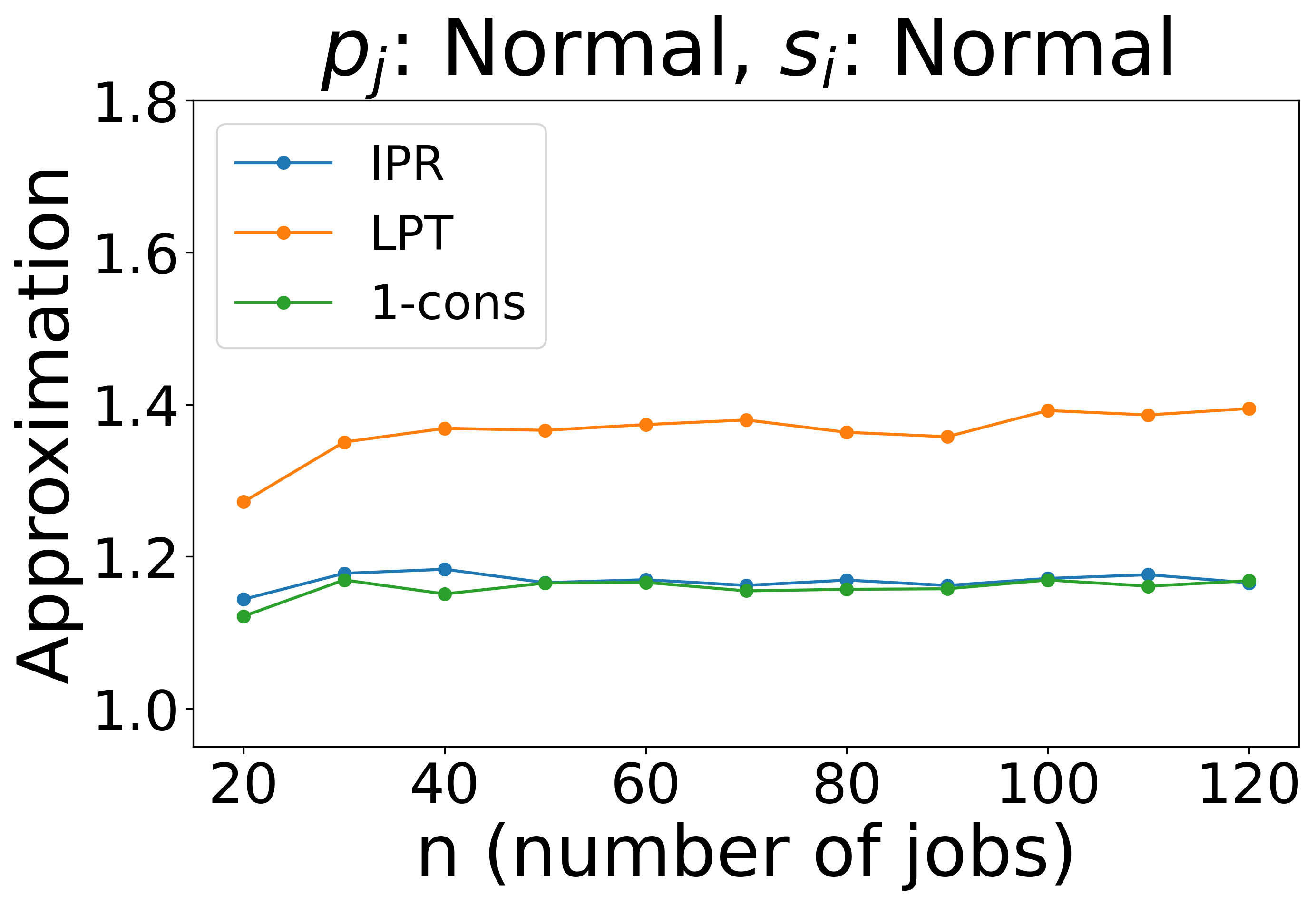}}
    \subfigure{\includegraphics[width=0.24\textwidth]{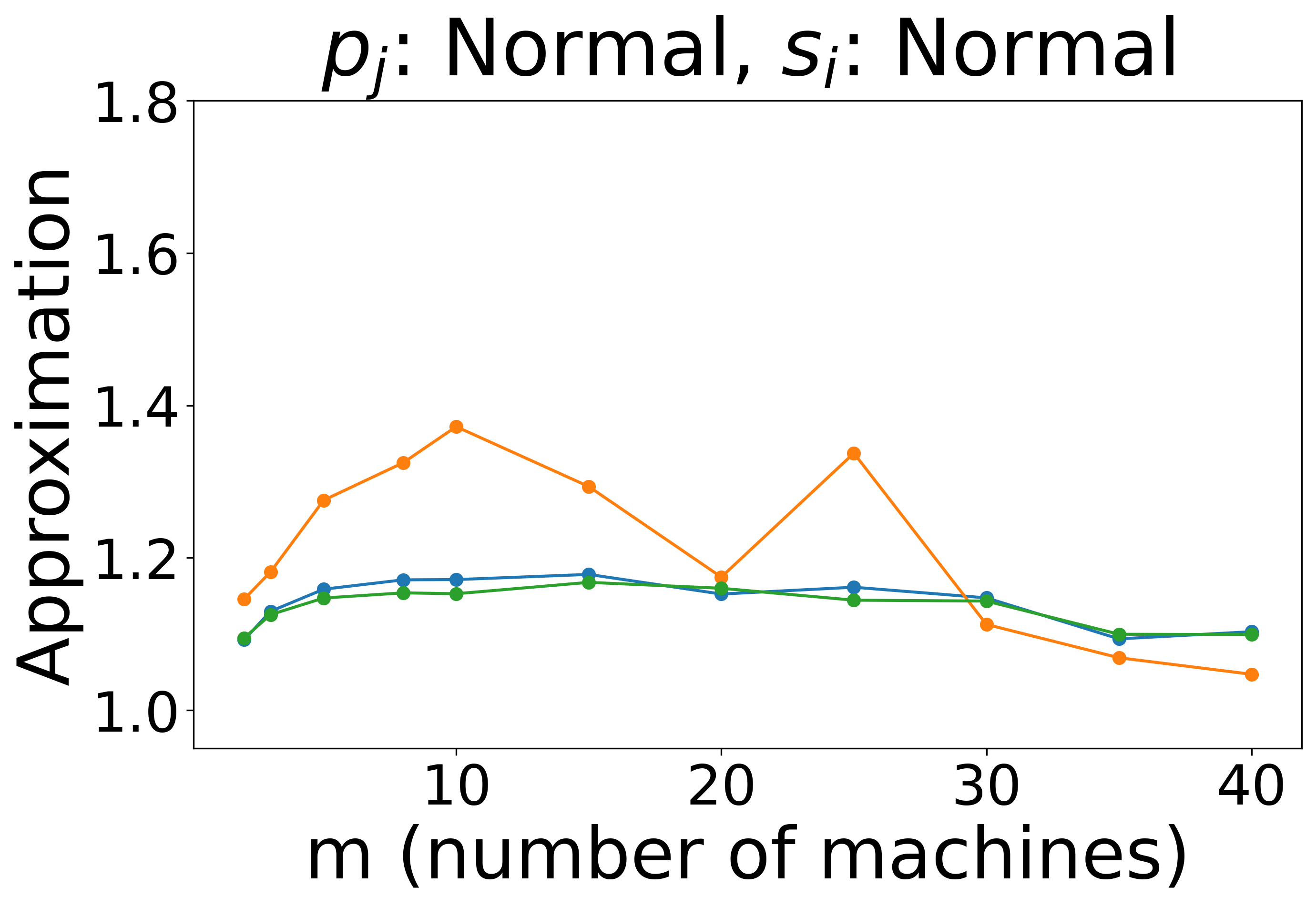}}
    \subfigure{\includegraphics[width=0.24\textwidth]{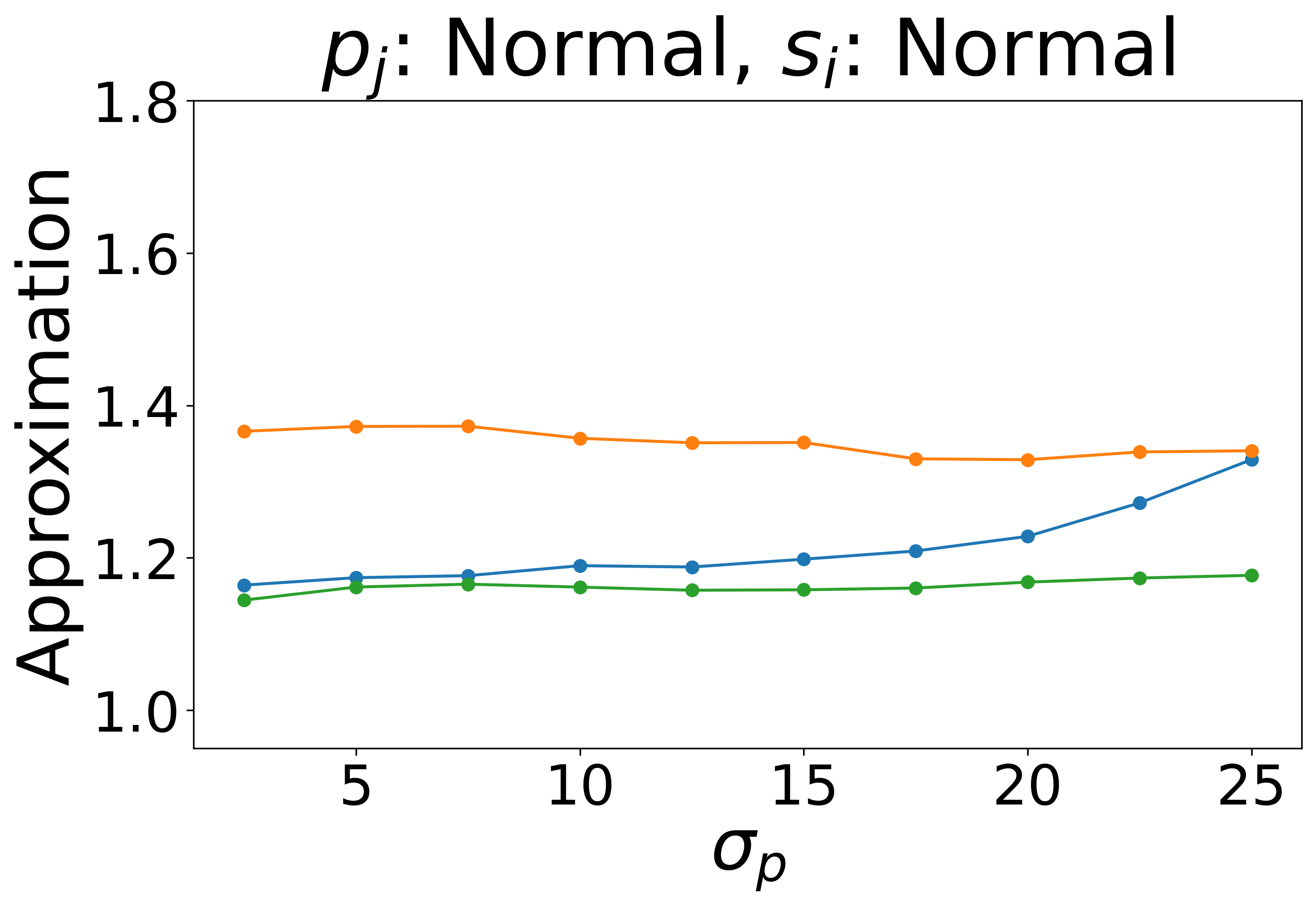}} 
    \subfigure{\includegraphics[width=0.24\textwidth]{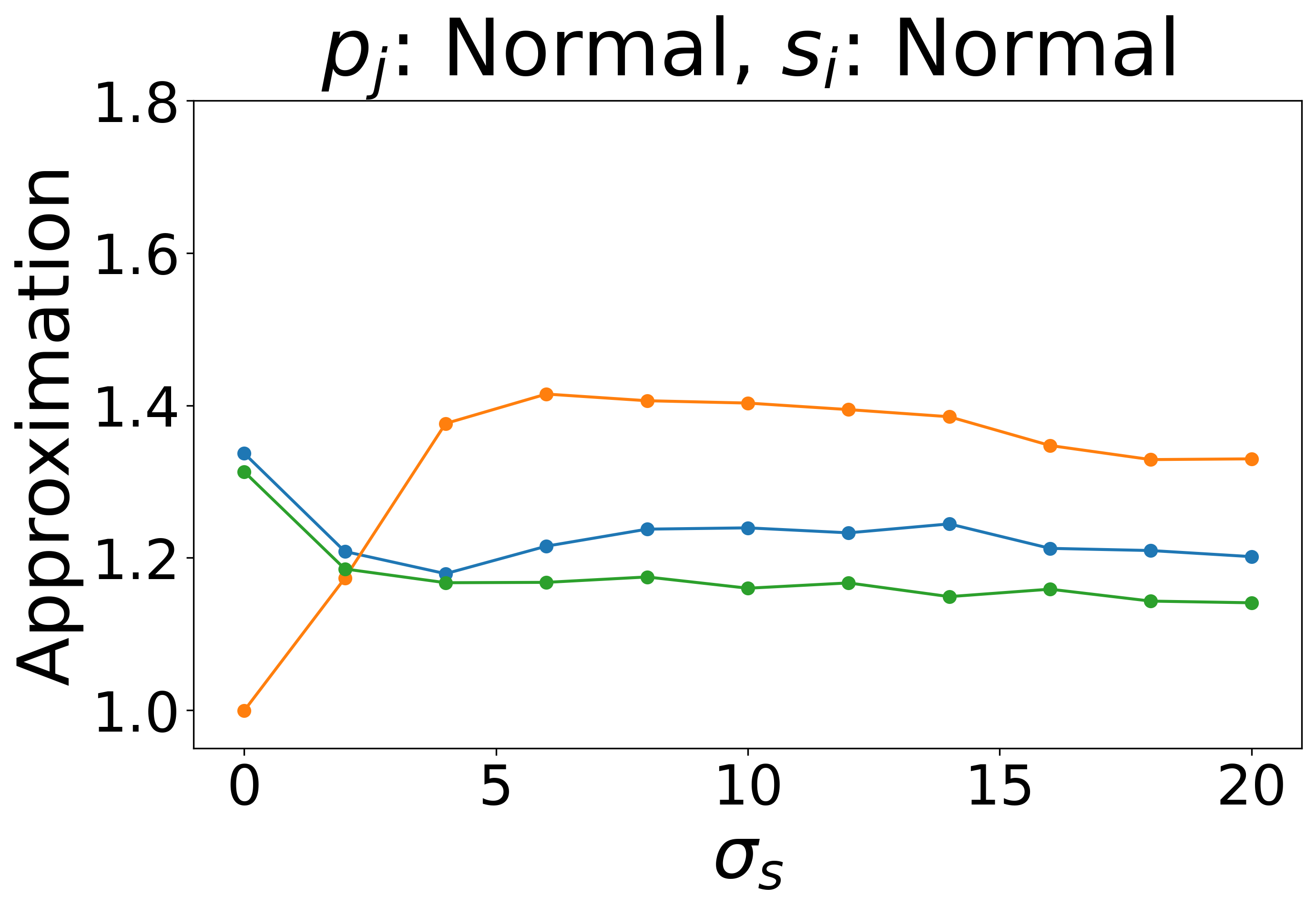}}
 	\caption{The approximation ratio achieved by  our algorithm, \ipb, and the two benchmarks as a function of   the standard deviation of the prediction error $\sigma_{err}$  for different job processing time and true speed distributions (row 1) and as a function of the   number of jobs $n$, the   number of machines $m$, the standard deviation $\sigma_p$ of the job processing time distribution, and the standard deviation $\sigma_s$ of the true speed distribution (row 2).}
 	\label{fig:exp}
 \end{figure*}

\subsection{Experiment results}

\paragraph{Experiment set 1.}
From the first row of Figure~\ref{fig:exp}, we observe that, in all four settings,  when we vary the magnitude of the prediction error, \ipb \ outperforms  \lptpar \ when the error is small and outperforms  \textsc{1-consistent}  when the error is large. Since \textsc{LPT-Partition} does not use the predictions, its performance remains constant as a function of the prediction errors. Since \textsc{1-consistent} completely trusts the predictions, it is optimal when the predictions are exactly correct but its performance deteriorates quickly as the prediction errors increase.

\ipb \ combines the advantages of \textsc{LPT-Partition} and \textsc{1-consistent}: when the predictions are relatively accurate, it is able to take advantage of the predictions and outperform \lptpar. When the predictions are increasingly inaccurate, \ipb \ has a  slower deterioration rate compared to \textsc{1-consistent}. It is noteworthy that, in some settings, \ipb \  simultaneously outperforms both benchmarks for a wide range of values of the standard deviation $\sigma_{err}$ of the prediction error distribution. When the distributions of job processing times and machine speeds are  $\N(50, 5)$ and $\N(20, 4)$ respectively, \ipb \ achieves the best performance  when $\sigma_{err} / \mu_s \geq 0.2$. When they are $\N(50, 5)$ and $\U(0, 40)$, \ipb \ outperforms both benchmarks when $\sigma_{err} / \mu_s \geq 0.4$.

 \paragraph{Experiment set 2.}
The number of 
jobs has almost no impact on the performance of any of the algorithms. 
However, the approximations achieved by the algorithms do improve as the number of machines $m$ increases, especially for $\lptpar$. The reason is that $m$ is also the number of bags, so when the number of bags increases, there is more flexibility in the scheduling stage, especially when the total processing times of the bags are balanced.

 $\ipb$ is the algorithm most sensitive to the standard deviation $\sigma_p$ of the job processing times. It has performance close to that of $\onecons$ when $\sigma_p$ is small, and similar to  $\lptpar$ when $\sigma_p$ is large. 
 The  approximation ratio of  \lptpar \ increases as  $\sigma_s$ increases, while our algorithm and the \textsc{1-consistent} partitioning algorithm are relatively insensitive to the change in $\sigma_s$. Since the \lptpar \ algorithm generates balanced bags of similar total processing times, it performs well when the machine speeds are all almost equal, but its performance then   quickly degrades  as $\sigma_s$   increases.

\paragraph{The parameter $\alpha$.}
We  also study  the impact of the parameter $\alpha$, which controls how much \ipb \ trusts the predictions, in the setting that is identical to experiment set 1.


  \begin{figure}[h!]
 	\centering
    \subfigure{\includegraphics[width=0.3\textwidth]{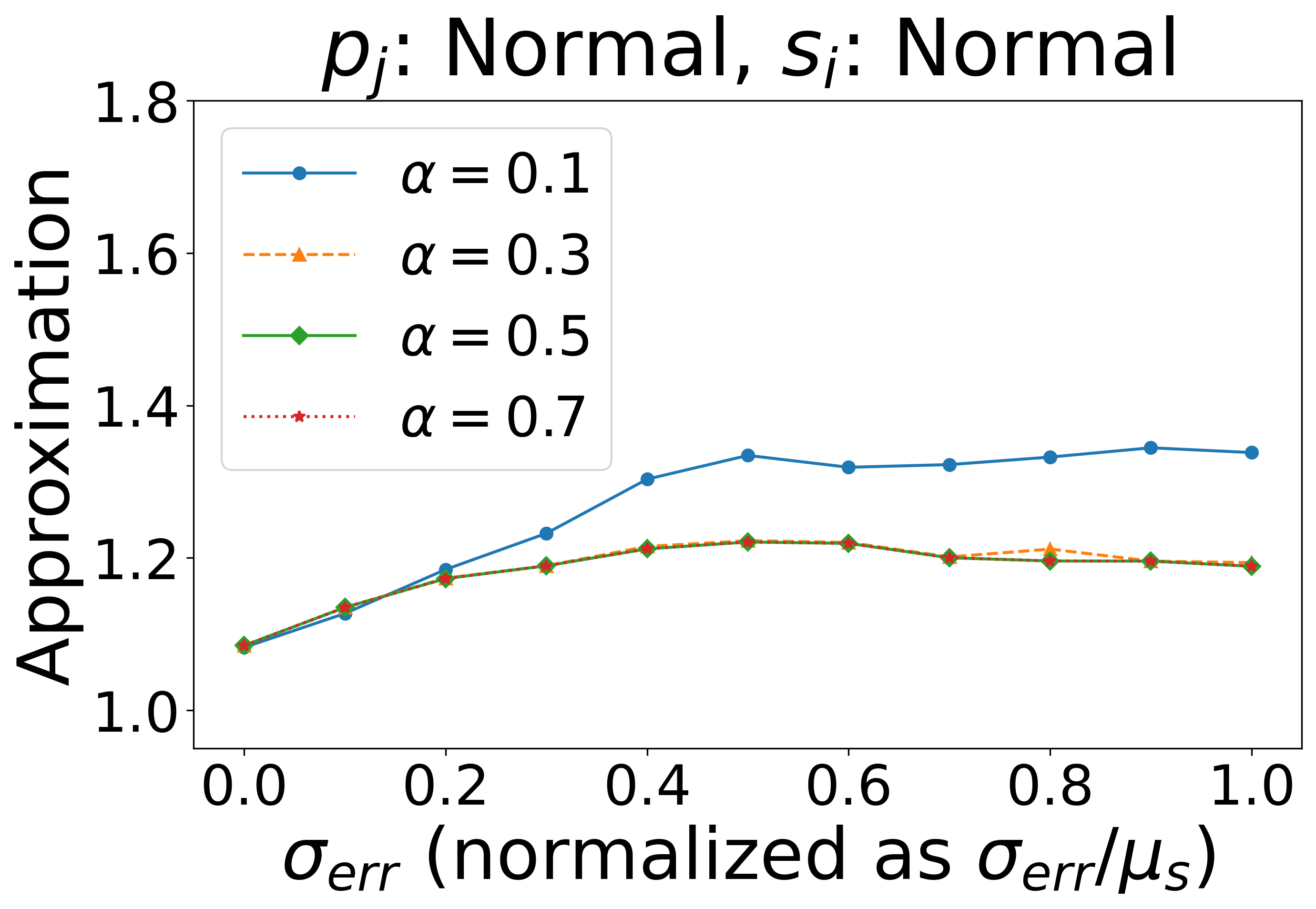}} 
 	\caption{Approximation ratio of \ipb \ for different values of the parameter $\alpha$ as a function of the standard deviation $\sigma_{err}$ of the distribution for the prediction errors. The lines for $\alpha \in \{0.3, 0.5, 0.7\}$ almost completely overlap.}
 	\label{fig:varyalpha}
 \end{figure}
 
 From \cref{fig:varyalpha}, we observe that when the prediction error is small, \ipb \ performs  well for all values of $\alpha$. In particular, the approximation ratio does not degrade as $\alpha$ increases, which, when the prediction error is small, is unlike our theoretical results.  However, when the prediction errors are large, \ipb \ performs worst with a small value of $\alpha = 0.1$. For other values of $\alpha$ such that $\alpha \geq 0.3$, the approximation ratios are almost identical. Thus, empirically, we do not observe a significant trade-off between the robustness and consistency achieved by \ipb \ as a function of $\alpha$ and its performance is identical for $\alpha = 0.3$, $\alpha = 0.5$, and $\alpha = 0.7$.

%
%
%
 \bibliographystyle{splncs04}
%
\newpage

\bibliography{references}

\begin{thebibliography}{10}
\providecommand{\url}[1]{\texttt{#1}}
\providecommand{\urlprefix}{URL }
\providecommand{\doi}[1]{https://doi.org/#1}

\bibitem{albers2001scheduling}
Albers, S., Schmidt, G.: Scheduling with unexpected machine breakdowns.
  Discrete Applied Mathematics  \textbf{110}(2-3),  85--99 (2001)

\bibitem{azar2022online}
Azar, Y., Panigrahi, D., Touitou, N.: Online graph algorithms with predictions.
  Proceedings of the Thirty-Third Annual ACM-SIAM Symposium on Discrete
  Algorithms  (2022)

\bibitem{bamas2020}
Bamas, E., Maggiori, A., Rohwedder, L., Svensson, O.: Learning augmented energy
  minimization via speed scaling. In: Larochelle, H., Ranzato, M., Hadsell, R.,
  Balcan, M.F., Lin, H. (eds.) Advances in Neural Information Processing
  Systems. vol.~33, pp. 15350--15359. Curran Associates, Inc. (2020)

\bibitem{BMS20}
Bamas, E., Maggiori, A., Svensson, O.: The primal-dual method for learning
  augmented algorithms. In: Larochelle, H., Ranzato, M., Hadsell, R., Balcan,
  M.F., Lin, H. (eds.) Advances in Neural Information Processing Systems. pp.
  20083--20094 (2020)

\bibitem{banerjee2020online}
Banerjee, S., Gkatzelis, V., Gorokh, A., Jin, B.: Online nash social welfare
  maximization with predictions. In: Proceedings of the 2022 {ACM-SIAM}
  Symposium on Discrete Algorithms, {SODA} 2022. {SIAM} (2022)

\bibitem{dutting2021secretaries}
D{\"u}tting, P., Lattanzi, S., Paes~Leme, R., Vassilvitskii, S.: Secretaries
  with advice. In: Proceedings of the 22nd ACM Conference on Economics and
  Computation. pp. 409--429 (2021)

\bibitem{EHMNSS20}
Eberle, F., Hoeksma, R., Megow, N., Nölke, L., Schewior, K., Simon, B.:
  Speed-robust scheduling - sand, bricks, and rocks. In: Integer Programming
  and Combinatorial Optimization - 22nd International Conference, IPCO 2021,
  Atlanta, GA, USA, May 19-21, 2021, Proceedings. pp. 283--296 (2021)

\bibitem{epstein2012universal}
Epstein, L., Levin, A., Marchetti-Spaccamela, A., Megow, N., Mestre, J.,
  Skutella, M., Stougie, L.: Universal sequencing on an unreliable machine.
  SIAM Journal on Computing  \textbf{41}(3),  565--586 (2012)

\bibitem{FGGP21}
Fotakis, D., Gergatsouli, E., Gouleakis, T., Patris, N.: Learning augmented
  online facility location. CoRR  \textbf{abs/2107.08277} (2021),
  \url{https://arxiv.org/abs/2107.08277}

\bibitem{HS98}
Hochbaum, D.S., Shmoys, D.B.: A polynomial approximation scheme for scheduling
  on uniform processors: Using the dual approximation approach. SIAM Journal on
  Computing  \textbf{17}(3),  539--551 (1988)

\bibitem{im2021non}
Im, S., Kumar, R., Montazer~Qaem, M., Purohit, M.: Non-clairvoyant scheduling
  with predictions. In: Proceedings of the 33rd ACM Symposium on Parallelism in
  Algorithms and Architectures. pp. 285--294 (2021)

\bibitem{im2021online}
Im, S., Kumar, R., Montazer~Qaem, M., Purohit, M.: Online knapsack with
  frequency predictions. Advances in Neural Information Processing Systems
  \textbf{34} (2021)

\bibitem{LLMV20}
Lattanzi, S., Lavastida, T., Moseley, B., Vassilvitskii, S.: Online scheduling
  via learned weights. In: Proceedings of the 2020 ACM-SIAM Symposium on
  Discrete Algorithms (SODA). pp. 1859--1877 (2020)

\bibitem{lykouris2018competitive}
Lykouris, T., Vassilvtiskii, S.: Competitive caching with machine learned
  advice. In: International Conference on Machine Learning. pp. 3296--3305.
  PMLR (2018)

\bibitem{MM20}
Mitzenmacher, M.: {Scheduling with Predictions and the Price of Misprediction}.
  In: 11th Innovations in Theoretical Computer Science Conference (ITCS 2020).
  Leibniz International Proceedings in Informatics (LIPIcs), vol.~151, pp.
  14:1--14:18 (2020)

\bibitem{mitzenmacher2020algorithms}
Mitzenmacher, M., Vassilvitskii, S.: Algorithms with predictions. arXiv
  preprint arXiv:2006.09123  (2020)

\bibitem{pinedo2012scheduling}
Pinedo, M.L.: Scheduling, vol.~29. Springer (2012)

\bibitem{KPZ18}
Purohit, M., Svitkina, Z., Kumar, R.: Improving online algorithms via ml
  predictions. In: Bengio, S., Wallach, H., Larochelle, H., Grauman, K.,
  Cesa-Bianchi, N., Garnett, R. (eds.) Advances in Neural Information
  Processing Systems. Curran Associates, Inc. (2018)

\bibitem{rustogi2013parallel}
Rustogi, K., Strusevich, V.A.: Parallel machine scheduling: Impact of adding
  extra machines. Operations Research  \textbf{61}(5),  1243--1257 (2013)

\bibitem{SZ18}
Stein, C., Zhong, M.: Scheduling when you do not know the number of machines.
  ACM Trans. Algorithms  (2019)

\end{thebibliography}

\newpage
\appendix
\onecolumn


\section{Additional Discussion of the Model}

\subsection{Learning the speeds of the machines}
\label{sec:applearning}

The problem of learning the machine speeds can be formulated as follows. Each instance of our scheduling problem is associated with a set of features (e.g., date, time, location, \ldots) and we let $F$ denote the set of all possible features. Given historical data of past instances with features $f \in F$ and machine speeds $\textbf{s} \in \mathbb{R}^m$, the learning problem consists of finding a function $h : F \rightarrow \mathbb{R}^m$ that maps features to machine speeds. This function $h$ can be learned using, e.g., linear regression or a neural network and how well $h$ is learned depends on many factors, including the amount of historical data, the number of features, and the hypothesis class for $h$.

\subsection{Choice of the prediction error}
\label{sec:apperror}

We define the prediction error to be $\eta = \max_{ i \in [m]} \frac{\max\{\hat{s}_i, s_i\}}{\min\{\hat{s}_i, s_i\}}$, i.e., the $\ell_\infty$ norm of the ratio of the predicted speed and true speed of each machine.
Other prediction errors (such as $\ell_1$) are also reasonable to consider. However, there are unavoidable tradeoffs between different errors, e.g., the $\ell_1$ norm does not distinguish between instances with  one large prediction error and many small prediction errors while $\ell_\infty$  does not distinguish between one large prediction error and many large prediction errors.  The reason behind the choice of the $\ell_\infty$ norm is that the hard instances are those with one (or more) machine speed with a large prediction error (as illustrated by the lower bounds), which can be captured with the $\ell_\infty$ norm. If there are many machine speeds with a small error (in which case $\ell_\infty$ is small and $\ell_1$ is large) then the algorithm performs very well.  However, we believe that studying the scheduling with speed predictions problem that we introduce under other error measurements, such as $\ell_1$, is an interesting direction for future work.

\section{Missing Analysis from Section~\ref{sec:lower-bound}}
\label{applowerbound}


\textbf{\cref{1-consist}.} \textit{For any $n > m$, there is no algorithm that is $1$-consistent and $\frac{n-m+1}{\lceil n/m\rceil}$-robust, even in the case of  equal-size jobs. In particular, for $m = n / 2$, there is no algorithm that is $1$-consistent and $o(n)$-robust.}

\begin{proof}
    Consider the instance with $m$ machines which have predicted speeds $\hat{s}_1 = n - m + 1$ and $\hat{s}_i = 1$ for $i \in \{2, \ldots, m\}$ and $n$ jobs which have processing time $p_j = 1$ for $j \in \{1, \ldots, n\}$.
    
    We first argue that a $1$-consistent algorithm must partition the jobs into  $m-1$ bags that each contain one job and one bag that contains $n - m + 1$ jobs. Assume the predictions are correct, i.e., $s_1 = n - m + 1$ and $s_i = 1$ for $i \in \{2, \ldots, m\}$. Consider the schedule of the bags of such a partition where the machine with speed $s_1 = n - m + 1$ receives the bag that contains $n - m + 1$ jobs and each other machine receives one bag that contains one job. The makespan of this schedule is $1$. Now, consider a partition $B_1, \ldots, B_m$ of the jobs into bags such that there are not $m-1$ bags, each of which contains one job. In this case, since $n - m + 1 > 1$, there are at most $m - 2$ bags that contain one job. We get the following lower bound on the sum of the sizes of the bags of size at least two: $\sum_{i : |B_i| \geq 2} |B_i| = n - \sum_{i : |B_i| = 1} |B_i| \geq n - m + 2.$    Consider a schedule of these bags on the machines. There are two cases, depending on whether the machine with speed $s_1 = n - m + 1$ receives all the bags of size at least $2$. If it does, then the completion time for this machine is at least $(\sum_{i : |B_i| \geq 2} |B_i|) / s_1 > 1$. Otherwise, there is another machine with speed $s_i = 1$ that receives at least one bag of size at least two. In that case, the completion time for this machine is at least $2$. Thus, the makespan of any schedule for such a partition is strictly greater than $1$. Since the optimal makespan is $1$, an algorithm that makes such a partition is not $1$-consistent. 
    
    Next, consider the case where the predictions are incorrect and the speeds are $s_i =1$ for all $i \in \{1, \ldots, m\}.$ In addition, consider a partition $B_1, \ldots, B_m$ of the jobs into $m$ bags such that  $|B_i| = \lfloor n/m\rfloor$ or $|B_i| = \lceil n/m\rceil$ for all $i \in \{1, \ldots, m\}$ and a schedule such that each machine receives one bag. The makespan of such a schedule is $\lceil n/m\rceil$. However, for a partition where there is one bag that contains $n-m+1$ jobs, any schedule of the bags on the machines has makespan that is at least $n-m+1$.
    
    We conclude that any $1$-consistent algorithm has robustness that is at least $\frac{n-m+1}{\lceil n/m\rceil}$.
\end{proof}



\noindent \textbf{\cref{LB-2-speeds}.} \textit{For any $\alpha \in (0, 1)$, if a deterministic algorithm for \ssp \ is $(1+\alpha)$-consistent,  then its robustness is at least $1 + \frac{1-\alpha}{2\alpha} - O(\frac{1}{m})$
    , even in the case where the jobs have equal processing times. In the special case where the processing times are infinitesimal, the robustness of a deterministic $(1+\alpha)$-consistent algorithm is at least $1 + \frac{(1-\alpha)^2}{4\alpha}-  O(\frac{1}{m}).$ }
\begin{proof}
    We first prove the lower bound for the case where all jobs have equal processing times.
    
    Consider the instance of $m$ machines with predicted speeds $\hat{s}_1 = m$ and $\hat{s}_i = 1$ for $i \in \{2, \ldots, m\}$ and $n = 2m-1$ jobs with processing time $p_j = 1$ for $j \in \{1, \ldots, n\}$. We then consider the schedule returned by any $(1+\alpha)$-consistent algorithm, where $\alpha \in (0,1)$. Since $\alpha < 1$, each of the $(m-1)$ machines with predicted speeds $\hat{s}_i = 1$ can be scheduled to process at most one non-empty bag that contains exactly one job. Otherwise, the total completion time on such machine is at least two and the makespan is at least two. Since the optimal makespan is $1$ when $\s = \spred$, the consistency being at least two contradicts the assumption that the algorithm is $(1+\alpha)$-consistent. 
    
    Let $x$ be the number of bags scheduled on the $m-1$ machines with predicted speeds $\hat{s}_i = 1$. As we have argued before, those bags scheduled on the machines with $\hat{s}_i = 1$ cannot contain more than one job. Then the total number of the jobs scheduled on the $m-1$ machines with $\hat{s}_i = 1$ is at most $x$, and the total number of jobs schedule on the machine with speed $\hat{s}_1 = m$ is at least $n-x$. Since the algorithm is $(1+\alpha)$-consistent, we have $n - x = 2m-1-x \leq (1+\alpha) m$, which means $x \geq 2m-1-(1+\alpha) m$.
    
    Because the algorithm partitions at least $(n-x)$ jobs into $(m - x)$ bags, by an averaging argument, the number of jobs in the largest bag is at least: 
    \begin{align*}
        \frac{2m-1-x}{m-x} &= \frac{2(m-x) + x - 1}{m-x}\\
        &= 2 + \frac{x-1}{m-x}\\ 
        &= 1 + \frac{m-1}{m-x}.
    \end{align*}   
    
    By plugging the inequality $x \geq 2m-1-(1+\alpha) m$ into the above formula, we can obtain a lower bound on the largest size of a bag: 
    \begin{align*}
        1 + \frac{m-1}{m-x} &\geq 1 + \frac{m-1}{m-(2m-1-(1+\alpha) m)}\\ 
        &=  1 + \frac{m-1}{\alpha m + 1} \\ &= \frac{(1+\alpha)m}{ \alpha m + 1}.
    \end{align*}
    
    We now consider the situation where the predictions are incorrect and the speeds are $s_i =1$ for all $i \in \{1, \ldots, m\}.$ The optimal partition $B_1, \ldots, B_m$ of the jobs into $m$ bags has size $|B_i| = 2$ for $i \in \{1, \ldots, m-1\}$ and $|B_m| =1$ with optimal makespan $2$. However, the makespan of a $(1+\alpha)$-consistent algorithm is at least $\frac{(1+\alpha)m}{\alpha m + 1}$ because the largest bag has size at least $\frac{(1+\alpha)m}{\alpha m + 1}$. Therefore, the robustness is at least
    \begin{align*}
        \frac{\frac{(1+\alpha)m}{ \alpha m + 1}}{2} & = \frac{(1+\alpha)}{2\alpha + 2/m} \\
        & = \frac{1}{2\alpha+ 2/m} + \frac{1}{2+ 2/(m\alpha)} \\
        & = \frac{1}{2\alpha}\left(\frac{1}{1+1/(m\alpha)}\right) + \frac{1}{2}\left(\frac{1}{1+1/(m\alpha)}\right) \\
        & \geq \frac{1}{2\alpha}\left(1-1/(m\alpha)\right) + \frac{1}{2}\left(1-1/(m\alpha)\right)\\
        & = \frac{1}{2\alpha} + \frac{1}{2} - \left(\frac{1}{2}+\frac{1}{2\alpha}\right)\left(\frac{1}{m\alpha}\right)\\
        & = 1 + \frac{1-\alpha}{2\alpha} - \frac{1+\alpha}{2\alpha^2}\left(\frac{1}{m}\right).
    \end{align*}
    
    We conclude that when jobs have equal processing times, for any $\alpha \in (0,1)$, if an algorithm for SSP is $(1+\alpha)$-consistent, then its robustness is at least $1 + \frac{1-\alpha}{2\alpha} - O(\frac1m)$.
    
    
Next, we prove the lower bound when the jobs are infinitesimal. Consider the instance of $m$ machines with predicted speeds $\hat{s}_1 = m$ and $\hat{s}_i = 1$ for $i \in \{2, \ldots, m\}$ and the input job set $J$ which has total processing time $\sum_{j\in J}{p_j} = 2m-1$. We then consider the schedule of any $(1+\alpha)$-consistent algorithm, where $\alpha \in (0,1)$. Since $\alpha < 1$, each machine with predicted speeds $\hat{s}_i = 1$ for $i \in \{2, \ldots, m\}$ can be scheduled to process jobs of total processing time at most $(1+\alpha)$. Otherwise, the completion time of such machine is more than $(1+\alpha)$ and the makespan is more than $(1+\alpha)$ when the predictions are correct. Since the optimal makespan is $1$ when $\s = \spred$, this leads to a consistency larger than $(1+\alpha)$ which contradicts our assumption.

Let $x$ be the number of bags scheduled on the $m-1$ machines with predicted speeds $\hat{s}_i = 1$. Since each machine with $\hat{s}_i = 1$ can be scheduled with jobs of total processing time at most $(1+\alpha)$, each bag scheduled on these machines cannot have total processing time more than $(1+\alpha)$. Thus, the total processing time of the jobs scheduled on all of the $m-1$ machines with $\hat{s}_i = 1$ is at most $x(1+\alpha)$, and the total processing time of the jobs scheduled on the machine with speed $\hat{s}_1 = m$ is at least $2m-1-x(1+\alpha)$. Since the algorithm is $(1+\alpha)$-consistent, we have $2m-1-x(1+\alpha) \leq (1+\alpha) m$, which means $x \geq \frac{2m-1-(1+\alpha)m}{1+\alpha}$.

Because the algorithm partitions the remaining jobs of total processing time at least $(2m-1-x(1+\alpha))$ into $(m - x)$ bags, by an averaging argument, the maximum total processing time of a bag in the partition is at least $\frac{2m-1-(1+\alpha)x}{m-x}.$

Next, we plug the inequality $x \geq \frac{2m-1-(1+\alpha)m}{1+\alpha}$ into the above formula, and we can obtain 
the lower bound on the maximum total processing time of a bag as follows:
\begin{align*}
    \frac{2m-1-(1+\alpha)x}{m-x} & \geq
    \frac{2m-1-(2m-1-(1+\alpha)m)}{m-
    (2m-1-(1+\alpha)m) / 1+\alpha} \\
    & = \frac{(1+\alpha)m}{m-
    (2m-1-(1+\alpha)m) / 1+\alpha} \\
    & = \frac{(1+\alpha)^2 m}{(1+\alpha)m - (2m-1-(1+\alpha)m)} \\
    & = \frac{(1+\alpha)^2 m}{2\alpha m + 1}.
\end{align*}

We now consider the situation where the predictions are incorrect and the speeds are $s_i =1$ for all $i \in \{1, \ldots, m\}.$ Then the optimal partition $B_1, \ldots, B_m$ of the jobs into $m$ bags has $p(B_i) = \frac{2m-1}{m} = 2-\frac{1}{m}$ for $i \in \{1, \ldots, m\}$ with optimal makespan $2-\frac{1}{m}$. However, the makespan of a $(1+\alpha)$-consistent algorithm is at least $\frac{(1+\alpha)^2 m}{2\alpha m + 1}$ because the maximum total processing time of a bag is at least $\frac{(1+\alpha)^2 m}{2\alpha m + 1}$. Thus, the robustness is at least $\frac{(1+\alpha)^2 m}{2\alpha m + 1}/(2-\frac{1}{m})$ and we have:
\begin{align*}
        \frac{(1+\alpha)^2 m}{(2\alpha m + 1)(2-\frac{1}{m})} & = \frac{(1+\alpha)^2 m}{4\alpha m + 2 - 2\alpha - 1/m}\\
        & = \frac{2\alpha + 1 + \alpha^2}{4\alpha + \frac{2}{m} - \frac{2\alpha}{m} - \frac{1}{m^2}} \\
        & = \left(\frac{1}{2} + \frac{1}{4\alpha} + \frac{\alpha}{4} \right)\left(\frac{1}{1+ \frac{2}{4\alpha m} -\frac{2\alpha}{4\alpha m} - \frac{1}{4 \alpha m^2}}\right)\\
        & \geq \left(\frac{1}{2} + \frac{1}{4\alpha} + \frac{\alpha}{4} \right)\left(1-\left(\frac{1}{2\alpha m}-\frac{1}{2m} - \frac{1}{4 \alpha m^2} \right)\right)\\
        & = \left(\frac{1}{2} + \frac{1}{4\alpha} + \frac{\alpha}{4} \right) - \left(\frac{\alpha^2+2\alpha+1}{4\alpha}\right)\left(\frac{2m-2\alpha m - 1}{4 \alpha m^2}\right)\\
       & = 1 + \frac{(1-\alpha)^2}{4\alpha} - \left(\frac{\alpha^2+2\alpha+1}{4\alpha}\right)\left(\frac{2m-2\alpha m - 1}{4 \alpha m^2}\right).
\end{align*}

We conclude that when jobs are infinitesimal, for any $\alpha\in (0,1)$, if an algorithm for SSP is $(1+\alpha)$-consistent, then its robustness is at least $1 + \frac{(1-\alpha)^2}{4\alpha} - O(\frac{1}{m})$.
\end{proof}


\section{Missing Analysis from Section~\ref{sec:alg-performance}}
\label{sec:appanalysis}

\textbf{\cref{lem:robustnessmainlemma}.} \textit{Let $\B = \{ B_1, \cdots, B_m \}$ be a partition of $n$ jobs with processing times $p_1, \ldots p_n$ into $m$ bags. Then $\B$ is a $\max\{2, \beta(\B) \}$-robust partition, where $\beta(\B) = \frac{\max_{B \in \B, |B| \ge 2}p(B)}{  \min_{B \in \B} p(B)}$.}

\begin{proof}

    In the proof, we use $\mathcal{S}^*$ to denote an optimal schedule of $n$ jobs with processing time $p_1, p_2, \ldots, p_n$, and let $\mathcal{S}$ denote our schedule with partition $\mathcal{B}$. We let $t^*$ and $t$ denote the makespans obtained by $\mathcal{S}^*$ and $\mathcal{S}$, respectively. Finally, we let $\C = \{  \{ j \} \in \B : p_j > \max_{B \in \B, |B| \ge 2} p(B) \}$ be the set of large singleton bags in $\B$.
    
    By applying a scaling argument, we may assume without loss of generality that the machine speeds satisfy $ \sum_{j\in [n]}{p_j} = \sum_{i \in [m]}{s_i}$. Note that we have the optimal makespan $t^* \geq 1$. 
    
    We propose a schedule $\S$ of the bags $\B$ that has makespan at most $\max \{ 2, \beta(\B) \}$. The schedule $\S$ first schedules the large singleton bags in $\C$. For each bag in $\C$, the schedule $\mathcal{S}$ schedules the bag on the same machine as $\mathcal{S}^*$ places the corresponding jobs. We let $L_i$ be the total processing time of jobs scheduled on machine $i$ after all bags in $\C$ are scheduled. We then define the ``loading capacity" $C_i$ which is the upper bound on the total processing time of all jobs on machine $i$:
    \begin{equation*}
    C_i =\begin{cases}
         \max \{2,\beta (\B) \} s_i  \quad &\text{if} \, L_i \leq \max \{2,\beta (\B) \} s_i  \\
         \max \{2,\beta(\B)\} s_i + L_i  \quad &\text{if} \, L_i > \max \{2,\beta(\B) \} s_i \\
    \end{cases}
    \end{equation*}
    
    The schedule $\S$ sorts the bags in $\B \setminus \C$ in decreasing order of the bag processing times, and iteratively assigns each bag to the least loaded machine without violating the loading capacity bound of each machine.

    First, we argue that as long as every loading capacity $C_i$ is not violated for each machine $i$ after we schedule the bags in $\B \setminus \C$ then the robustness is bounded by $\max \{2,\beta (\B)\}$. We look at the completion time of each machine $i$ and compare it to $t^*$. We first consider any machine $i$ on which we let $C_i = \max \{2, \beta (\B)\} s_i$. If the total processing time of the jobs scheduled on this machine is bounded by $\max \{2, \beta(\B) \} s_i$ after we schedule all bags, then the ratio of the completion time on machine $i$ to $t^*$ is bounded by $\max \{2, \beta (\B)\}$ because we know $t^* \geq 1$. 
    Otherwise, if on machine $i$ we let $C_i = \max \{2, \beta(\B) \} s_i + L_i$ then this means that the total processing time of the large singleton bags in $\C$ scheduled on this machine $i$ is larger than $\max \{2, \beta(\B) \} s_i$. Because $\S$ has assigned the large singleton bags according to $\mathcal{S}^*$, we can obtain that $t^* \geq L_i / s_i$. If we satisfy the loading capacity bound after scheduling all bags in $\B$, then the completion time of this machine $i$ is bounded by $\frac{\max \{2, \beta(\B) \} s_i + L_i}{s_i} = \max \{2, \beta(\B) \} + \frac{L_i}{s_i}$ while $t^* \geq L_i / s_i$. Then the ratio of the completion time on machine $i$ to $t^*$ is bounded by $\frac{\max \{2, \beta(\B) \} + L_i/s_i}{L_i / s_i} = 1 + \max \{2, \beta(\B) \} s_i / L_i < 2$ since we have $L_i > \max \{2, \beta(\B) \} s_i$.
    
    Therefore, if we consider the ratio of each machine $i$'s completion time to $t^*$, the ratio is proven to be at most $\max \{ 2,\beta(\B) \}$. This suggests that the robustness of the partition $\B$ is at most $\max \{ 2, \beta(\B) \}$ as long as we satisfy the loading capacity bound $C_i$ defined above for each machine $i$.
    
    Next, we will argue that $\S$ can schedule all bags in $\mathcal{B}$ without violating the loading capacity bound. Clearly the loading capacity bound is not violated when $\S$ schedules the bags in $\C$. Now we show that when $\S$ schedules bags in $\B \setminus \C$, it's not violated either. 

    Assume by contradiction that there is a bag which cannot be assigned by $\S$ without violating a loading capacity constraint. Consider the first such bag and let $b$ be its total processing time. We also let $b'$ be the minimum total processing time of a bag in $\B$. Let $u$ be the number of bags that have been assigned already, with $u \geq |\C|$. We will give a lower bound on the total remaining loading capacity of the $m$ machines when the scheduling algorithm $\S$ fails to place the $(u + 1)$-th bag. Assume without loss of generality that $B_1, \ldots B_u$ are the first $u$ bags to be scheduled, and we fail to schedule $B_{u+1}, \ldots, B_m$, with $p(B_{u+1}) = b$. We let $V_p = \sum_{j=1}^{u}p(B_j)$ denote the total processing time of bags which have been scheduled and $V_{\ell} = \sum_{j=u+1}^{m}p(B_j)$ denote the total processing time of bags which are not scheduled. Note that $V_p \geq ub$ because $\S$ sorts the bags in $\B \setminus \C$ in decreasing order of the bag processing times, so all assigned bags have total processing time at least $b$.
    
    When $\beta(\B) \geq 2$, by assumption $\forall i > u$ we have $p(B_i) \leq \beta(\B) b'$. In particular we have $b \leq \beta(\B) b'$. Therefore the $(m - u)$ bags that were not placed have a total processing time $V_{\ell} \geq (m - u - 1)b' + b  \geq  (m-u-1+ \beta(\B) ) \frac{b}{\beta(\B)} \geq (m-u+1)\frac{b}{\beta(\B)}$. The remaining loading capacity when $\beta(\B) \geq 2$ is at least
    \begin{align*}
        \beta(\B) (\sum_{j \in [n]} p_j) - V_p  & =  \beta(\B) (V_{\ell} + V_p) - V_p \\
        & = \beta(\B) V_{\ell} + (\beta(\B) - 1 )V_p \\
        & \geq \beta(\B) (m - u + 1 )\frac{b}{\beta(\B)}+ (\beta(\B)-1)V_p \\
        & \geq (m - u + 1)b + (\beta(\B)-1)ub \\
        & > mb.
    \end{align*}

    When $\beta(\B) < 2$, then $\frac{b}{b'} \leq \beta(\B)$ implies that  $\frac{b}{b'} \leq 2$, and we have $V_{\ell} = (m - u - 1)b' + b  \geq  (m-u-1)\frac{b}{2} + b = (m-u+1) \frac{b}{2}$. Thus, the remaining loading capacity when $ \beta(\B) < 2$ is at least
    \begin{align*}
        2 (\sum_{j \in [n]} p_j) - V_p  & = 2(V_{\ell} + V_p) - V_p \\
        & = 2 V_{\ell} - V_p \\
        & \geq 2(m - u + 1 )\frac{b}{2}+ V_p \\
        & \geq (m - u + 1)b  + ub \\
        & > mb.
    \end{align*}

    Therefore, there must exist a machine with remaining loading capacity at least $b$ which contradicts the assumption that the bag of total processing time $b$ cannot be scheduled. Thus, we have shown that all bags in $\B$ can be scheduled by $\S$ without violating the loading capacity constraint $C_i$ of each machine $i$, which guarantees a robustness ratio $\max \{ 2, \beta(\B) \}$.
\end{proof}


\noindent  \textbf{\cref{lem-LPT}.} 
\textit{For any job processing times $p_1, \ldots, p_n$ and number of machines $m$, the partition $\B_{\textsc{LPT}} = \{B_1, \cdots, B_m\}$ returned by the LPT algorithm on these jobs  satisfies $\beta(\B_{\textsc{LPT}} ) \leq 2$.}
\begin{proof}
    Without loss of generality, let $B_1 = \argmax_{B \in \B_{\textsc{LPT}}, |B| \ge 2}p(B)$ and $B_2 = \argmin_{B \in \B_{\textsc{LPT}}} p(B)$. The LPT algorithm first sorts jobs in non-increasing order according to their processing times, and it places the job in the least loaded bag. Note that since $B_1$ has at least two jobs, the smallest job in $B_1$ has processing time at most $\frac{p(B_1)}{2}$. Then by the rule of the LPT algorithm, we have $\frac{p(B_1)}{2} \leq p(B_2)$; otherwise, $p(B_2) < \frac{p(B_1)}{2}$ so when we schedule the smallest job in $B_1$, the bag $B_{2}$ is less loaded than $B_1$. There is a contradiction. 
\end{proof}


\noindent  \textbf{\cref{lem-simple}.} \textit{For any job processing times $p_1, \ldots, p_n$ and partition $\B = \{B_1, \cdots, B_m\}$ of the jobs, if  $\beta(\B)  \leq 2$, then $\min_{B \in \B} p(B) \geq \frac{\sum_{j=1}^n p_j}{2m-1}.$}
\begin{proof}
    The proof is by contradiction. Assume that $b' = \min_{B \in \B}p(B) < \frac{\sum_{j=1}^n p_j}{2m-1}$, and assume without loss of generality that $p(B_m) = b'$. Then $p(B_i) \le 2b' < 2\frac{\sum_{j=1}^n p_j}{2m-1}$ for $i \in [m-1]$. 
    The total processing time of bags in $\B$ is $\sum _{i \in [m]} p(B_i) = p(B_m) + \sum_{i \in [m-1]} p(B_i) < \frac{\sum_{j=1}^n p_j}{2m-1} + (m-1)\cdot 2 \frac{\sum_{j=1}^n p_j}{2m-1} =  \sum_{j=1}^n p_j$. Contradiction. Therefore, $b' \geq \frac{\sum_{j=1}^q p_j}{2m-1}$.
\end{proof}


\noindent  \textbf{\cref{thm-general}.} \textit{Consider the algorithm that runs \ipb \ with $\rho = 4$ in the partitioning stage and a PTAS for makespan minimization in the scheduling stage.  For any constant $\epsilon \in (0,1)$ and any $\alpha \in (0,1)$, this algorithm achieves a 
$\min\{\eta^2(1+\epsilon)(1+\alpha), (1+\epsilon)(2 + 2/\alpha)\}$ approximation for SSP where $\eta = \max_{i \in [m]} \frac{\max\{\hat{s}_i, s_i\}}{\min\{\hat{s}_i, s_i\}}$ is the prediction error.}

\begin{proof}
Recall that $\s$ is the true speed configuration and $\spred$ is the prediction. Let $opt(\p,\s)$ and $opt(\p,\spred)$ be the optimal makespan when the machine speeds are $\s$ and $\spred$ to schedule the individual jobs with processing times $\p$, respectively. Assume that if we run \ipb \ with $\rho = 4$ in the partitioning stage with the predicted speeds $\spred$, the returned partition is $\bagsipr = \{B_1,\ldots,B_m\}$. The algorithm then receives the true speeds $\s$ as the input to a PTAS to schedule $\bagsipr$.

Let $\epsilonp \in (0,1)$ be a constant such that $(1+\epsilonp)^2 = 1 + \epsilon$. By Lemma~\ref{lem:consistency}, we know $\bagsipr$ is a $(1+\alpha)(1+\epsilonp)$-consistent partition obtained from \ipb \ with $\rho = 4$ and accuracy $\epsilonp$. This means that there exists a schedule $\S_{cons}$ of the partition $\bagsipr$ on the hypothetical machines of speeds $\hat{s}_1, \ldots, \hat{s}_m$ such that the makespan of scheduling $\bagsipr$ with $\S_{cons}$ on the hypothetical machines is bounded by $(1+\alpha)(1+\epsilonp)opt(\p,\spred)$. We use $t(\S_{cons}, \spred)$ to denote the makespan of $\bagsipr$ scheduled with $\S_{cons}$ on the hypothetical machines, and we have 
\begin{equation}\label{ineq-err-1}
    t(\S_{cons}, \spred) \leq (1+\alpha)(1+\epsilonp)opt(\p,\spred).
\end{equation}

Recall our algorithm actually uses PTAS to schedule $\bagsipr$ with the input speeds $\s$ in the scheduling stage. We let $\epsilon'$ be the accuracy of the PTAS for scheduling.
Let this schedule of $\bagsipr$ be $\mathcal{S}_{PTAS}$. Now, consider the schedule $\mathcal{S}_{cons}$ when the true speed is $\s$. If we use $\mathcal{S}_{cons}$ instead of $\mathcal{S}_{PTAS}$ to schedule the partition $\bagsipr$, then the makespan $t(\mathcal{S}_{cons}, \s)$ of the partition $\bagsipr$ under machine speeds $\s$ satisfies
\begin{equation}\label{ineq-err-2}
    t(\mathcal{S}_{cons}, \s) \leq \eta t(\mathcal{S}_{cons}, \spred)
\end{equation}
because $\s$ and $\spred$ are off by a factor at most $\eta$.

Lastly, consider the makespan of $\bagsipr$ with schedule $\mathcal{S}_{PTAS}$ when the true speed is $\s$. This is the makespan achieved by our algorithm described in the theorem when the true speed is $\s$. Let this makespan be denoted by $t(\mathcal{S}_{PTAS}, \s)$. Then we have 
\begin{equation}\label{ineq-err-3}
    t(\mathcal{S}_{PTAS}, \s) \leq (1+\epsilonp)t(\mathcal{S}_{cons}, \s)
\end{equation}
because PTAS is an $(1+\epsilonp)$ approximation for scheduling the partition $\bagsipr$.

Combining the inequalities (\ref{ineq-err-2}) and (\ref{ineq-err-3}), we have 
\begin{align}
    t(\mathcal{S}_{PTAS}, \s) &\leq  (1+\epsilonp)t(\mathcal{S}_{cons}, \s)  \leq \eta(1+\epsilonp)t(\mathcal{S}_{cons}, \spred). \label{ineq-err-combined}
\end{align}

Next, we consider the relationship between $opt(\p,\s)$ and $opt(\p,\spred)$. We prove $opt({\p,\s}) \geq  \frac{1}{\eta} opt(\p,\spred)$ by contradiction. Assume for the sake of contradiction that $opt(\p,\s) < \frac{1}{\eta} opt(\p,\spred)$. When the true speed configuration is $\s$, let the optimal algorithm have partition $\B_s^{*}$ and schedule $\mathcal{S}_s^*$. When the true speed configuration is $\spred$, let the optimal algorithm have partition $\B_{\hat{s}}^{*}$ and schedule $\mathcal{S}_{\hat{s}}^*$. Then, when the true speed configuration is $\spred$, if we simply use the same partition $\B_s^{*}$ and schedule $\mathcal{S}_s^*$, we can obtain a makespan at most $\eta \cdot opt(\p,\s) < opt(\p,\spred)$ by our assumption. This contradicts the optimality of $(\B_{\hat{s}}^{\ast}, \mathcal{S}_{\hat{s}}^*)$. 
Thus,
\begin{align}\label{ineq-err-opt}
    opt(\p,\s) \geq \frac{1}{\eta} opt(\p,\spred).
\end{align}

Therefore, we have 
\begin{align*}
    \frac{t(\mathcal{S}_{PTAS}, \s)}{opt(\p,\s)} & =  \frac{t(\mathcal{S}_{PTAS}, \s)/t(\mathcal{S}_{cons},\spred)}{opt(\p,\s)/opt(\p,\spred)}\frac{t(\mathcal{S}_{cons},\spred)}{opt(\p,\spred)} & \\
    & \leq \frac{(1+\epsilonp)\eta}{1/\eta} \cdot \frac{t(\mathcal{S}_{cons},\spred)}{opt(\p,\spred)} & \text{by inequalities~(\ref{ineq-err-combined}) and (\ref{ineq-err-opt})}\\
    &  \leq \eta^2 (1+\alpha) (1+\epsilonp)^2. & \text{by inequality~(\ref{ineq-err-1})}
\end{align*}

The above proves that our algorithm is an $\eta^2 (1+\alpha) (1+\epsilonp)^2 = \eta^2 (1+\alpha) (1+\epsilon)$ approximation because we initially set $(1+\epsilonp)^2 = 1+\epsilon$. 

Finally, from \cref{lem:robustness}, we know that our algorithm with accuracy $\epsilonp$ is a $(2 + 2/\alpha)(1+\epsilonp)$ approximation because the partition returned by \ipb \ with $\rho =4$ is $(2+2/\alpha)$-robust and we pay an extra $(1+\epsilonp)$ factor in the scheduling stage. Since $\epsilonp < \epsilon$, the approximation ratio is clearly also bounded by $(2 + 2/\alpha)(1+\epsilon)$.

Thus, we conclude that the algorithm is a $\min \{ \eta^2  (1+\alpha) (1+\epsilon), (2 + 2/\alpha)(1+\epsilon)\} $ approximation. 
\end{proof}



\noindent \textbf{\cref{runtime}.} \textit{At most $O(m^2)$ iterations are needed for \ipb \ with $\rho = 4$ to terminate.}

\begin{proof}
Recall that in the $\lptrb$ subroutine, the collection of bags $\Mmax$ receives $\Bmin = \arg \min _{B \in \cup_{i \in [m]} \M_i}p(B)$ and gets balanced. For notational convenience, we let $\bmax = \max _{B \in \Mmax, |B| \geq 2} p(B) $ be the maximum total processing time of a non-singleton bag in $\Mmax$, and let $\bmin = p(\Bmin)$. Note that $\Mmin$ is the collection of bags that contains the bag $\Bmin$, which gets transferred to $\Mmax$.

We first look at a particular collection of bags $\M_j$ that is maintained by the \ipb \ algorithm. Recall that for any quantity $q$ of our interest, we use $q^{(i_k)}$ to denote the quantity $q$ in the $i_k^{th}$ iteration. We claim that if a collection of bags $\M_j^{(i_1)} = \Mmin$ in the $i_1^{th}$ iteration, then if $\M_j^{(i_2)} = \Mmax$ for the first time in the $i_2^{th}$ iteration with $i_2 > i_1$, we have $\bmax^{(i_2)} / \bmin^{(i_2)} \leq 2$. To start with, we consider the $i_1^{th}$ iteration, where $\Bmin^{(i_1)} \in \M_j^{(i_1)}$. Let $B^* = \arg \max _{B \in \M_{j}^{(i_1)}, |B| \geq 2} \{ p(B) \}$ be a non-singleton bag of the largest processing time in $\M_{j}$, and let $b^* = p(B^*)$. Note that we have omitted the $(i_1)$ notation from $B^*$ and $b^*$, because from the $i_1^{th}$ iteration to the $i_2^{th}$ iteration, the longest non-singleton bag's processing time in $\M_j$ is not changed. 
Recall that the algorithm only adds bags to and balances the collection of bags $\Mmax$. As $i_2$ is the first time this collection of bags $\M_j$ becomes $\Mmax$ after the $i_1^{th}$ iteration, it cannot receive any bag from the iterations in between $i_1$ and $i_2$. There's no way to create another non-singleton bag of longer processing time in $\M_j$ during the iterations between $i_1$ and $i_2$ as it does not receive anything. However, it is possible that we removed some bags from $\M_j$ during the iterations; but since the algorithm always removes a bag $\Bmin$ of the minimum processing time, having removed all bags of total processing time $b^*$ means that all the non-singleton bags in $\M_j$ have been removed. If there does not exist any non-singleton bag in the assignment, it won't be chosen to be $\Mmax$ by the construction of the algorithm. Thus, we conclude that there must exist at least one bag of size $b^*$ that stays in $\M_j$, and this bag is the non-singleton bag of the longest processing time across iterations $i_1, \ldots, i_2$. 

We now look inside $\M_j$ in the $i_1^{th}$ iteration. By lemma~\ref{lem-LPT}, we have $b^* \leq 2\bmin^{(i_1)}$, because each collection of bags either only contains one bag or has been balanced by $\lptrb$. Then, in the $i_2^{th}$ iteration, as $\Mmax^{(i_2)} = \M_j^{(i_2)}$ by our assumption, we have $\bmax^{(i_2)} = b^* \leq 2\bmin^{(i_1)} \leq 2\bmin^{(i_2)}$ by lemma~\ref{lem:non-decre}. Therefore, in the $i_2^{th}$ iteration, we have $\bmax^{(i_2)} \leq 2\bmin^{(i_2)}$ and the algorithm will be terminated without actually balancing the collection of bags $\M_j$ in the $i_2^{th}$ iteration. 

We have proved in the above that if a collection of bags in an assignment becomes $\Mmin$ at some point, then later if this collection of bags becomes $\Mmax$, the algorithm will terminate. In other words, once a collection becomes $\Mmin$, the algorithm won't be able to add any bag to the collection without terminating itself. Because there are only $m$ bags, each collection of bags in an assignment can be $\Mmin$ for at most $m$ times before it becomes empty (and it stays empty ever after) or becomes $\Mmax$ to receive some bag, which leads to the termination of the algorithm. Thus, a collection can only be $\Mmin$ for at most $m$ times throughout the algorithm. Finally, since there are $m$ collections of bags in total, and each collection of bags can be $\Mmin$ for at most $m$ times, we have at most $m^2$ iterations.
\end{proof}

\section{Additional Details of the Experiment Setups}
\label{sec:appexperiments}



\paragraph{Data generation.} In Section~\ref{sec:experiments}, we have specified how we sample the job processing times $p_j$, the machine speeds $s_i$ and the prediction errors $err(i) = \hat{s}_i - s_i$, which are drawn i.i.d. either from the Normal distribution or the Uniform distribution. There are some occurrences, although an extremely small number of them, where  the processing times or speeds sampled are negative. If $p_j$, $s_i$, or $\hat{s}_i$ is negative in the sample, we replace its value by $10^{-3}$. 

\paragraph{Computing a solution to the makespan minimization problem.} We have mentioned in Section~\ref{sec:experiments} that we give advantages to the benchmarks by using an IP to compute the schedule in the scheduling stage. Any schedule of bags (or jobs) on the machines can be expressed as a matrix of indicator variables $x_{ji}$, where $x_{ji} = 1$ if bag (or job) $j$ is assigned to machine $i$ and 0 otherwise. A feasible schedule must satisfy $\sum _{i=1}^m x_{ji} = 1, \ \forall j$. The IP minimizes the makespan over all feasible schedules and returns the optimal schedule within the integrality gap, which is set to be  $1\%$. 

The time limit of the IP solver (Gurobi) is set to be 120 seconds. The time-out frequency of Gurobi is extremely low, and the maximum final integrality gap is never more than $4\%$; therefore, we adopted the solution returned by the IP even if the solver timed out. Each of the plots in \cref{fig:exp} requires the construction of $1100$ instances, on which we tested our algorithm. In each of the experiments we ran, excluding the one that varies the number of machines $m$, we experienced time-out for no more than $10$ times on $1100$ instances. 


To ensure that our algorithm has a polynomial running time,  \ipb \ uses the LPT algorithm to compute a schedule instead of the IP. Given a set of bags (or jobs) and $m$ machine speeds as inputs, the LPT scheduling algorithm first sorts the total processing times of the bags (or jobs) in decreasing order and then assigns them to the machines iteratively. In each iteration, the LPT scheduling algorithm iterates over all machines and computes the makespan if the current bag (or job) $j$ is assigned to machine $i = 1,2, \ldots, m$. Then the LPT scheduling algorithm assigns the bag (or job) $j$ to the machine $i$ that minimizes the makespan. Once $j$ is assigned to machine $i$, it will be on that machine permanently and the algorithm moves on to the next iteration. 
\end{document}